\newtheorem{theorem}{Theorem}
\newtheorem{lemma}{Lemma}
\newtheoremstyle{mythm}{3pt}{3pt}{}{16pt}{\bfseries}{:}{.5em}{}
\theoremstyle{mythm}
\newtheorem{example}{Example}
\newtheorem{definition}{Definition}
\newtheorem{remark}{Remark}
\renewcommand{\leq}{\leqslant}
\renewcommand{\geq}{\geqslant}
\begin{document}

	\title{Enhanced Low-Redundancy Restricted Array for Direction of Arrival Estimation}

	\author{Shidong Zhang}
	\affil{Southwest Jiaotong University, Chengdu, 611756, China}
	
	\author{Zhengchun Zhou}
	\member{Member, IEEE}
	\affil{Southwest Jiaotong University, Chengdu, 611756, China}
	
	\author{Guolong Cui}
	\member{Senior Member, IEEE}
	\affil{University of Electronic Science and Technology of China, Chengdu, 610054, China}
	
	\author{Xiaohu Tang}
	\member{Senior Member, IEEE}
	\affil{Southwest Jiaotong University, Chengdu, 611756, China}
	
	\author{Pingzhi Fan}
	\member{Fellow, IEEE}
	\affil{Southwest Jiaotong University, Chengdu, 611756, China}
	
	
	
	
	\authoraddress{Shidong Zhang is with
		the School of Mathematics, Southwest Jiaotong University, Chengdu, 611756, China (e-mail: 838412679@qq.com). Zhengchun Zhou, Xiaohu Tang, and Pingzhi Fan are the School of Information
		Science and Technology, Southwest Jiaotong University, Chengdu, 611756, China (e-mail: zzc@swjtu.edu.cn, xhutang@swjtu.edu.cn, pzfan@swjtu.edu.cn). Guolong Cui is the School of Information and Communication Engineering, University of Electronic Science and Technology of China, Chengdu, 610054, China (email: cuiguolong@uestc.edu.cn). \textit{THIS PAPER IS SUBMITTED TO IEEE TRANSACTIONS ON AEROSPACE AND ELECTRONIC SYSTEMS FOR POSSIBLE PUBLICATION.}}

	\markboth{AUTHOR ET AL.}{SHORT ARTICLE TITLE}
	\maketitle

	\begin{abstract}
		Sensor arrays play a significant role in direction of arrival (DOA) estimation. Specifically, arrays with low redundancy and reduced mutual coupling are desirable. In this paper, we investigate a sensor array configuration that has a restricted sensor spacing and propose a closed-form expression. We also propose several classes of low redundancy (LR) arrays. Interestingly, compared with super nested arrays (SNA) and maximum inter-element spacing constraint (MISC) arrays, one of the proposed arrays has a significant reduction in both redundancy ratio and mutual coupling. Numerical simulations are also conducted to verify the superiority of the proposed array over the known sparse arrays in terms of weight functions, mutual coupling matrices as well and DOA estimation performance.

	\end{abstract}
	
	\begin{IEEEkeywords}
		Sparse arrays, uniform degrees of freedom (uDOFs), mutual coupling, low-redundancy arrays, direction-of-arrival (DOA) estimation.
	\end{IEEEkeywords}
	
	\section{Introduction}

	{Array signal processing plays a important role in many applications such as radar, sonar, navigation, wireless communications, electronic surveillance and radio astronomy \cite{1984Haykin,2001Sklnik,2016Balanis,1996Krim,2006Peng}.} Key benefits of using sensor arrays include spatial selectivity and the capability to mitigate interference and improve signal quality. {The traditional \textit{uniform linear array} (ULAs) is the most commonly used sensor array, in which the distance between adjacent sensors is less than half of wavelength to avoid spatial aliasing.} However, an $N$-sensor ULA can resolve only up to $N-1$ sources by the traditional subspace-based methods \cite{1989Roy,1986Schmidt}. Besides, electromagnetic characteristics cause {\it mutual coupling} between sensors, which becomes larger as the distance between sensors decreases. {Hence, there is a serious mutual coupling effect in ULAs, which results in an adverse effect on the performance of DOA estimation.}
	
	%
	
	

	To overcome these problems caused by traditional ULAs, {\it nonuniform linear arrays} (NLAs) (also referred as {\it sparse arrays}) were introduced \cite{2001Sklnik}.
	An $N$-sensor NLA is often represented by its sensor locations $n_i\lambda/2$, where $n_i$ belongs to an integer set
	\[{\mathbb S}=\{s_1,s_2,\ldots,s_N\},\]
	and $\lambda$ is the wavelength.
	%
	The {\it difference co-array} (DCA) of the NLA is defined as the array which has sensors located at  positions given by the set ${\cal D}=\{s_i-s_j:i,j=1,2,\ldots,N\}$, and the cardinality of {\it maximal} central ULA segment in DCA is called {\it uniform degrees of freedom} (uDOF), which can be used to detect uncorrelated sources in signal processing techniques, like  co-array MUSIC. Generally, if the uDOF is equal to $U$, then the number of uncorrelated sources that can be detected by using co-array MUSIC is up to $(U-1)/2$ \cite{2010Pal}, \cite{2011Pal}.
	%
	Note that there are totally $N^2$ elements in ${\cal D}$, although some locations maybe repeated. {There is a possibility that we can get ${\cal O}(N^2)$ uDOFs using only $N$ physical sensors, and the result is a significant increase in the number of detectable signal sources} \cite{1990Hoctor}. In fact, many works have been done for this purpose \cite{2010Pal}, \cite{2011Vaid}. {Moreover, the reduction of smaller sensor spacing in sparse array will also reduce the mutual coupling effect.} \cite{2017BouDaher}. For simplicity, we will use ${\mathbb S}$ and ${\cal D}$ to denote a sparse array and its corresponding DCA respectively, in the rest of this paper.
	
	A sparse array is called {\it restricted} (or {\it hole-free}) if its DCA is a ULA, i.e., ${\cal D}=[-L,L]$, where $L$ denotes the maximal inter-sensor spacing (called {\it the aperture} of the array). For example, $\{0,1,4,6\}$ is a 4-sensor restricted array since its DCA ${\cal D}=[-6,6]$. In contrast, a sparse array is called {\it general} if its DCA is not a ULA.  The array $\{0,1,4\}$ is such an example since the spacing $2$ is missing in its DCA. In this paper, we consider only the \textit{restricted problem}. Unless stated otherwise, the term ``array" will always refer to a restricted array. About the research on general arrays, interested readers can refer to \cite{2011Pal,1968Moffet,2022Peng,2015Qin,2019Raza,2022ShiJ,2021Shiwanlu,2022Shi,2013Zhang,2019Zhengw,2020Zheng,1977Bloom,1986Ver} for details.

	Judging an array is good or bad, one criterion is the {\it redundancy ratio} $R$ of ${\mathbb S}$ which is quantitatively defined as the number of pairs of sensors divided by $L$: $R=N(N-1)/(2L)$ \cite{1968Moffet}. Obviously $R\geq1$. Bracewell \cite{1966Bracewell} proved that arrays with $R=1$ (``zero-redundancy"  arrays) only exist for $N\leq 4$. In 1956, Leech \cite{1956Leech} demonstrated that for {\it minimum redundancy arrays} (MRA) (achieving the lowest $R$): $1.217\leq R_{opt}\leq 1.674$ for $N\rightarrow\infty$, and provided some optimal solutions for $N\leq 11$, which was expanded to $N\leq 26$ by an exhaustive computer search in \cite{2020Schwartau}.
	For larger value of $N$, the optimal design of such arrays is not easy, and in most cases, they are restricted to complicated algorithms for sensor placement \cite{1968Moffet}, \cite{1980Ishiguro}. Due to the difficulty of MRAs, several early attempts have been made to construct large {\it low-redundancy linear arrays} (LRAs) (approaching Leech's lower bound).

	Take into consideration the practical application, another criterion of an array is the weight function $\omega(d)$ which is defined as the number of sensor pairs that lead to coarray index $d\in{\cal D}$: $\omega(d)=|\{(n_i,n_j)\in{\mathbb S}^2:n_i-n_j=d\}|$. The most common operation is to decrease the first three weight functions $\omega(1),\omega(2)$ and $\omega(3)$, which cause the most severe mutual coupling between sensors.
	

	According to the two criteria, how to design a sensor array with low-redundancy and reduced mutual coupling becomes a fundamental problem in array signal processing. In addition, for practical applications, the sensor array should be described for any $N$ using {\it simple rules} or {\it closed-form expressions}. Another common representation of an sensor array is described in terms of its inter-sensor spacings, i.e., ${\mathbb D}=\{d_1,d_2,\ldots,d_{N-1}\}$ with $d_i=s_{i+1}-s_i$ for any $1\leq i\leq N-1$. For simplicity, ${\mathbb S}$ and ${\mathbb D}$ will be referred to be the location array and the spacing array respectively in this paper.

	Design of LRAs is equivalent to that of finding ``restricted difference bases" in number theory, and a lot of rich conclusions are obtained in literature \cite{1956Leech,1957Hase,1963Wichmann,1967Wild,1971Miller}. To have a better comparison, we give a summary of the best array structures for known LRAs in Table \ref{table:redundancy thickapprox2}.

	Seen from  Table \ref{table:redundancy thickapprox2}, all spacing arrays of known good LRAs listed coincide with a common pattern:
	\begin{align}\label{general structure}
		{\mathbb D}=\{a_1,a_2,\ldots,a_{s_1},c^\ell,b_1,b_2,\ldots,b_{s_2}\}
	\end{align}
	with the restriction
	\begin{align}\label{Dong-condition-1}
		s_1+s_2=c-1.
	\end{align}
	Here $c^\ell$ denotes the largest spacing $c$ (called {\it base} of the pattern) repeats $\ell$ times \cite{2002Camps}. This pattern was first found by Ishiguro \cite{1980Ishiguro}, and summarized by Dong {\it et al} \cite{2010Dong}. It is known that the $(4r+3)$-Type-93 array in \cite{1993Linebarger} achieves the lowest redundancy  ratio $R\leq 1.5$ (or the highest uDOFs) for fixed sensor number $N$ up to now, but it contains dense ULAs which result in severe mutual coupling; Although the $(4r)$-Type array-93 ($r$ is a natural number) in  the same paper has a smaller uDOFs than the $(4r+3)$-Type-93 array, its mutual coupling can be significantly reduced since $\omega(1)=2$, which may be more practical in application. Inspired by the  nested array (NA) proposed in \cite{2010Pal}, many NA-like arrays with $R\approx2$ were successively proposed, such as
	the improved nested array (INA) \cite{2016Yang}, the super nested array (SNA) \cite{2016Liu-1}, \cite{2016Liu-2}, the augmented nested array (ANA) \cite{2017Liu}, and the generalised extended nested array with multiple subarrays (GENAMS) \cite{2023Wandale}. Especially the SNA can achieve $\omega(1)=1$. Recently an array configuration with $R\approx 2$ based on the maximum inter-element spacing constraint (MISC) criterion was introduced in
	\cite{2019Zheng}, which can provide a higher uDOFs than all the NA-like arrays and still maintain $\omega(1)=1$.
	
	Motivated by the works of \cite{2010Dong,1957Hase,2019Zheng}, we search for some array designs which can result to arrays with low redundancy ratio as well as low mutual coupling.
	We still adopt the common pattern given in (\ref{general structure}), but with a different restriction as
	\begin{align}\label{Our-condition}
		s_1+s_2=c.
	\end{align}
	By careful study on this restriction, 3 types of LRAs with $R\approx 1.5$ are obtained for any $N\geq18$, dependent on how the base of the array reduces mod 4. The detailed representations of our spacing arrays are listed in Table \ref{new repre}.
	%
	%
	Compared with known arrays with the same type, the proposed $(4r+1)$-Type arrays and the $(4r)$-Type arrays all achieve the lowest mutual coupling by decreasing the corresponding $\omega(1), \omega(2)$ and $\omega(3)$, and their uDOFs are at most 4 less for any $N\geq18$.
	Especially, our  $(4r)$-Type array can decrease $\omega(1)$ to the lowest number 1, thus it is the first class of sensor arrays achieving  $R< 1.5$ and $\omega(1)=1$ for any $N\geq18$.
	
	The contributions are summarized as below:
	\begin{itemize}
		\item We generalize the common patterns of existing arrays and provide different restriction $s_1+s_2=c$.
		\item Under the provided restrictions, We obtain 2 classes of $(4r+3)$-type arrays, 5 classes of $(4r+1)$-type arrays, and 3 class of $(4r)$-type arrays for any $N\geq18$.
		\item We analyze the properties of the proposed arrays, including uDOFs and weight functions, and make comparisons existing arrays.
	\end{itemize}
	Finally, simulations validated the effectiveness and superiority of the proposed arrays in DOA estimation.

	\begin{table*}[htbp]
		\centering
		\tiny{
			\caption{A SUMMARY OF LOW-REDUNDANCY LINEAR ARRAYS\label{table:redundancy thickapprox2}}
			\newcommand{\tabincell}[2]{\begin{tabular}{@{}#1@{}}#2\end{tabular}}
				\begin{threeparttable}          
					\begin{tabular}{|c|c|c|c|c|c|}
						\hline
						Refer.  & Spacing Array Structure & uDOFs  & Weight Function  \cr\hline
						\tabincell{c}{Wichmann\\ (1962, \cite{1963Wichmann})} & \tabincell{c}{$\{1^{r},(2r+2)^{r+1},(4r+3)^{2r+1},(2r+1)^{r},(r+1),1^{r}\}$\\$N=6r+4,$$r\geq1$} & $\frac{2N^2+2N-1}{3}$  & $\begin{cases}w(1)=2r\\w(2)=2r-2\\w(3)=2r-4\end{cases}$ \cr\hline
						\tabincell{c}{Bracewell\\(1966, \cite{1966Bracewell})}  & \tabincell{c}{$\begin{cases}
								\{1^{m-1},(m+1),(m)^{m-1}\}, & \mbox{if }N\overset{2}{\equiv}0 \\
								\{1^m,(m+2),(m+1)^{m-1}\}, & \mbox{if }N\overset{2}{\equiv}1
							\end{cases}$} & $\begin{cases}\frac{N^2}{2}+N+1, & \mbox{if }N\overset{2}{\equiv}0\\\frac{N^2}{N}+N+\frac{3}{2}, & \mbox{if }N\overset{2}{\equiv}1\end{cases}$ & $\begin{cases}w(1) & =
							\left\lfloor\frac{N-1}{2}\right\rfloor\\
							w(2) & =
							\left\lfloor\frac{N-1}{2}\right\rfloor-1\\
							w(3) & =
							\left\lfloor\frac{N-1}{2}\right\rfloor-2\\\end{cases}$ \cr\hline
						\tabincell{c}{$(4r+3)$-Type-93\\(1993, \cite{1993Linebarger})} & \tabincell{c}{$\{1^{r},(2r+2)^{r+1},(4r+3)^{2r+k-3},(2r+1)^{r},(r+1),1^{r}\}$\\$N=6r+k,r\geq1,1\leq k\leq6$} & $\frac{2N^2+2N-2k^2+16k-33}{3}$  & $\begin{cases}w(1)=2r\\w(2)=2r-2\\w(3)=2r-4\end{cases}$ \cr\hline
						\tabincell{c}{$(4r+1)$-Type-93\\(1993, \cite{1993Linebarger})} & \tabincell{c}{$\{1^{r},(2r+1)^{r},(4r+1)^{2r+k-1},(2r)^{r-1},r,1^{r}\}$\\$N=6r+k,r\geq2,-1\leq k\leq4$}  & $\frac{2N^2-2k^2+6k-3}{3}$  & $\begin{cases}w(1)=2r\\w(2)=2r-2\\w(3)=2r-4\end{cases}$ \cr\hline
						\tabincell{c}{$(4r)$-Type-93\\ (1993, \cite{1993Linebarger})} & \tabincell{c}{$\{1,2^{r},(2r+1)^{r-1},(4r)^{2r+k},(2r-1)^{r-1},1,2^{r-1}\}$\\$N=6r+k,r\geq2,-3\leq k\leq2$}  & $\frac{2N^2-2k^2+3}{3}$  & $\begin{cases}w(1)=2\\w(2)=2r-1\\w(3)=2\end{cases}$ \cr\hline
						\tabincell{c}{$(4r+3)$-Type-10\\ (2010, \cite{2010Dong})} & \tabincell{c}{$\{1^{r},(2r+2)^{r},(4r+3)^{2r+k-3},(2r+1)^{r+1},(r+1),1^{r}\}$\\$N=6r+k,r\geq1,1\leq k\leq6$} & $\frac{2N^2+2N-2k^2+16k-39}{3}$  & $\begin{cases}w(1)=2r\\w(2)=2r-2\\w(3)=2r-4\end{cases}$ \cr\hline
						\tabincell{c}{$(4r+1)$-Type1-10\\ (2010, \cite{2010Dong})} & \tabincell{c}{$\{1^{r},(2r),1,(2r+1)^{r-1},(4r+1)^{2r+k-1},(2r)^{r-1},(r+1),1^{r-1}\}$\\$N=6r+k,r\geq2,-1\leq k\leq4$}  & $\frac{2N^2-2k^2+6k-3}{3}$  & $\begin{cases}w(1)=2r-1\\w(2)=2r-3\\w(3)=2r-5\end{cases}$ \cr\hline
						\tabincell{c}{$(4r+1)$-Type2-10\\ (2010, \cite{2010Dong})} & \tabincell{c}{$\{1^{r},(2r+1)^{r-1},(4r+1)^{2r+k-1},(2r)^{r},r,1^{r}\}$\\$N=6r+k,r\geq2,-1\leq k\leq4$}  & $\frac{2N^2-2k^2+6k-9}{3}$  & $\begin{cases}w(1)=2r\\w(2)=2r-2\\w(3)=2r-4\end{cases}$ \cr\hline
						\tabincell{c}{$(4r)$-Type-10\\ (2010, \cite{2010Dong})} & \tabincell{c}{$\{1^{2},2^{r-2},1,(2r-1)^{r-1},(4r)^{2r+k},(2r+1)^{r-2},2,(2r+1),2^{r-1}\}$\\$N=6r+k,r\geq3,-3\leq k\leq2$} & $\frac{2N^2-2k^2-3}{3}$  & $\begin{cases}w(1)=3\\w(2)=2r-1\\w(3)=2\end{cases}$ \cr\hline
						\tabincell{c}{NA\\(2010, \cite{2010Pal})}  & \tabincell{c}{$\begin{cases}
								\{1^m,(m+1)^{m-1}\}, & \mbox{if }N\overset{2}{\equiv}0 \\
								\{1^m,(m+1)^{m}\}, & \mbox{if }N\overset{2}{\equiv}1
							\end{cases}$} & $\begin{cases}\frac{N^{2}}{2}+N-1, & \mbox{if }N\overset{2}{\equiv}0\\\frac{N^{2}}{2}+N-\frac{1}{2}, & \mbox{if }N\overset{2}{\equiv}1\end{cases}$ & $\begin{cases}w(1) & =
							\left\lfloor\frac{N}{2}\right\rfloor\\
							w(2) & =
							\left\lfloor\frac{N-1}{2}\right\rfloor-1\\
							w(3) & =
							\left\lfloor\frac{N-1}{2}\right\rfloor-2\\\end{cases}$ \cr\hline
						\tabincell{c}{INA\\(2016, \cite{2016Yang})}  & \tabincell{c}{$\begin{cases}
								\{1^{m-1},(m+1)^{m-1},m\}, & \mbox{if }N\overset{2}{\equiv}0 \\
								\{1^{m-1},(m+1)^{m},m\}, & \mbox{if }N\overset{2}{\equiv}1
							\end{cases}$} & $\begin{cases}\frac{N^{2}}{2}+2N-3,& \mbox{if $N\overset{2}{\equiv}0$}\\\frac{N^{2}}{2}+2N-\frac{7}{2},& \mbox{if $N\overset{2}{\equiv}1$}\end{cases}$  & $\begin{cases}w(1) & =
							\left\lfloor\frac{N}{2}\right\rfloor-1\\
							w(2) & =
							\left\lfloor\frac{N}{2}\right\rfloor-2\\
							w(3) & =
							\left\lfloor\frac{N}{2}\right\rfloor-3\\\end{cases}$ \cr\hline
						\tabincell{c}{SNA\\ (2016, \cite{2016Liu-2})}  & \tabincell{c}{$\begin{cases}
								\{2^{\frac{m-1}{2}},3,2^{\frac{m-1}{2}},(m+1)^{m-3},m,1\}, & \mbox{if }N\overset{2}{\equiv}0,\frac{N}{2}\mbox{is odd} \\
								\{2^{\frac{m-1}{2}},3,2^{\frac{m-1}{2}},(m+1)^{m-2},m,1\}, & \mbox{if }N\overset{2}{\equiv}1,\frac{N-1}{2}\mbox{is odd} \\
							\end{cases}$} & $\begin{cases}\frac{N^{2}}{2}+N-1, & \mbox{if }N\overset{2}{\equiv}0\\\frac{N^{2}}{2}+N-\frac{1}{2}, & \mbox{if }N\overset{2}{\equiv}1\end{cases}$ & $\begin{cases}w(1) & =1\\
							w(2) & =\left\lfloor\frac{N}{2}\right\rfloor-1\\
							w(3) & =1\end{cases}$ \cr\hline
						\tabincell{c}{ANAI-2\\ (2017, \cite{2017Liu})}  & \tabincell{c}{$\begin{cases}
								\{1,2^{\frac{m}{2}-1},1,(m+1)^{m-1},2^{\frac{m}{2}-1}\}, & \mbox{if $N\overset{4}{\equiv}0,m=\frac{N}{2}$} \\
								\{1,2^{\frac{m}{2}-1},1,(m+1)^{m},2^{\frac{m}{2}-1}\}, & \mbox{if $N\overset{4}{\equiv}1,m=\frac{N-1}{2}$}\\
								\{1,2^{\frac{m}{2}},(m+1)^{m},1,2^{\frac{m}{2}-1}\}, & \mbox{if $N\overset{4}{\equiv}2,m=\frac{N-2}{2}$}\\
								\{1,2^{\frac{m}{2}},(m+1)^{m+1},1,2^{\frac{m}{2}-1}\}, & \mbox{if $N\overset{4}{\equiv}3,m=\frac{N-3}{2}$}\\
							\end{cases}$} & $\begin{cases}\frac{N^{2}}{2}+2N-5,& \mbox{if $N\overset{2}{\equiv}0$}\\\frac{N^{2}}{2}+2N-\frac{11}{2},& \mbox{if $N\overset{2}{\equiv}1$}\end{cases}$  & $\begin{cases}w(1) & =2\\
							w(2) & =
							\begin{cases}
								\left\lfloor\frac{N}{2}\right\rfloor-2 & \mbox{if $N\overset{4}{\equiv}0,1$}\\
								\left\lfloor\frac{N-2}{2}\right\rfloor-1 & \mbox{if $N\overset{4}{\equiv}2,3$}\\
							\end{cases}\\
							w(3) & =2\end{cases}$ \cr\hline
						
						\tabincell{c}{MISC\\ (2019, \cite{2019Zheng})} & \tabincell{c}{$\{1,P-3,P^{N-P},2^\frac{P-4}{2},3,2^{\frac{P-4}{2}}\},$ $P=2\left\lfloor\frac{N}{4}\right\rfloor+2$} & $\begin{cases}\frac{N^{2}}{2}+3N-8.5, & \mbox{if $N\overset{4}{\equiv}1$}\\\frac{N^{2}}{2}+3N-9,& \mbox{if $N\overset{2}{\equiv}0$}\\\frac{N^{2}}{2}+3N-10.5,& \mbox{if $N\overset{4}{\equiv}3$}\end{cases}$  & $\begin{cases}w(1)=1\\w(2)=2\left\lfloor\frac{N}{4}\right\rfloor-2\\w(3)=\begin{cases}1,&\mbox{if}N\neq9\\2,& \mbox{if}N=9\end{cases}\end{cases}$ \cr\hline
					\end{tabular}
					\begin{tablenotes}    
						\footnotesize               
						\item[1] The symbol $N\overset{N_{1}}{\equiv}N_{2}$ represents $N\equiv N_{2}\pmod {N_{1}}$.
					\end{tablenotes}
			\end{threeparttable}}          
		\end{table*}
		
		\begin{table*}[htbp]
			\centering
			\caption{A SUMMARY OF THE PROPOSED LOW-REDUNDANCY LINEAR ARRAYS}\label{new repre}
			\newcommand{\tabincell}[2]{\begin{tabular}{@{}#1@{}}#2\end{tabular}}
			\scalebox{0.75}{
				\begin{tabular}{|c|c|c|c|c|c|}
					\hline
					Refer.  & Spacing Array Structure & uDOFs  & Weight Function  \cr\hline
					\tabincell{c}{$(4r+3)$-Type 1} & \tabincell{c}{$\{r+1,1^{r},(2r+2)^{r+1},(4r+3)^{2r+k-4},(2r+1)^{r},r+1,1^{r}\}$\\$N=6r+k,r\geq1,2\leq k\leq7$} & $\frac{2N^2-N-2k^2+19k-45}{3}$  & $\begin{cases}w(1)=2r\\w(2)=2r-2\\w(3)=2r-4\end{cases}$ \cr\hline
					\tabincell{c}{$(4r+3)$-Type 2} & \tabincell{c}{$\{1^{r},(2r+2)^{r+1},(4r+3)^{2r+k-4},(2r+1)^{r},r+1,1^{r},r+1\}$\\$N=6r+k,r\geq1,2\leq k\leq7$} & $\frac{2N^2-N-2k^2+19k-45}{3}$  & $\begin{cases}w(1)=2r\\w(2)=2r-2\\w(3)=2r-4\end{cases}$ \cr\hline
					\tabincell{c}{$(4r+1)$-Type 1} & \tabincell{c}{$\{1^{r-1},(2r+1)^{r+1},(4r+1)^{2r+k-2},(2r)^{r},r-1,2,1^{r-1}\}$\\$N=6r+k,r\geq2,-1\leq k\leq4$} & $\frac{2N^2-2k^2+6k-9}{3}$  & $\begin{cases}w(1)=2r-2\\w(2)=2r-4\\w(3)=2r-6\end{cases}$ \cr\hline
					\tabincell{c}{$(4r+1)$-Type 2} & \tabincell{c}{$\{1^{r-1},r-1,r+2,(2r+1)^{r},(4r+1)^{2r+k-2},(2r)^{r},r+1,1^{r-1}\}$\\$N=6r+k,r\geq3,-1\leq k\leq4$} & $\frac{2N^2-2k^2+6k-9}{3}$  & $\begin{cases}w(1)=2r-2\\w(2)=2r-4\\w(3)=2r-6\end{cases}$ \cr\hline
					\tabincell{c}{$(4r+1)$-Type 3} & \tabincell{c}{$\{1^{r-1},r,r+1,(2r+1)^{r},(4r+1)^{2r+k-2},(2r)^{r},r+1,1^{r-1}\}$\\$N=6r+k,r\geq2,-1\leq k\leq4$} & $\frac{2N^2-2k^2+6k-9}{3}$  & $\begin{cases}w(1)=2r-2\\w(2)=2r-4\\w(3)=2r-6\end{cases}$ \cr\hline
					\tabincell{c}{$(4r+1)$-Type 4} & \tabincell{c}{$\{1^{r-1},3,2r-2,(2r+1)^{r},(4r+1)^{2r+k-2},(2r)^{r},r+1,1^{r-1}\}$\\$N=6r+k,r\geq3,-1\leq k\leq4$} & $\frac{2N^2-2k^2+6k-9}{3}$  & $\begin{cases}w(1)=2r-2\\w(2)=2r-4\\w(3)=2r-5\end{cases}$ \cr\hline
					\tabincell{c}{$(4r+1)$-Type 5} & \tabincell{c}{$\{1^{r-1},(2r+1)^{r+1},(4r+1)^{2r+k-2},(2r)^{r},1,r,1^{r-1}\}$\\$N=6r+k,r\geq2,-1\leq k\leq4$} & $\frac{2N^2-2k^2+6k-9}{3}$  & $\begin{cases}w(1)=2r-1\\w(2)=2r-4\\w(3)=2r-6\end{cases}$ \cr\hline
					\tabincell{c}{$(4r)$-Type 1} & \tabincell{c}{$\{1,2^{r-2},3,(2r-1)^{r},(4r)^{2r+k-1},(2r+1)^{r-1},2,2r-1,2^{r-1}\}$\\$N=6r+k,r\geq3,-2\leq k\leq3$} & $\frac{2N^2-2k^2-9}{3}$  & $\begin{cases}w(1)=1\\w(2)=2r-2\\w(3)=2\end{cases}$ \cr\hline
					\tabincell{c}{$(4r)$-Type 2} & \tabincell{c}{$\{1,2^{r-1},1,(2r-1)^r,(4r)^{2r+k-1},(2r+1)^r,2^{r-1}\}$\\$N=6r+k,r\geq3,-2\leq k\leq3$} & \tabincell{c}{$\frac{2N^2-2k^2-9}{3}$}  & $\begin{cases}w(1)=2\\w(2)=2r-2\\w(3)=2\end{cases}$ \cr\hline
					\tabincell{c}{$(4r)$-Type 3} & \tabincell{c}{$\{2^{r-1},1,(2r-1)^r,(4r)^{2r+k-1},(2r+1)^r,2^{r-1},1\}$\\$N=6r+k,r\geq3,-2\leq k\leq3$} & $\frac{2N^2-2k^2-9}{3}$  & $\begin{cases}w(1)=2\\w(2)=2r-2\\w(3)=2\end{cases}$ \cr\hline
			\end{tabular}}
		\end{table*}

				{The remaining paper is organized as follows. Some necessary preliminaries are introduced in Section \ref{preliminary}. In Section \ref{New array}, we present the array configurations under the different restriction, include their design rules and array structures, and analyze the numbers of uDOFs, and the weight functions in different arrays. Numerical results are shown in Section \ref{numerical examples}. Finally, conclusions are given in Section \ref{conclusion}.}

				\section{Preliminaries}\label{preliminary}
				\subsection{Difference Co-array Signal Model}
				
				Consider an $N$-sensor NLA whose sensor positions are given by ${\mathbb S}=\{s_1,s_2,\ldots,s_N\}$, with difference co-array ${\cal D}$. Assume that $K$ far-field, uncorrelated narrowband signals impinge on the array from distinct directions $\{\theta_1,\theta_2,\ldots,\theta_K\}$ with powers $\{\sigma_1^2,\sigma_2^2,\ldots,\sigma_K^2\}$. {Receiving signal of the array at time $t$ can be expressed as}
				\begin{align}\label{receive signal}
					{\bf x}(t)=\sum\limits_{k=1}^K{\bf a}(\bar{\theta}_k)s_k(t)+{\bf n}(t)={\bf A}{\bf s}(t)+{\bf n}(t),
				\end{align}
				where ${\bf s}(t)=[s_1(t),s_2(t),\ldots,s_K(t)]^T$ is the signal waveform vector, ${\bf A}=[{\bf a}(\bar{\theta}_1),{\bf a}(\bar{\theta}_2),\ldots,{\bf a}(\bar{\theta}_K)]$ is the $N\times K$ array manifold matrix, and ${\bf a}(\bar{\theta}_k)=[e^{j2\pi s_1\bar{\theta}_k},e^{j2\pi s_2\bar{\theta}_k},\ldots,e^{j2\pi s_N\bar{\theta}_k}]^T$ is the steering vector of the array corresponding to the $k$-th signal with $\bar{\theta}_k=\sin{\theta_k}/2$ denoting the normalized DOA satisfying $-1/2\leq \bar{\theta}_k\leq 1/2$. The noise ${\bf n}(t)$ is assumed to be temporally and spatially white, and uncorrelated from the sources.
				The $\bar{\theta}_k$ is considered to be fixed but unknown, and estimated by the signal model (i.e., DOA estimation).
				
				{The covariance matrix of ${\bf x}(t)$ can be approximated as}
				\begin{align}\label{covariance matrix}
					{\bf R}_{\bf xx}&=E[{\bf x}(t){\bf x}^H(t)]\notag\\
					&={\bf A}\mathrm{diag}([\sigma_1^2,\ldots,\sigma_K^2]){\bf A}^H+\sigma_n^2{\bf I}_N\notag\\
					&=\sum\limits_{k=1}^K\sigma_k^2{\bf a}(\bar{\theta}_k){\bf a}^H(\bar{\theta}_k)+\sigma_n^2{\bf I}_N,
				\end{align}
				{where $\sigma_n^2$ is the noise variance.} Since the entries in ${\bf a}(\bar{\theta}_k){\bf a}^H(\bar{\theta}_k)$ are of the form $e^{j2\pi\bar{\theta}_kd}$ for $d\in {\cal D}$, it enables us to reshape (\ref{covariance matrix}) into an autocorrelation vector ${\bf z}$ as in \cite{2010Pal}, \cite{2015Liu}
				\begin{align}\label{vector}
					{\bf z}&=\sum\limits_{k=1}^K\sigma_k^2{\bf b}(\bar{\theta}_k)+\sigma^2_n{\bf e}_0\notag\\
					&={\bf B}{\bf p}+\sigma^2_n{\bf e}_0,
				\end{align}
				where ${\bf b}(\bar{\theta}_k)=[e^{j2\pi d\bar{\theta}_k}]^T_{d\in{\cal D}}$, ${\bf B}=[{\bf b}(\bar{\theta}_1),\ldots,{\bf b}(\bar{\theta}_K)]$, ${\bf p}=[\sigma_1^2,\ldots,\sigma_K^2]$ and $\langle {\bf e}_0\rangle_d=\delta_{d,0}$ for $d\in {\cal D}$. Here $\delta_{p,q}$ is the Kronecker delta. {Comparing (\ref{receive signal}) with (\ref{vector}), the vector ${\bf z}$ can be observed as the received data from a coherent source signal vector ${\bf p}$ with a single snapshot,} and $\sigma^2_n{\bf e}_0$ becomes a deterministic noise term. Hence, the original model in (\ref{receive signal}) in the physical array domain ${\mathbb S}$, is converted into another model (\ref{vector}) in the difference co-array domain ${\cal D}$, and the DOA estimation can be applied to the data in (\ref{vector}) instead of (\ref{receive signal}). Each such technique, like co-array MUSIC \cite{2010Pal}, \cite{2011Pal}, actually amount to using a subvector ${\bf z}_{\cal U}$ of ${\bf z}$ to perform DOA estimation, where ${\cal U}=[-L_u,L_u]$ is the maximal central ULA segment of ${\cal D}$, and the number of uncorrelated sources that can be identified is $(|{\cal U}|-1)/2$.

				\subsection{Mutual Coupling}
				
				The received signal vector (\ref{receive signal}) assumes that the sensors do not interfere with each other. {In fact, the coupling between closely spaced sensors seriously affects the performance of DOA estimation. Coupling effect is added to the signal model. Hence, (\ref{receive signal}) can be  rewritten as}
				\begin{align}\label{mutual received vector}
					{\bf x}(t)={\bf CAs}(t)+{\bf n}(t),
				\end{align}
				where ${\bf C}$ is the $N\times N$ mutual coupling matrix.
				
				{Generally, the expression for ${\bf C}$ is rather complicated \cite{2016Liu-1}, \cite{2017Liu}. For the convenience of research, the entries of ${\bf C}$ are approximated by a B-band symmetric Toeplitz matrix mode  } \cite{1991Fried,2012Liao,2007Sellone,2000Svantesson,2009Ye}:
				\begin{align}\label{mutual coupling model}
					\langle {\bf  C}\rangle_{n_i,n_j}=\begin{cases}
						c_{|n_i-n_j|}, & \text{if $|n_i-n_j|\leq B$},\\
						0,& \text{otherwise},
					\end{cases}
				\end{align}
				where $n_i,n_j\in {\mathbb S}$ and the coupling coefficients satisfying \cite{1991Fried}
				\begin{align}\label{mutual coeff}
					\begin{cases}
						|c_{0}|=1>|c_1|>|c_2|>\cdots>|c_B|, \\
						|c_{i}|=0,i\geq B+1, \\
						|c_k/c_\ell|=\ell/k,k,l\in[1,B].
					\end{cases}
				\end{align}
				To evaluate the mutual coupling effect, the coupling leakage is usually used.

				\begin{definition}[Coupling Leakage]
					{For a given number of NLA, the total mutual coupling can be calculated by coupling leakage} \cite{2016Liu-1}, \cite{2019Zheng} as
					\begin{align}
						L_{c}=\frac{||{\bf C}-\mathrm{diag}({\bf C})||_F}{||{\bf C}||_F}
					\end{align} {where $||{\bf C}-\mathrm{diag}({\bf C})||_F$ is the energy of all the off-diagonal components, which characterizes the amount of mutual coupling. A large value of $L_{c}$ indicates that the severe coupling effect.}
				\end{definition}

				\section{The New Array Configuration}\label{New array}
				
				In this section, we will propose 3 types of new  array configurations under the new restriction (\ref{Our-condition}), including their uDOFs and the weight functions. The design of the newly proposed array adheres to the maximum basis standard, with $c$ elements on each side of the basis. By carefully configuring the element spacing on both sides of the basis, we achieve arrays with outstanding characteristics. The condition $N\geq18$ is given, since MRAs with small $N$ have been known in \cite{2002Van}.

					\subsection{New $(4r)$-Type Arrays}
					
					%
					When the base of the array is restricted to  be 0 modulo 4. Three classes of new $(4r)$-Type arrays can be obtained under the new restriction (\ref{Our-condition}), which are denoted as $(4r)$-Type 1 array, $(4r)$-Type 2 array and $(4r)$-Type 3 array respectively.

					Let $N=6r+k$ with $r\geq3, -2\leq k\leq 3$. Their spacing arrays and location arrays are expressed as follows.
					
					{\it$(4r)$-Type 1 Array:}
					\begin{align}\label{spacing-rep01}
						{\mathbb D}^{0,1}_{New}=&\{1,2^{r-2},3,(2r-1)^r,(4r)^{2r+k-1},\notag\\
						&(2r+1)^{r-1},2,2r-1,2^{r-1}\}.
					\end{align}
					\begin{align}\label{position-rep01}
						{\mathbb S}^{0,1}_{New}=&\{0,1,3,\ldots,2r-3,2r,4r-1,\ldots,2r^2+r,\notag\\
						& 2r^2+5r,2r^2+9r,\ldots,10r^2+(4k-3)r,\notag\\
						& 10r^2+(4k-1)r+1,\ldots,12r^2+(4k-4)r-1,\notag\\
						&12r^2+(4k-4)r+1,12r^2+(4k-2)r,\notag\\
						&12r^2+(4k-2)r+2,\ldots,12r^2+4kr-2\}.
					\end{align}
					
					{\it$(4r)$-Type 2 Array:}
					\begin{align}\label{spacing-rep02}
						{\mathbb D}^{0,2}_{New}=&\{1,2^{r-1},1,(2r-1)^r,(4r)^{2r+k-1},\notag\\
						&(2r+1)^r,2^{r-1}\}.
					\end{align}
					\begin{align}\label{position-rep02}
						{\mathbb S}^{0,2}_{New}=&\{0,1,3,\ldots,2r-1,2r,4r-1,\ldots,2r^2+r,\notag\\
						& 2r^2+5r,2r^2+9r,\ldots,10r^2+(4k-3)r,\notag\\
						& 10r^2+(4k-1)r+1,\ldots,12r^2+(4k-2)r,\notag\\
						&12r^2+(4k-2)r+2,\ldots,12r^2+4kr-2\}.
					\end{align}
					
					{\it$(4r)$-Type 3 Array:}
					\begin{align}\label{spacing-rep03}
						{\mathbb D}^{0,3}_{New}=&\{2^{r-1},1,(2r-1)^r,(4r)^{2r+k-1},\notag\\
						&(2r+1)^r,2^{r-1},1\}.
					\end{align}
					\begin{align}\label{position-rep03}
						{\mathbb S}^{0,3}_{New}=&\{0,2,4,\ldots,2r-1,2r,4r-2,\ldots,2r^2+r-1,\notag\\
						& 2r^2+5r-1,\ldots,10r^2+(4k-3)r-1,\notag\\
						& 10r^2+(4k-1)r,\ldots,12r^2+(4k-2)r-1,\notag\\
						&12r^2+(4k-2)r+1,\ldots,12r^2+4kr-3\notag\\
						& 12r^2+4kr-2\}.
					\end{align}
					
					{Clearly, the closed-form expressions of spacing arrays and position arrays are uniquely determined by parameters $r$ and $k$. In particular, the proposed arrays (\ref{position-rep01}), (\ref{position-rep02}) and (\ref{position-rep03}) generate a hole-free difference co-array, as indicated in Lemma \ref{hole-free}.}

					%
					
					\begin{lemma}\label{hole-free}
						The difference  co-arrays of the new $(4r)$-Type arrays (\ref{position-rep01}), (\ref{position-rep02}) and (\ref{position-rep03}) are hole-free ULAs, i.e., ${\cal D}_{New}=[-L,L]$ with $L=12r^2+4rk-2$.
					\end{lemma}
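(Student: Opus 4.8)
The plan is to prove the set equality $\cD_{New} = [-L, L]$ with $L = 12r^2 + 4rk - 2$ for each of the three configurations (\ref{position-rep01}), (\ref{position-rep02}) and (\ref{position-rep03}). Since every difference co-array is symmetric, $\cD_{New} = -\cD_{New}$, and since $L$ equals the aperture of each array, we automatically have $\cD_{New} \subseteq [-L, L]$. Hence it suffices to establish the forward inclusion $[0, L] \subseteq \cD_{New}$, i.e.\ that every integer $d$ with $0 \le d \le L$ is realized as $s_i - s_j$ for some $s_i, s_j \in {\mathbb S}_{New}$.

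First I would record the aperture: telescoping the spacing array (\ref{spacing-rep01}) — and likewise (\ref{spacing-rep02}) and (\ref{spacing-rep03}) — the partial sums run from $0$ up to $12r^2 + 4rk - 2$, so $L$ is indeed the largest sensor position. This is the routine computation already reflected in the closed forms, to be confirmed once per type. Next I would split each sensor set along the pattern (\ref{general structure}) into a head block $\cH = \{0 = \alpha_0, \alpha_1, \ldots, \alpha_{s_1} = A\}$, a backbone $\cB = \{A + j\cdot 4r : 0 \le j \le \ell\}$ consisting of $\ell + 1$ consecutive multiples of the base $c = 4r$ shifted by $A$, and a tail block $\cT = \{M + \gamma_1, \ldots, M + \gamma_{s_2}\}$ with $M = A + \ell\cdot 4r$, where $\alpha_i$ and $\gamma_i$ denote the prefix sums of the head and tail spacings. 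Here $s_1 = s_2 = 2r$, so the new restriction (\ref{Our-condition}), namely $s_1 + s_2 = c = 4r$, is in force.

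The core of the argument is a residue analysis modulo the base $4r$. Every backbone sensor shares the residue $A \bmod 4r$, so backbone self-differences give exactly the multiples $\{0, \pm 4r, \ldots, \pm \ell\cdot 4r\}$. The crucial step (Step 1) is to verify that the suffix sums $A - \alpha_i = a_{i+1} + \cdots + a_{s_1}$ of the head spacings and the prefix sums $\gamma_i$ of the tail spacings, reduced modulo $4r$, together meet every residue in $\{0, 1, \ldots, 4r - 1\}$. This is precisely where $s_1 + s_2 = 4r$ is used: there are exactly $4r$ available increments, so the covering is tight and the specific spacing values of (\ref{spacing-rep01})--(\ref{spacing-rep03}) must be checked to leave no residue uncovered. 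Granting this, for a target $d \equiv \rho \pmod{4r}$ I would (Step 2) select a head--backbone difference $(A + j\cdot 4r) - \alpha_i$ or a backbone--tail difference $(M + \gamma_i) - (A + j\cdot 4r)$ of residue $\rho$, and then slide the backbone index $j$ over $\{0, \ldots, \ell\}$ to move $d$ in steps of $4r$; because the backbone supplies $\ell + 1$ consecutive multiples, the reachable values in each residue class form a contiguous run, and I would check that the union of these runs over all residues covers $[0, L]$ with no gap. Finally (Step 3) the two ends $d \approx 0$ and $d \approx L$, where sliding the backbone runs out of room, would be closed using the self-differences $\cH - \cH$ (covering the small values up to about $A$) and $\cT - \cT$ (covering the top values down from $L$), together with head--tail differences bridging the deepest part of the interval.

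The main obstacle is Step 1 together with the gap-freeness bookkeeping in Step 2: because the new restriction $s_1 + s_2 = c$, rather than the classical $s_1 + s_2 = c - 1$ of (\ref{Dong-condition-1}), makes the residue covering exactly tight, a single misplaced spacing would open a hole, so the verification is delicate and must be carried out separately for the three types, whose head and tail spacing patterns differ; by contrast the backbone-sliding and boundary arguments of Steps 2--3 are essentially common to all three. I would therefore exhibit the residue data for each type explicitly and then invoke the shared lifting argument to conclude $[0, L] \subseteq \cD_{New}$, which with symmetry yields $\cD_{New} = [-L, L]$.
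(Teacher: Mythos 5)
Your proposal is sound and reaches the lemma by a genuinely different organization of the covering argument. The paper's proof fixes the $N-1$ anchored difference sets $T_j=\{s_{j+1}-s_j,\ldots,s_N-s_j\}$, partitions $\{1,\ldots,L\}$ into \emph{consecutive} blocks of length $4r$ (the base), and verifies case by case in the block index that each block lies in an explicit union of $T_j$'s; this is a direct, certificate-style check, carried out only for the Type~1 array with Types~2 and~3 deferred by similarity. You instead slice $[0,L]$ transversally, by \emph{residue classes} modulo the base: backbone self-differences give the multiples of $4r$, head-to-backbone and backbone-to-tail differences give step-$4r$ progressions starting at the head suffix sums $A-\alpha_i$ and the tail prefix sums $\gamma_i$, and hole-freeness reduces to (i) these $s_1+s_2=4r$ increments meeting every residue class --- an exactly tight covering, which is precisely where the new restriction (\ref{Our-condition}) enters, a point the paper's proof never isolates --- and (ii) patching the bottom and top of each class. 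Since each of the paper's blocks contains exactly one element of each residue class, the two decompositions are transverse views of the same covering problem, and both ultimately require the same finite verification of head/tail sum data; your framing buys a conceptual explanation of the design rule and a template uniform in the three types, while the paper's explicit block unions are less illuminating but immediately checkable, as its Example~1 illustrates. For the record, your Step~1 does succeed on Type~1: with $r=3$, $k=0$ the residues of $A-\alpha_i$ are $\{0,3,5,6,8,9,10\}$ and those of $\gamma_i$ are $\{1,2,4,7,9,11\}$ modulo $12$, covering all twelve classes with the single permitted collision.

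Two details need repair, neither fatal. First, your parenthetical $s_1=s_2=2r$ holds only for Types~1 and~3; Type~2 has $s_1=2r+1$ and $s_2=2r-1$ (head $\{1,2^{r-1},1,(2r-1)^r\}$, tail $\{(2r+1)^r,2^{r-1}\}$). Only the sum $s_1+s_2=4r$ is used, so the residue argument is unaffected, but the asymmetry changes the explicit residue data you must exhibit. Second, $\cH-\cH$ alone does not cover the small lags up to about $A$: already for $r=3$, $k=0$ the head $\{0,1,3,6,11,16,21\}$ misses $4,7,9,12,14,17,19$, which are supplied by $\cT-\cT$ (giving $4,7,9,14$), the backbone (giving $12$), and the \emph{low} ends of your Step-2 runs (e.g.\ $17=33-16$ from head-to-backbone, $19=88-69$ from backbone-to-tail). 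So the boundary patching of Step~3 must be interleaved with the runs of Step~2 rather than applied only near $d\approx 0$ and $d\approx L$; stated that way, and with the per-type residue tables exhibited, your plan goes through and matches the paper's conclusion for all three arrays, not just the one it proves.
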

					\begin{proof}
						See Appendix \ref{appendix A}.
					\end{proof}
					

					%

					From Lemma \ref{hole-free}, we see that the difference co-array of the $N$-sensor new array is given by a consecutive set between $-(12r^2+4rk-2)$ and $12r^2+4rk-2$. Thus we obtain the following result.
					
					\begin{theorem}\label{thm-DOF}
						Let $N=6r+k$ with $r\geq3$ and $-2\leq k\leq 3$. The uDOFs for the $N$-sensor new $(4r)$-Type arrays (\ref{position-rep01}), (\ref{position-rep02}) and (\ref{position-rep03}):
						\begin{equation}
							\text{uDOFs}^{0,1}_{New}=\text{uDOFs}^{0,2}_{New}=\text{uDOFs}^{0,3}_{New}=\frac{2N^2}{3}-\frac{2k^2}{3}-3
						\end{equation}
						for any sensor number $N\geq18$. Thus its redundancy ratio is: $R^0_{New}<1.5$, and {$R^0_{New}=1.5$ when $N\rightarrow\infty$.}	
					\end{theorem}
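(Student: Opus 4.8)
The plan is to read the uDOF directly off Lemma~\ref{hole-free} and then convert to a closed form in $N$ by elementary algebra. Since Lemma~\ref{hole-free} asserts that the difference co-array ${\cal D}_{New}$ of each of the three $(4r)$-Type arrays is the hole-free ULA $[-L,L]$ with $L = 12r^2 + 4rk - 2$, its maximal central ULA segment is the entire co-array. Hence the uDOF is simply the cardinality
\begin{equation}
\text{uDOFs}^{0,1}_{New} = \text{uDOFs}^{0,2}_{New} = \text{uDOFs}^{0,3}_{New} = |{\cal D}_{New}| = 2L+1 = 24r^2 + 8rk - 3,
\end{equation}
the three values coinciding because all three arrays share the common aperture $L$. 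No further structural work is required, as the content is entirely contained in the preceding lemma.

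Next I would verify that this matches the advertised closed form. Substituting $N = 6r+k$ gives $N^2 = 36r^2 + 12rk + k^2$, so $\frac{2N^2}{3} = 24r^2 + 8rk + \frac{2k^2}{3}$; the $\frac{2k^2}{3}$ term cancels against $-\frac{2k^2}{3}$, leaving exactly $24r^2 + 8rk - 3$. This confirms the stated value $\frac{2N^2}{3} - \frac{2k^2}{3} - 3$.

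For the redundancy ratio I would use $R^0_{New} = N(N-1)/(2L)$. Letting $N\to\infty$ with $k$ bounded is the same as $r\to\infty$, and the leading terms give $R^0_{New} \to 36r^2/(24r^2) = 3/2$, establishing the limiting value. For the strict bound, clearing the denominator $2L>0$ (positive for $r\geq3$) turns $R^0_{New} < \frac{3}{2}$ into $N(N-1) < 3L$; substituting $N = 6r+k$ and cancelling the $36r^2 + 12rk$ terms reduces everything to the single polynomial inequality
\begin{equation}
k^2 - k + 6 < 6r.
\end{equation}
Over the admissible range $-2 \leq k \leq 3$ the left side is maximized at $k=-2$ and $k=3$, where it equals $12$, while $r\geq3$ forces $6r\geq18>12$. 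Hence the inequality holds for every admissible pair $(r,k)$, giving $R^0_{New} < 1.5$.

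The argument is almost entirely routine once Lemma~\ref{hole-free} is granted, so I expect no genuine obstacle; the only point demanding care is the final inequality, whose margin is tightest at the extreme values $k\in\{-2,3\}$. The hypothesis $r\geq3$ (equivalently $N\geq18$) is precisely what guarantees $6r$ clears the worst-case value $12$, so I would present the last step as a short check over the six admissible values of $k$ rather than a generic estimate, making the role of the hypothesis transparent.
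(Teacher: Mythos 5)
Your proposal is correct and follows essentially the same route as the paper: read $\text{uDOFs}=2L+1$ off Lemma~\ref{hole-free}, substitute $N=6r+k$ to get the closed form, and bound $R=N(N-1)/(2L)$ below $3/2$ via the inequality $k^2-k+6<6r$, which is exactly the positivity of the numerator in the paper's rewriting $R=\frac{3}{2}-\frac{6r-k^2+k-6}{24r^2+8rk-4}$. Your intermediate value $2L+1=24r^2+8rk-3$ is the correct one (the paper's displayed $24r^2+8rk+3$ is a sign typo, as its own closed form $\frac{2N^2}{3}-\frac{2k^2}{3}-3$ confirms), and your explicit check at the extremes $k\in\{-2,3\}$ in fact covers all pairs with $r\geq3$, slightly more than the paper's terse appeal to $N\geq18$.
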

					\begin{proof}
						From (\ref{position-rep01}), (\ref{position-rep02}), (\ref{position-rep03}) and  Lemma \ref{hole-free}, we know that
						\begin{align}\label{New-uDOFs} \text{uDOFs}^{0,1}_{New}&=\text{uDOFs}^{0,2}_{New}=\text{uDOFs}^{0,3}_{New}\notag\\
							&=2L+1=24r^2+8rk+3\\
							&=\frac{2N^2}{3}-\frac{2k^2}{3}-3.\notag
						\end{align}
						Thus the redundancy ratio is
						\begin{align}\label{Ratio}
							R_{New}^{0,1}&=R_{New}^{0,2}=R_{New}^{0,3}=\frac{N(N-1)}{2L}\notag\\&=\frac{36r^2+6r(2k-1)+k^2-k}{24r^2+8rk-4}\\
							&=\frac{3}{2}-\frac{6r-k^2+k-6}{24r^2+8rk-4}<1.5.\notag
						\end{align}	
						The last inequality is obtained by $N\geq 18$, i.e., $r=3, 0\leq k\leq 3$ and $r\geq 4,-2\leq k\leq3$.	
					\end{proof}
					
					


					The first three weight functions of the new $(4r)$-Type arrays can be also obtained as follows.

					\begin{theorem}\label{mutual}
						Let $N=6r+k$ with $r\geq3$ and $-2\leq k\leq 3$. For the $N$-sensor new arrays (\ref{position-rep01}), (\ref{position-rep02}) and (\ref{position-rep03}), their weight functions $\omega(a)$ at $a=1,2,3$ are
						
						{\it $(4r)$-Type 1 Array:}
						\begin{align}\label{new-mutua0l}
							\omega(1)=1,~\omega(2)=2r-2,~\omega(3)=2.
						\end{align}
						
						{\it $(4r)$-Type 2 Array:}
						\begin{align}\label{new-mutua02}
							\omega(1)=2,~\omega(2)=2r-2,~\omega(3)=2.
						\end{align}
						
						{\it $(4r)$-Type 3 Array:}
						\begin{align}\label{new-mutua03}
							\omega(1)=2,~\omega(2)=2r-2,~\omega(3)=2.
						\end{align}	
					\end{theorem}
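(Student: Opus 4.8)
The plan is to read each weight function directly off the spacing array, exploiting the fact that any inter-sensor difference is a sum of \emph{consecutive} spacings. Concretely, if $s_i<s_j$ then $s_j-s_i=\sum_{\ell=i}^{j-1}d_\ell$ with every $d_\ell\ge 1$, so a difference of value $a\in\{1,2,3\}$ can arise only from a consecutive block of spacings whose entries sum to exactly $a$. For $r\ge 3$ the three large spacings $2r-1$, $4r$, and $2r+1$ all exceed $3$; hence every such block must lie entirely inside the short runs of $1$'s, $2$'s and $3$'s that sit near the two ends of each spacing array, and no block summing to at most $3$ can straddle a large spacing. This reduces the computation of $\omega(1)$, $\omega(2)$, $\omega(3)$ to a finite enumeration on the two end-segments: $\omega(1)$ counts the entries equal to $1$ (the only block summing to $1$); $\omega(2)$ counts the entries equal to $2$ plus the adjacent pairs $1+1$; and $\omega(3)$ counts the entries equal to $3$, plus the adjacent pairs $1+2$ or $2+1$, plus the triples $1+1+1$.

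Carrying this out, I would first isolate the end-segments of each array. For $(4r)$-Type 1 these are $\{1,2^{r-2},3\}$ on the left and $\{2,2r-1,2^{r-1}\}$ on the right: the unique $1$ gives $\omega(1)=1$; the twos total $(r-2)+1+(r-1)=2r-2$ with no adjacent $1$'s, so $\omega(2)=2r-2$; and $\omega(3)=2$, coming from the single $3$ and the leading block $1+2$. For $(4r)$-Type 2 the ends are $\{1,2^{r-1},1\}$ and $\{(2r+1)^r,2^{r-1}\}$: the two separated $1$'s give $\omega(1)=2$, the twos total $2(r-1)=2r-2$ with no adjacent $1$'s so $\omega(2)=2r-2$, and $\omega(3)=2$ from the two blocks $1+2$ and $2+1$ produced by the $1$'s flanking the left-hand run of $2$'s. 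For $(4r)$-Type 3 the ends are $\{2^{r-1},1\}$ and $\{2^{r-1},1\}$: this gives $\omega(1)=2$, $\omega(2)=2(r-1)=2r-2$, and $\omega(3)=2$ from the two blocks $2+1$ formed by the trailing $1$ after each run of $2$'s.

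The only genuine work is the bookkeeping at the junctions, and this is where I expect the subtlety to lie: I must confirm that each end-segment is separated from the array's interior (and from the opposite end-segment) by at least one large spacing, and that at every interface where a short spacing abuts the first large spacing no new consecutive sub-sum of value at most $3$ is created. Since at each such interface a spacing of size at most $3$ meets a large spacing that is at least $2r-1$, and $2r-1\ge 5$ for $r\ge 3$, the running sum already exceeds $3$ the moment the large spacing is included; this is exactly where the hypothesis $r\ge 3$ is used. Consequently the end-segment enumerations are exhaustive, and the stated values of $\omega(1)$, $\omega(2)$, $\omega(3)$ follow in all three cases.
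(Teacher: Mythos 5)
Your proposal is correct and follows essentially the same route as the paper's proof: counting the consecutive blocks of the spacing array whose entries sum to $1$, $2$, or $3$, which for the $(4r)$-Type 1 array yields exactly the paper's tally $\omega(1)=1$, $\omega(2)=(r-2)+1+(r-1)=2r-2$, and $\omega(3)=2$. The only differences are that you spell out the straddling argument (no block of sum at most $3$ can include a spacing $2r-1$, $4r$, or $2r+1$, using $2r-1\ge 5$ for $r\ge 3$) and the Type 2/3 enumerations, both of which the paper leaves implicit or omits as ``similar.''
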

					\begin{proof}
						To prove the results, we only need to count the numbers of continuous segment in ${\mathbb D}_{New}^{0,1}$ whose sums are 1, 2 and 3 respectively. Obviously, the spacing 1 only appears once as the first element of ${\mathbb D}_{New}^{0,1}$. The spacing 2 appears in
						\begin{equation}
							\{*,\underbrace{2,2,\ldots,2}\limits_{r-2~\text{times}},\cdots,\underbrace{2,}\limits_{1~\text{time}}*,\underbrace{2,2,\ldots,2}\limits_{r-1~\text{times}}\}.
						\end{equation}
						Thus $\omega(2)=r-2+1+r-1=2r-2$. The spacing 3 appears two times as sum in
						\begin{equation}
							\{\underbrace{1,2}\limits_{1~\text{time}},2^{r-3},\underbrace{3,}\limits_{1~\text{time}}\cdots\}.
						\end{equation}
						Thus $\omega(3)=2$. In this way, we have completed the proof of the weight function for the $(4r)$-Type 1 array. Moreover, we omit the proofs for the weight functions of the $(4r)$-Type 2 array and $(4r)$-Type 3 array, as the proof methods are similar.
					\end{proof}

					\begin{remark}
						Now we make some comparisons in  uDOFs and weight functions between our new $(4r)$-Type arrays and known arrays with the same type.
						\begin{enumerate}
							\item Compared with the $(4r)$-Type array-93 in \cite{1993Linebarger}, although our array $(4r)$-Type 1 has $4$ values less in uDOF for any $N\geq18$, it has a reduction in $\omega(1)$ and $\omega(2)$. Especially, it is the first class of array which achieves $R\leq 1.5$ and $\omega(1)=1$. Simultaneously, the $(4r)$-Type 2 and $(4r)$-Type 3 arrays also experience a loss of only $4$ uDOF, but they decrease $w(2)$. Note that the minor difference in uDOFs can be ignored as $N$ increases, especially incorporating the mutual coupling, for DOA estimation (See the details in the next section).
							\item Compared with the $(4r)$-Type array-10 in \cite{2010Dong}, our array has 2 values less in uDOFs for any $N\geq18$, but it also has a reduction in $\omega(1)$ and $\omega(2)$.
							\item Compared with  SNA in \cite{2016Liu-1} and MISC arrays in \cite{2019Zheng}, our array provides a significantly higher number of uDOFs than those of SNA and MISC array, in fact, $\text{uDOFs}^{0,1}_{New}=\text{uDOFs}^{0,2}_{New}=\text{uDOFs}^{0,3}_{New}\approx \frac{4}{3}\max\{\text{uDOFs}_{SNA},\text{uDOFs}_{MISC}\}$ when $N\rightarrow\infty $.
							Moreover, the mutual coupling of our array is also greatly reduced since $\omega^{0,1}_{New}(2)=\omega^{0,2}_{New}(2)=\omega^{0,3}_{New}(2)\approx\frac{N}{3}$ is far smaller than $\omega_{SNA}(2)\approx\omega_{MISC}(2)\approx\frac{N}{2}$.
							This advantage becomes evident as the number of sensors increases. This characteristic similarly applies to the $(4r+1)$-Type arrays presented later.
						\end{enumerate}
						
					\end{remark}

					\subsection{New $(4r+1)$-Type Arrays}
					When the base of the array is restricted to  be 1 modulo 4.
					Five classes of $(4r+1)$-Type arrays are obtained under the new restriction (\ref{Our-condition}), which are recorded as $(4r+1)$-Type 1 array, $(4r+1)$-Type 2 array, $(4r+1)$-Type 3 array, $(4r+1)$-Type 4 array and $(4r+1)$-Type 5 array respectively.
					
					Let $N=6r+k$ with $r\geq2, -1\leq k\leq 4$. Their spacing arrays and location arrays are expressed as follows.

					{\it $(4r+1)$-Type 1 Array: }
					\begin{align}\label{spacing-rep11}
						{\mathbb D}^{1,1}_{New}=&\{1^{r-1},(2r+1)^{r+1},(4r+1)^{2r+k-2},\notag\\
						&(2r)^{r},r-1,2,1^{r-1}\}.
					\end{align}
					\begin{align}\label{position-rep11}
						{\mathbb S}^{1,1}_{New}&=\{0,1,\ldots,r-1,3r,5r+1,\ldots,2r^2+4r,\notag\\
						& 2r^2+8r+1,\ldots,10r^2+(4k-2)r+k-2,\notag\\
						& 10r^2+4kr+k-2,\ldots,12r^2+(4k-2)r+k-2,\notag\\
						&12r^2+(4k-1)r+k-3,12r^2+(4k-1)r+k-1,\notag\\
						&12r^2+(4k-1)r+k,\ldots,12r^2+4kr+k-2\}.
					\end{align}

					{\it $(4r+1)$-Type 2 Array: }
					\begin{align}\label{spacing-rep12}
						{\mathbb D}^{1,2}_{New}=&\{1^{r-1},r-1,r+2,(2r+1)^{r},(4r+1)^{2r+k-2},\notag\\
						&(2r)^{r},r+1,1^{r-1}\}.
					\end{align}
					\begin{align}\label{position-rep12}
						{\mathbb S}^{1,2}_{New}&=\{0,1,\ldots,r-1,2r-2,3r,5r+1,\ldots,2r^2+4r,\notag\\
						& 2r^2+8r+1,\ldots,10r^2+(4k-2)r+k-2,\notag\\
						& 10r^2+4kr+k-2,\ldots,12r^2+(4k-2)r+k-2,\notag\\
						&12r^2+(4k-1)r+k-1,12r^2+(4k-1)r+k,\notag\\
						&\ldots,12r^2+4kr+k-2\}.
					\end{align}
					
					{\it $(4r+1)$-Type 3 Array: }
					\begin{align}\label{spacing-rep13}
						{\mathbb D}^{1,3}_{New}=&\{1^{r-1},r,r+1,(2r+1)^{r},(4r+1)^{2r+k-2},\notag\\
						&(2r)^{r},r+1,1^{r-1}\}.
					\end{align}
					\begin{align}\label{position-rep13}
						{\mathbb S}^{1,3}_{New}&=\{0,1,\ldots,r-1,2r-1,3r,5r+1,\ldots,2r^2+4r,\notag\\
						& 2r^2+8r+1,\ldots,10r^2+(4k-2)r+k-2,\notag\\
						& 10r^2+4kr+k-2,\ldots,12r^2+(4k-2)r+k-2,\notag\\
						&12r^2+(4k-1)r+k-1,12r^2+(4k-1)r+k,\notag\\
						&\ldots,12r^2+4kr+k-2\}.
					\end{align}
					
					{\it $(4r+1)$-Type 4 Array: }
					\begin{align}\label{spacing-rep14}
						{\mathbb D}^{1,4}_{New}=&\{1^{r-1},3,2r-2,(2r+1)^{r},(4r+1)^{2r+k-2},\notag\\
						&(2r)^{r},r+1,1^{r-1}\}.
					\end{align}
					\begin{align}\label{position-rep14}
						{\mathbb S}^{1,4}_{New}&=\{0,1,\ldots,r-1,r+2,3r,5r+1,\ldots,2r^2+4r,\notag\\
						& 2r^2+8r+1,\ldots,10r^2+(4k-2)r+k-2,\notag\\
						& 10r^2+4kr+k-2,\ldots,12r^2+(4k-2)r+k-2,\notag\\
						&12r^2+(4k-1)r+k-1,12r^2+(4k-1)r+k,\notag\\
						&\ldots,12r^2+4kr+k-2\}.
					\end{align}
					
					{\it $(4r+1)$-Type 5 Array: }
					\begin{align}\label{spacing-rep15}
						{\mathbb D}^{1,5}_{New}=&\{1^{r-1},(2r+1)^{r+1},(4r+1)^{2r+k-2},\notag\\
						&(2r)^{r},1,r,1^{r-1}\}.
					\end{align}
					\begin{align}\label{position-rep15}
						{\mathbb S}^{1,5}_{New}&=\{0,1,\ldots,r-1,3r,5r+1,\ldots,2r^2+4r,\notag\\
						& 2r^2+8r+1,\ldots,10r^2+(4k-2)r+k-2,\notag\\
						& 10r^2+4kr+k-2,\ldots,12r^2+(4k-2)r+k-2,\notag\\
						&12r^2+(4k-2)r+k-1,12r^2+(4k-1)r+k-1,\notag\\
						&12r^2+(4k-1)r+k,\ldots,12r^2+4kr+k-2\}.
					\end{align}
					
					
					The new $(4r+1)$-Type arrays also yield hole-free difference co-array, and their uDOFs and the first three weight functions can be also obtained as follows. Since their proofs are very similar as those of the $(4r)$-Type array, so we omit them.
					
					\begin{lemma}\label{hole-free1}
						The difference co-arrays of the new $(4r+1)$-Type arrays (\ref{position-rep11}), (\ref{position-rep12}), (\ref{position-rep13}), (\ref{position-rep14}) and (\ref{position-rep15}) are hole-free ULAs, i.e., ${\cal D}_{New}=[-L,L]$ with $L=12r^2+4rk+k-2$.
					\end{lemma}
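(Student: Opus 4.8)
The plan is to mirror the argument used for the $(4r)$-Type arrays (Lemma~\ref{hole-free}, Appendix~\ref{appendix A}) and show directly that every integer in $[0,L]$, with $L=12r^2+4rk+k-2$, is realized as a difference of two sensors. Since a difference co-array is symmetric ($d\in{\cal D}_{New}\Leftrightarrow -d\in{\cal D}_{New}$) and the largest sensor position of each array equals $L$ while the smallest is $0$, this immediately yields ${\cal D}_{New}=[-L,L]$. First I would read off the explicit positions, i.e.\ the lists (\ref{position-rep11})--(\ref{position-rep15}), and split each location array into three regions: a \emph{left cluster} (the $1^{r-1}$ tail together with the $(2r+1)$-block, ending at $2r^2+4r$), a central \emph{ladder} consisting of the $2r+k-1$ positions $2r^2+4r+j(4r+1)$ for $0\le j\le 2r+k-2$ produced by the repeated base $c=4r+1$, and a \emph{right cluster} (the $(2r)$-block followed by the short end pattern, ending at $L$).

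The engine of the proof is the repeated base. The ladder is an arithmetic progression with common difference exactly $c$, so adding $j$ ladder steps to any fixed difference shifts it by $jc$ as $j$ ranges over $0,\ldots,2r+k-2$. Hence it suffices to produce, from differences confined to a single ``period,'' a complete window of at least $c=4r+1$ consecutive values; translating that window along the ladder and noting that consecutive translates abut (each window has length $\ge c$ and the ladder step is $c$, so $W$ and $W+c$ have consecutive union) covers a long initial interval. This is precisely where the new balance condition (\ref{Our-condition}), $s_1+s_2=c$, is used: the total number of flanking spacings on the two sides of the base equals the base $c$, which is exactly what lets the one-period differences sweep out a full length-$c$ window with neither gap nor overshoot. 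Concretely I would show that the left tail $\{0,1,\ldots,r-1\}$ and the $(2r+1)$-block differences cover an initial block of small consecutive values, that the $(2r)$-block and the short end spacings supply the complementary residues up to $4r$, and that the translated windows union to $[0,(2r+k-2)c+|W|-1]$, which I would verify equals $[0,L]$.

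Finally I would treat the very top of the range separately: the largest differences, obtained by pairing the right cluster against the left cluster across the whole ladder, fill the interval just below $L$, guaranteeing nothing is missed near the aperture. The identical three-region scheme applies to all five variants, which differ only in their short end patterns---$\{\ldots,r-1,2,1^{r-1}\}$, $\{\ldots,r+1,1^{r-1}\}$, $\{1,r,1^{r-1}\}$, and so on. For each variant only the step that completes the residue window from the right cluster needs to be re-examined, since that is the only place where the flanking spacings change; the left cluster and the ladder are common to all five.

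The main obstacle is exactly this bookkeeping: checking that, for each of the five end patterns and uniformly over the allowed range $-1\le k\le 4$, the flanking differences really cover a complete window of length at least $c$, so that the translated windows tile $[0,L]$ with no gap and no value exceeding $L$. The shortest ladder---the boundary case $k=-1$, where the base repeats only $2r-3$ times (and where some variants additionally require $r\ge 3$)---together with the least symmetric end patterns are the delicate cases and would have to be verified by hand; once those are settled, the general conclusion for all $N\ge 18$ follows.
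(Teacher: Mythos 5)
Your proposal is correct and takes essentially the same route as the paper: the paper omits a dedicated proof of this lemma, invoking the Appendix~A argument for the $(4r)$-Type arrays, which---exactly like your plan---reduces by symmetry to covering $[1,L]$, partitions this range into windows of length equal to the base $c=4r+1$, shows each window is covered by differences against the repeated-base ladder, and handles the short final window near the aperture and the boundary parameter cases separately. Your only departure is presentational: you phrase the coverage as one fixed one-period window translated by multiples of $c$ along the ladder, while the paper writes out, window by window, explicit unions of the per-sensor difference sets $T_j$.
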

					
					
					\begin{theorem}\label{thm-DOF1}
						Let $N=6r+k$ with $r\geq2$ and $-1\leq k\leq 4$. The uDOFs for the $N$-sensor new $(4r+1)$-Type arrays (\ref{position-rep11}), (\ref{position-rep12}), (\ref{position-rep13}), (\ref{position-rep14}) and (\ref{position-rep15}):
						\begin{align}
							&\text{uDOFs}^{1,1}_{New}=\text{uDOFs}^{1,2}_{New}=\text{uDOFs}^{1,3}_{New}=\text{uDOFs}^{1,4}_{New}\notag\\
							&=\text{uDOFs}^{1,5}_{New}=\frac{2N^2-2k^2+6k-9}{3}.
						\end{align}
						Thus their redundancy ratio are: $R^{1,1}_{New}=R^{1,2}_{New}<1.5$, and {$R^{1,1}_{New}=R^{1,2}_{New}=1.5$ when $N\rightarrow\infty$.}	
					\end{theorem}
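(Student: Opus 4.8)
The plan is to follow exactly the template used for Theorem~\ref{thm-DOF}, since the only array-specific input is the aperture $L$, which Lemma~\ref{hole-free1} already supplies. First I would invoke Lemma~\ref{hole-free1}: each of the five $(4r+1)$-Type arrays (\ref{position-rep11})--(\ref{position-rep15}) has difference co-array equal to the consecutive set $[-L,L]$ with $L=12r^2+4rk+k-2$. Because this set is hole-free, it is its own maximal central ULA segment, so the uDOF equals its cardinality $2L+1=24r^2+8rk+2k-3$. This single computation already explains why all five values coincide: the five arrays share a common aperture, hence a common uDOF and (since they also share $N$) a common redundancy ratio.

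Next I would convert this into the stated closed form in $N$. Substituting $N=6r+k$, equivalently $r=(N-k)/6$, into $24r^2+8rk+2k-3$ and expanding, the mixed $Nk$ terms cancel and one is left with $\tfrac{1}{3}(2N^2-2k^2+6k-9)$, which is the claimed formula. This is a one-line algebraic identity, so I would state it and defer the routine expansion.

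For the redundancy ratio I would compute $R=N(N-1)/(2L)$ using $2L=24r^2+8rk+2k-4$ and $N(N-1)=36r^2+6r(2k-1)+k^2-k$, then recast it, in parallel with (\ref{Ratio}), as
\begin{align*}
R=\frac{3}{2}-\frac{6r-k^2+4k-6}{24r^2+8rk+2k-4}.
\end{align*}
It then remains to show the correction term is strictly positive, i.e.\ that the numerator $6r-k^2+4k-6>0$ for all admissible parameters. Here I would use the hypothesis $N\geq18$, which together with $r\geq2$ and $-1\leq k\leq4$ forces either $r=3$ with $0\leq k\leq4$, or $r\geq4$ with $-1\leq k\leq4$. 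In the first case the numerator equals $12+4k-k^2$, whose minimum over $k\in[0,4]$ is $12>0$; in the second case it is bounded below by $18+4k-k^2\geq13>0$. Hence $R<1.5$, and letting $r\to\infty$ drives the correction term to $0$, giving $R\to1.5$.

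Beyond invoking Lemma~\ref{hole-free1}, the argument carries no genuine obstacle, since everything reduces to substituting $N=6r+k$ and a sign check; the real combinatorial work of verifying hole-freeness is already delegated to that lemma. The only point demanding care is the positivity of $6r-k^2+4k-6$ at the boundary of the parameter region, especially the corner $r=3$ and the endpoint $k=4$, which I would verify explicitly as above rather than by an asymptotic argument. Finally, I would remark that, because all five arrays share the same $L$ and $N$, their redundancy ratios are identical, so the bound extends verbatim from $R^{1,1}_{New}=R^{1,2}_{New}$ to the remaining three types.
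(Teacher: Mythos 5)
Your proposal is correct and follows essentially the same route as the paper: the paper explicitly omits this proof as ``very similar'' to that of the $(4r)$-Type case, and your argument is exactly that template instantiated for $L=12r^2+4rk+k-2$ from Lemma~\ref{hole-free1} --- taking $\mathrm{uDOF}=2L+1=24r^2+8rk+2k-3$, checking it equals $\frac{2N^2-2k^2+6k-9}{3}$ under $N=6r+k$, and rewriting $R=\frac{3}{2}-\frac{6r-k^2+4k-6}{24r^2+8rk+2k-4}$ with a positivity check of the numerator. Your explicit boundary verification ($r=3$ with $0\leq k\leq 4$, and $r\geq4$ with $-1\leq k\leq4$, forced by $N\geq18$) mirrors the case split the paper uses at the end of its $(4r)$-Type proof, and your observation that all five arrays share the same aperture correctly justifies the equality of all five uDOF values and redundancy ratios.
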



						\begin{theorem}\label{mutual1}
							Let $N=6r+k$ with $r\geq2$ and $-1\leq k\leq 4$. For the $N$-sensor new arrays (\ref{position-rep11}), (\ref{position-rep12}), (\ref{position-rep13}), (\ref{position-rep14}) and (\ref{position-rep15}), their weight functions $\omega(m)$ at $m=1,2,3$ are
							
							{\it $(4r+1)$-Type 1 Array}:
							\begin{align}\label{new-mutual11}
								\omega(1)=2r-2,~\omega(2)=2r-4,~\omega(3)=2r-6.
							\end{align}	
							
							{\it $(4r+1)$-Type 2 Array}:
							\begin{align}\label{new-mutual12}
								\omega(1)=2r-2,~\omega(2)=2r-4,~\omega(3)=2r-6.
							\end{align}
							
							{\it $(4r+1)$-Type 3 Array}:
							\begin{align}\label{new-mutual13}
								\omega(1)=2r-2,~\omega(2)=2r-4,~\omega(3)=2r-6.
							\end{align}
							
							{\it $(4r+1)$-Type 4 Array}:
							\begin{align}\label{new-mutual14}
								\omega(1)=2r-2,~\omega(2)=2r-4,~\omega(3)=2r-5.
							\end{align}
							
							{\it $(4r+1)$-Type 5 Array}:
							\begin{align}\label{new-mutual15}
								\omega(1)=2r-1,~\omega(2)=2r-4,~\omega(3)=2r-6.
							\end{align}
						\end{theorem}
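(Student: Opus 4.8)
The plan is to reduce the statement to a purely combinatorial count on each spacing array, exactly as in the proof of Theorem~\ref{mutual}. The starting observation is that for a restricted array with spacing array ${\mathbb D}=\{d_1,\dots,d_{N-1}\}$, a positive coarray index $a$ is realized by the sensor pair $(s_j,s_i)$ with $i<j$ precisely when $d_i+d_{i+1}+\cdots+d_{j-1}=a$. Hence $\omega(a)$ equals the number of contiguous windows of ${\mathbb D}$ whose entries sum to $a$ (the contribution of negative indices being identical by symmetry of the co-array). Since every spacing is a positive integer, a window summing to $a$ has length at most $a$; so for $a\in\{1,2,3\}$ it suffices to enumerate windows of length $1,2,3$. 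I would carry out this enumeration for each of the five arrays (\ref{spacing-rep11})--(\ref{spacing-rep15}) in turn.

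First I would record the admissible compositions: sum $1$ comes only from a single entry equal to $1$; sum $2$ from a single $2$ or from two consecutive $1$'s; and sum $3$ from a single $3$, from a $1$ adjacent to a $2$ in either order, or from three consecutive $1$'s. The dominant contributions come from the two terminal blocks $1^{r-1}$ that open and close every one of the five spacing arrays: a run of $m$ consecutive $1$'s contributes $m$ windows of sum $1$, $m-1$ of sum $2$, and $m-2$ of sum $3$. With $m=r-1$ in each terminal block this already accounts for the leading terms $2(r-1)$, $2(r-2)$, and $2(r-3)$. The remaining task is to add the contributions of the short ``transition'' spacings sitting at the junctions of the large blocks --- the entries $r-1,2$ in (\ref{spacing-rep11}), the entries $r-1,r+2,r+1$ in (\ref{spacing-rep12}), the entries $r,r+1$ in (\ref{spacing-rep13}), the entries $3,2r-2,r+1$ in (\ref{spacing-rep14}), and the isolated $1$ together with $r$ in (\ref{spacing-rep15}) --- along with any short window that straddles the boundary between such a transition entry and an adjacent $1$-block. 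The large spacings $2r+1,4r+1,2r$ never participate for $a\le 3$ and can be discarded at once.

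This local analysis near the two ends is exactly what distinguishes the five cases and explains the stated values. For instance, the standalone $3$ in (\ref{spacing-rep14}) supplies one extra sum-$3$ window, which is why $(4r+1)$-Type~4 records $\omega(3)=2r-5$ rather than $2r-6$; likewise the isolated $1$ in the ``$1,r$'' transition of (\ref{spacing-rep15}) supplies one extra sum-$1$ window, giving $(4r+1)$-Type~5 its $\omega(1)=2r-1$. For each array I would simply tabulate, for $a=1,2,3$, the single-entry hits, the length-two hits, and the length-three hits, sum them, and compare with (\ref{new-mutual11})--(\ref{new-mutual15}).

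The main obstacle is the boundary bookkeeping rather than any conceptual difficulty: one must decide, junction by junction, whether a transition spacing together with a neighbouring $1$ (or pair of $1$'s) forms an additional window of sum $2$ or $3$, so as neither to miss nor to double-count these terms. It is precisely here that a single miscount shifts a constant by one, so the delicate step is to track every short window that crosses a $1$-block boundary. A second delicate point is the small-$r$ degeneracies, where a transition value such as $r-1$ or $2r-2$ collapses onto $1,2,$ or $3$ (e.g.\ $r=2,3$) and can merge or create short windows; these values must be checked separately against the hypotheses $r\ge 2$ (or $r\ge 3$) under which each array is defined. Executing this end-of-array casework carefully is what pins down the exact constants in (\ref{new-mutual11})--(\ref{new-mutual15}); the rest of the argument is routine.
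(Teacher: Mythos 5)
Your proposal coincides with the paper's own treatment: the paper explicitly omits the proof of this theorem, declaring it ``very similar'' to that of Theorem~\ref{mutual}, whose argument is precisely your reduction --- counting contiguous segments (windows) of the spacing array whose sums are $1$, $2$, $3$, with the terminal $1$-blocks supplying the leading terms and the junction spacings the corrections. Your extra attention to boundary windows and to small-$r$ degeneracies (where transition entries such as $r-1$ or $2r-2$ collapse to values $\le 3$) is exactly the bookkeeping the paper leaves implicit, and it is where the constants in (\ref{new-mutual11})--(\ref{new-mutual15}) are decided.
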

							

							\begin{remark}
								Compared with known arrays with $(4r+1)$-Type in Table \ref{table:redundancy thickapprox2}, our new arrays have at most 2 values less in uDOFs for any $N\geq18$, which may be ignored as $N$ increases. About the mutual coupling,
								\begin{enumerate}
									\item the new $(4r+1)$-Type 1 array, $(4r+1)$-Type 2 array and $(4r+1)$-Type 3 array have the lowest mutual coupling, since they provide a decrease in all of the first three weight functions.
									\item 
									Despite the fact that the $(4r)$-Type 4 array and the $(4r)$-Type 5 array show an increase in the values of $w(1)$ and $w(2)$ when compared to other recently proposed $(4r+1)$-Type arrays, they possess a lower weight function than the existing $(4r+1)$-Type arrays.
								\end{enumerate}


							\end{remark}

							\subsection{New $(4r+3)$-Type Arrays}
							
							When the base of array is restricted to 3 modulo 4. Two classes of $(4r+3)$-Type arrays are obtained under the new restriction (\ref{Our-condition}), which are recorded as $(4r+3)$-Type1 array and $(4r+3)$-Type2 array respectively.
							
							Let $N=6r+k$ with $r\geq3, 2\leq k\leq 7$. Their spacing arrays and location arrays are expressed as follows.
							
							{\it $(4r+3)$-Type 1 array:}
							\begin{align}\label{spacing-rep31}
								{\mathbb D}^{3,1}_{New}=&\{r+1,1^{r},(2r+2)^{r+1},(4r+3)^{2r+k-4},\notag\\
								&(2r+1)^{r},r+1,1^{r}\}.
							\end{align}
							\begin{align}\label{position-rep31}
								{\mathbb S}^{3,1}_{New}=&\{0,r+1,\ldots,2r+1,4r+3,\ldots,2r^2+6r+3,\notag\\
								& 2r^2+10r+6,\ldots,10r^2+(4k-4)r+3k-9,\notag\\
								& 10r^2+4kr-2r+3k-8,\ldots,12r^2+4rk-3r\notag\\
								&+3k-9,12r^2+4rk-2r+3k-8,\notag\\
								&12r^2+4rk-2r+3k-7,\ldots,\notag\\
								&12r^2+4kr-r+3k-8\}.
							\end{align}

							{\it $(4r+3)$-Type 2 array:}
							\begin{align}\label{spacing-rep32}
								{\mathbb D}^{3,2}_{New}=&\{1^{r},(2r+2)^{r+1},(4r+3)^{2r+k-4},\notag\\
								&(2r+1)^{r},r+1,1^{r},r+1\}.
							\end{align}
							\begin{align}\label{position-rep32}
								{\mathbb S}^{3,2}_{New}=&\{0,1,\ldots,r,3r+2,\ldots,2r^2+5r+2,\notag\\
								& 2r^2+9r+5,\ldots,10r^2+4rk-5r+3k-10,\notag\\
								& 10r^2+4kr+-3r+3k-10,\ldots,12r^2+4rk\notag\\
								&-4r+3k-10,12r^2+4rk-3r+3k-9,\notag\\
								&12r^2+4rk-3r+3k-8,\ldots,12r^2+4rk\notag\\
								&-2r+3k-9,12r^2+4kr-r+3k-8\}.
							\end{align}

							The similar conclusions on uDOFs and weight functions can be similarly obtained as follows.
							
							\begin{lemma}\label{hole-free2}
								The difference co-arrays of the new $(4r+3)$-Type arrays (\ref{position-rep31}) and (\ref{position-rep32}) are hole-free ULAs, i.e., ${\cal D}_{New}=[-L,L]$ with $L=12r^2+4rk-r+3k-8$.
							\end{lemma}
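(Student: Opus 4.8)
The plan is to prove Lemma~\ref{hole-free2} by the same covering argument used for the $(4r)$-Type arrays in the proof of Lemma~\ref{hole-free} (Appendix~\ref{appendix A}). Since the difference co-array is symmetric about the origin, it suffices to show that every integer $d$ with $0\le d\le L$ is realized as a difference $s_i-s_j$ of two positions in (\ref{position-rep31}) (resp.\ (\ref{position-rep32})), together with the fact that $L$ is the aperture. The aperture claim is immediate: summing the spacings in (\ref{spacing-rep31}) gives $(r+1)+r\cdot 1+(r+1)(2r+2)+(2r+k-4)(4r+3)+r(2r+1)+(r+1)+r\cdot 1=12r^2+4rk-r+3k-8=L$, and the same bookkeeping applies to (\ref{spacing-rep32}); hence the largest position minus the smallest equals $L$ and no difference can exceed $L$.

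For the coverage I would set the base $c=4r+3$ and exploit the long arithmetic progression of step $c$ produced by the $2r+k-4$ repetitions of the base, which places all multiples $0,c,2c,\dots$ up to roughly $L$ into the co-array. The two wings then fill the gaps between consecutive multiples of $c$. First I would check that the left wing $\{0,r+1,r+2,\dots,2r+1\}$ already realizes every value in $[0,2r+1]$: the consecutive run of $1$'s supplies $1,\dots,r$, and subtracting $0$ from the block supplies $r+1,\dots,2r+1$. Next I would verify that the fine structure of the right wing (the trailing run of $1$'s together with the $(r+1)$ and $(2r+1)$ steps) supplies the complementary small differences needed to reach a full window of width $c$. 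Combining these small differences with cross-differences between a wing sensor and a suitably chosen base sensor $2r^2+6r+3+mc$ yields, for each admissible $m$, a block $mc+S$ of consecutive integers; the crux is that consecutive blocks $mc+S$ and $(m+1)c+S$ overlap, so their union is all of $[0,L]$.

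The two types differ only in the placement of the extra spacing $r+1$ — between the two trailing runs of $1$'s in Type~1 versus at the very end in Type~2 — so the left wing and base block are identical and only the right-wing offset bookkeeping changes; I would re-verify the small-difference completeness separately for each, the remainder of the argument being word-for-word the same. The hard part, exactly as in Appendix~\ref{appendix A}, is certifying that the covering leaves \emph{no} holes at the two ends and in the transition regions, where the windows $mc+S$ are truncated by the actual wing geometry rather than by an idealized period; this forces a by-hand check of the boundary windows near $0$ and near $L$ and a confirmation that the base block is long enough that interior windows always overlap. Since that length is smallest at $k=2$ (giving only $2r+k-4=2r-2$ base steps), I would treat this worst case explicitly, using $r\ge 3$ to guarantee the overlap condition, and then let the generic argument dispatch the remaining $2\le k\le 7$.
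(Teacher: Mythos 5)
Your overall strategy is exactly the paper's: the paper gives no separate proof of Lemma~\ref{hole-free2}, implicitly deferring to the covering argument of Appendix~\ref{appendix A} (symmetry about $0$, positive difference sets anchored at each sensor, partition of $[1,L]$ into windows of width equal to the base $c$, and a case-by-case check that each window --- including the truncated boundary ones --- is covered), and your aperture computation $(r+1)+r+(r+1)(2r+2)+(2r+k-4)(4r+3)+r(2r+1)+(r+1)+r=12r^2+4rk-r+3k-8$ is correct. Your identification of $k=2$ (only $2r+k-4=2r-2$ base repetitions) as the worst case for window overlap is also the right place to concentrate the boundary checks.

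However, there is a concrete error in your reading of the two constructions that would derail the Type~2 verification as you describe it. Comparing (\ref{spacing-rep31}) and (\ref{spacing-rep32}): both types contain the spacing $r+1$ between $(2r+1)^{r}$ and the trailing $1^{r}$; the spacing that moves is the \emph{leading} $r+1$ of Type~1, which in Type~2 is appended at the very end --- it does not sit ``between two trailing runs of $1$'s'' in Type~1. Consequently your claim that ``the left wing and base block are identical and only the right-wing offset bookkeeping changes'' is false: by (\ref{position-rep31}) the Type~1 left wing is $\{0,r+1,r+2,\ldots,2r+1\}$ and indeed realizes all of $[1,2r+1]$, but by (\ref{position-rep32}) the Type~2 left wing is $\{0,1,\ldots,r\}$ and realizes only $[1,r]$. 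For Type~2 the lags $r+1,\ldots,2r+1$ must instead be produced at the opposite end, as differences between the final sensor $L=12r^2+4kr-r+3k-8$ and the sensors $12r^2+4rk-3r+3k-9,\ldots,12r^2+4rk-2r+3k-9$ of the trailing $1$-run (the appended $r+1$ step supplies exactly this). Run ``word-for-word the same'' as you propose, the Type~2 argument would assert coverage of $[1,2r+1]$ by a wing that does not provide it; moreover all cross-differences into the base block shift by $r+1$ between the two types, so the window anchors change too. Your stated intention to ``re-verify the small-difference completeness separately for each'' would catch this, but as written the proposal's bookkeeping for Type~2 is wrong and must be redone with the $r+1$ placed at the tail rather than the head.
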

							
							
							\begin{theorem}\label{thm-DOF}
								Let $N=6r+k$ with $r\geq3$ and $2\leq k\leq 7$. The uDOFs for the $N$-sensor new $(4r+3)$-Type arrays (\ref{position-rep31}) and (\ref{position-rep32}) are:
								\begin{align*}
									\text{uDOFs}^{3,1}_{New}=\text{uDOFs}^{3,2}_{New}&=\frac{2N^2-N-2k^2+19k-45}{3}.
								\end{align*}
								Thus its redundancy ratio is: $R^{3,1}_{New}=R^{3,2}_{New}<1.5$, and {$R^{3,1}_{New}=R^{3,2}_{New}=1.5$ when $N\rightarrow\infty$.}	
							\end{theorem}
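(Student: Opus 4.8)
The plan is to mirror the derivation used for the $(4r)$-Type arrays, because the present claim has exactly the same shape: once the co-array is known to be a gap-free ULA, both the uDOF count and the redundancy ratio reduce to elementary algebra in $r$ and $k$.

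First I would invoke Lemma~\ref{hole-free2}, which asserts that ${\cal D}_{New}=[-L,L]$ with $L=12r^2+4rk-r+3k-8$ for both arrays (\ref{position-rep31}) and (\ref{position-rep32}). Since the entire difference co-array is already a single consecutive ULA, its maximal central ULA segment ${\cal U}$ coincides with ${\cal D}_{New}$, so that
\begin{align}
\text{uDOFs}^{3,1}_{New}=\text{uDOFs}^{3,2}_{New}=\abs{{\cal U}}=2L+1=24r^2+8rk-2r+6k-15.
\end{align}
Substituting $N=6r+k$, and hence $2N^2=72r^2+24rk+2k^2$, into $\frac{2N^2-N-2k^2+19k-45}{3}$ and simplifying returns precisely $24r^2+8rk-2r+6k-15$, which confirms the stated closed form.

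Next I would compute the redundancy ratio $R=N(N-1)/(2L)$. Expanding both quantities in $r$ and $k$ gives $N(N-1)=36r^2+12rk-6r+k^2-k$ and $2L=24r^2+8rk-2r+6k-16$. To obtain the strict bound $R<3/2$ I would compare $\tfrac{3}{2}\,(2L)=3L$ with $N(N-1)$; their difference collapses to
\begin{align}
3L-N(N-1)=3r-k^2+10k-24=3r-(k-4)(k-6).
\end{align}
The one point that demands care is showing this expression is strictly positive over the whole parameter window. For $2\leq k\leq 7$ the product $(k-4)(k-6)$ attains its maximum $8$ at $k=2$, whereas the hypothesis $r\geq 3$ forces $3r\geq 9>8$; hence $3L>N(N-1)$ and $R<3/2$ for every admissible pair $(r,k)$. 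Finally, dividing the numerator and denominator of $R$ by $r^2$ and letting $N\to\infty$ (equivalently $r\to\infty$) drives $R$ to $36/24=3/2$, which gives the asymptotic equality and completes the argument.
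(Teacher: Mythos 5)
Your proposal is correct and follows essentially the same route as the paper, which omits this proof precisely because it mirrors the $(4r)$-Type case: invoke the hole-free lemma, count $\text{uDOFs}=2L+1$ with $L=12r^2+4rk-r+3k-8$, and bound $R=N(N-1)/(2L)$ by elementary algebra. Your factorization $3L-N(N-1)=3r-(k-4)(k-6)$, maximized at $k=2$ against $3r\geq 9$, is a slightly cleaner way to certify the strict inequality than the paper's rearrangement $R=\frac{3}{2}-\frac{\cdot}{2L}$ with a parameter-range check, but it is the same argument in substance.
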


							
							\begin{theorem}\label{mutual2}
								Let $N=6r+k$ with $r\geq3$ and $2\leq k\leq 7$. For the $N$-sensor new arrays (\ref{position-rep31}) and (\ref{position-rep32}), their same weight functions $\omega(m)$ at $m=1,2,3$ are
								\begin{align}\label{new-mutual31}
									\omega(1)=2r,~\omega(2)=2r-2,~\omega(3)=2r-4.
								\end{align}	
							\end{theorem}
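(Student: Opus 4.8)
The plan is to reduce each weight function to a counting problem on the spacing array, exactly as in the proof of Theorem~\ref{mutual}. Recall that for $d>0$ the weight $\omega(d)$ equals the number of consecutive segments of the spacing array whose entries sum to $d$, since $s_j-s_i=\sum_{\ell=i}^{j-1}d_\ell$ for $j>i$. The crucial structural observation is that, for $r\geq 3$, every spacing in $\mathbb{D}^{3,1}_{New}$ other than $1$ is at least $4$: indeed $r+1\geq 4$, $2r+1\geq 7$, $2r+2\geq 8$ and $4r+3\geq 15$. Hence any consecutive segment that contains a non-unit spacing already has sum $\geq 4$, so every segment summing to $1$, $2$ or $3$ must lie entirely inside a maximal run of the spacing $1$.

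First I would handle $\omega(1)$: a segment of sum $1$ is a single entry equal to $1$, so $\omega(1)$ is the total number of $1$'s. In $\mathbb{D}^{3,1}_{New}=\{r+1,1^{r},(2r+2)^{r+1},(4r+3)^{2r+k-4},(2r+1)^{r},r+1,1^{r}\}$ there are exactly two maximal runs $1^{r}$, giving $\omega(1)=2r$. Next, a segment of sum $2$ must be two adjacent $1$'s within one such run, and a run $1^{r}$ contributes $r-1$ adjacent pairs, so $\omega(2)=2(r-1)=2r-2$. Finally, a segment of sum $3$ must be three consecutive $1$'s, and each run $1^{r}$ contributes $r-2$ of them (a nonnegative count since $r\geq 3$), whence $\omega(3)=2(r-2)=2r-4$. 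For the $(4r+3)$-Type 2 array the enumeration is identical: $\mathbb{D}^{3,2}_{New}$ again has exactly two maximal runs $1^{r}$, while the remaining entries $r+1$, $2r+1$, $2r+2$, $4r+3$ are all $\geq 4$, so the same three counts apply and yield the common values in (\ref{new-mutual31}).

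There is no deep difficulty in this argument; the only point requiring care is to confirm that the isolated separator spacings $r+1$ adjacent to the runs of $1$'s never combine with neighbouring $1$'s to produce a new small sum, and that no single non-unit spacing accidentally equals $2$ or $3$. Both facts are guaranteed uniformly by the standing hypothesis $r\geq 3$, which forces every non-unit spacing to be at least $4$. Consequently the proof is a direct enumeration of short runs of the spacing $1$, and the agreement of the weight functions for the two types follows because both spacing arrays carry the same local structure (two maximal runs $1^{r}$, every other entry $\geq 4$) near their boundaries.
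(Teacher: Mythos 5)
Your proposal is correct and takes essentially the same approach as the paper: the paper gives no explicit proof of Theorem~\ref{mutual2}, stating only that it is obtained by the segment-counting method of Theorem~\ref{mutual}, and your enumeration of consecutive segments of the spacing arrays summing to $1$, $2$ and $3$ is precisely that method. Your explicit observation that under $r\geq 3$ every non-unit spacing ($r+1$, $2r+1$, $2r+2$, $4r+3$) is at least $4$, so that only the two maximal runs $1^{r}$ in each of ${\mathbb D}^{3,1}_{New}$ and ${\mathbb D}^{3,2}_{New}$ can contribute, correctly supplies the detail the paper leaves implicit and yields $\omega(1)=2r$, $\omega(2)=2r-2$, $\omega(3)=2r-4$ for both types.
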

								
								
								\begin{remark}
									Compared with the known $(4r+3)$-Type-93 array, although our arrays have the same values for the first weight functions, but have a lower uDOFs. This is because the $(4r+3)$-Type-93 array may provide the largest uDOFs for all redundancy arrays with more than 8 sensors, since all known MRAs with more than 8 sensors coincide with this form, except for 13 sensors. We still list our new arrays with the same type here for completeness, perhaps they may provide alternatives in special situations such as high spatial resolution synthetic aperture radiometers \cite{2010Dong}.

								\end{remark}

								\section{Numerical Examples} \label{numerical examples}
								
								In this section, {in order to illustrate the advantages of the proposed array, the weight function, coupling leakage, mutual coupling matrix and DOA estimation performance of the array are compared through specific numerical examples.}
								Due to limitations in the length of the article, six types of LRAs are chosen for simulation comparisons: the second-order super nested array (SNA) \cite{2016Liu-1}, the MISC array \cite{2019Zheng}, the $(4r+3)$-Type-93 array in \cite{1993Linebarger}, the $(4r)$-Type-93 array in \cite{1993Linebarger}, the $(4r)$-Type-10 array in  \cite{2010Dong}, and our new $(4r)$-Type 1 array  defined in (\ref{spacing-rep01}).
								{Co-array MUSIC algorithm is used to execute DOA estimation so that the information on DCA can be fully utilized.}
								To evaluate the DOA estimation performance of the sparse arrays, the root-mean-square error (RMSE) of the estimated normalized DOAs is shown as:
								\begin{equation}\label{eq:ss21}
									\text{RMSE}=\sqrt{\frac{1}{QK}\sum_{q=1}^{Q}\sum_{k=1}^{K}(\tilde{\theta}_{k}^{q}-\bar{\theta}_{k})^{2}}
								\end{equation}
								where $Q$ is the number of independent trials, and $\tilde{\theta}_{k}^{q}$ is the estimate of $\bar{\theta}_{k}$ for the $k$th trial. {Moreover, we adapt the mutual coupling model in \cite{2016Liu-1}, which is characterized by $c_{1}=0.3e^{j\pi/3}, B=100$ and $c_{\ell}=c_{1}e^{-j(\ell-1)\pi/8}/l$ for $2\leq \ell\leq B$.}

								\subsection{{Weight Functions, Coupling Leakage and Mutual Coupling Matrices}}
								{In this subsection, we compare
									their weight functions, coupling leakage and the mutual coupling matrices.}
								Three different cases are considered  where the number of sensors is $18$, $23$ and $36$, respectively.

								{For a given array element number $N$, we provides a summary of the weight function and the mutual coupling leakage $L_c$ in Table \ref{table:weightandcoupling},} where only the weight functions $\omega(1)$, $\omega(2)$ and $\omega(3)$ are shown, since they provide major impact on the mutual coupling effect. It is shown that the $(4r+3)$-Type-93 array exhibits the largest $L_c$, due to the dense ULA in its configurations, which results in severe mutual coupling effect.
								The rest five LRAs have relatively small $L_c$ because they all greatly reduce the numbers of small inter-sensor spacings. Generally, the value $L_c$ decreases as the sensor number increases in all the $(4r)$-Type arrays, and their $L_c$s are lower than those of SNA and MISC array when $N=36$.
								Especially, our new $(4r)$-Type array attains the smallest $L_c$ compared with other LRAs, which implies that our array is the best to resist mutual coupling effect.
								
								\begin{table*}[htbp]
									\caption{THE WEIGHT FUNCTIONS AND MUTUAL COUPLING LEAKAGE FOR SIX KINDS OF LRAs\label{table:weightandcoupling}}
									\centering
									\small{
										\begin{tabular}{ccccccccc} 
											\hline
											Array  & $(4r+3)$-Type-93 & $(4r)$-Type-93  & $(4r)$-Type-10  & SNA  & MISC & New $(4r)$-Type 1\\
											\hline
											18 sensors & &  &    &      & &  &  & \\\hline
											$w(1)$ & 4 & 2 & 3    &  1  & 1 & 1 \\
											$w(2)$ & 2 & 5 & 5    &  8  & 6 & 4 \\
											$w(3)$ & 1 & 2 & 2    &  1  &  1 & 2 \\
											$L_{c}$ & 0.2209& 0.1993 & 0.2200     &  0.1979   & 0.1802 & 0.1659 \\\hline
											23 sensors & & &       & &  &  &  \\\hline
											$w(1)$ &6& 2& 3    &  1    & 1 & 1 \\
											$w(2)$ &4&7 & 7    &  10  & 8 & 6 \\
											$w(3)$ &2& 2& 2    &  1    & 1 & 2 \\
											$L_{c}$ &0.2383& 0.1902& 0.2068     &  0.1910  & 0.1762 & 0.1612 \\\hline
											36 sensors &&  &      &  &   &  &  \\\hline
											$w(1)$ &10& 2& 3    &  2  & 1 & 1 \\
											$w(2)$ &8& 11 & 11    &  15  & 16 & 10 \\
											$w(3)$ &6&2 & 2    &  4  & 1 & 2 \\
											$L_{c}$ &0.2499&0.1737  & 0.1852     &  0.1988  & 0.1856 & 0.1524 \\\hline
									\end{tabular}}
								\end{table*}

								Fig. \ref{fig:weightfunction} and
								Fig. \ref{fig:couplingmatrices} give visual representations of the weight functions, or the magnitudes of the mutual coupling matrices for the six LRAs, respectively. In Fig. \ref{fig:weightfunction}, the heights of these line segments represent the sizes of weight functions $\omega(l)$ with $-20\leq l\leq 20$, which is symmetrical about $l=0$. It is easy to see that our new array has the lowest line segments at $l=1,2$, compared with other LRAs. In Fig. \ref{fig:couplingmatrices}, the color of blocks represents the energy in the corresponding entry, where the less color implies less energy. Thus we want the light-colored blocks to be as little as possible. From Fig. \ref{fig:couplingmatrices} we can still demonstrate the superiority of our new array.

								\begin{figure*}[htbp]
									\centering
									\begin{minipage}[t]{0.31\linewidth}
										\centering
										\includegraphics[width=1.8in]{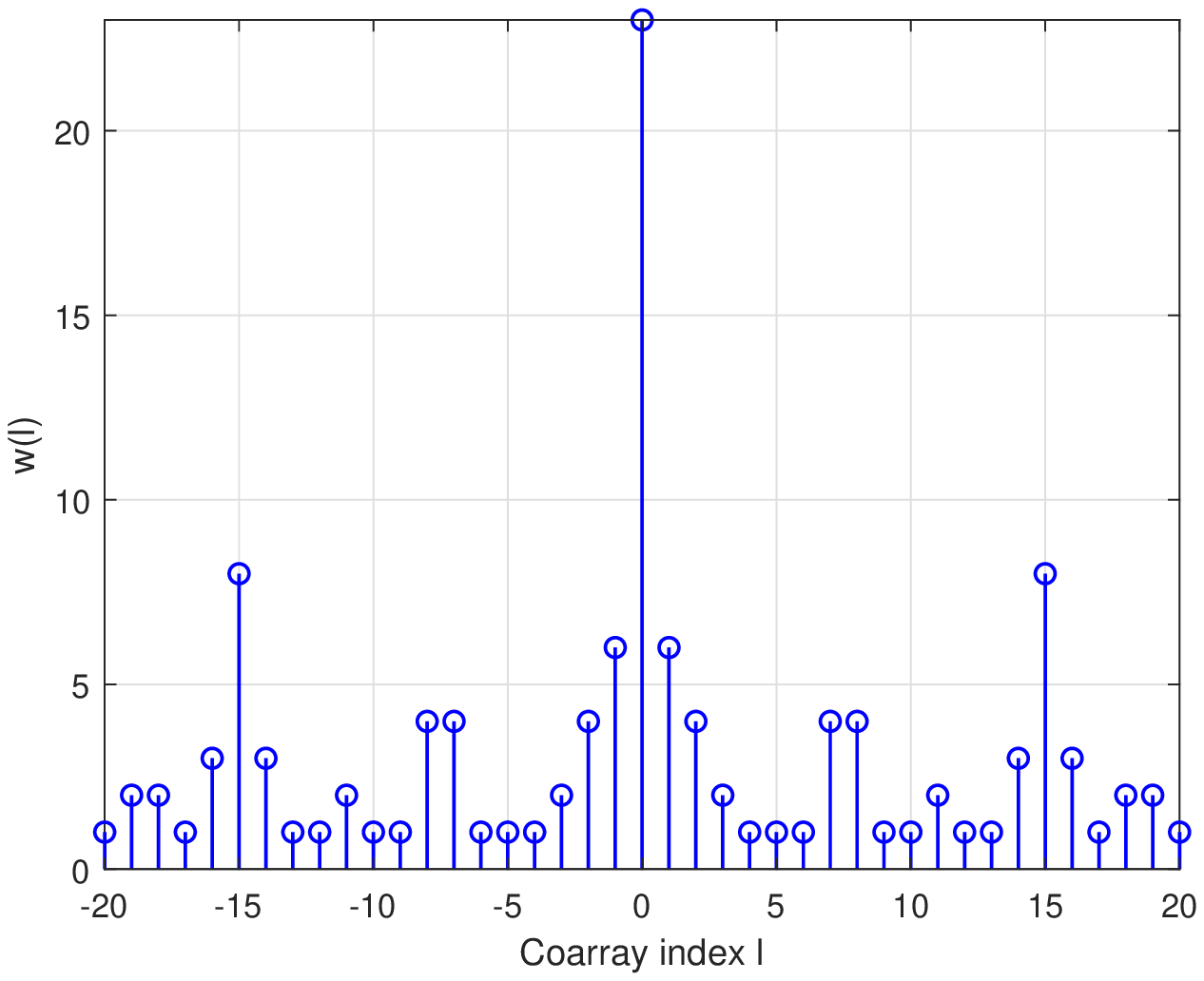}
										
										{(a)}
										\label{fig:side:b}
									\end{minipage}
									\begin{minipage}[t]{0.31\linewidth}
										\centering
										\includegraphics[width=1.8in]{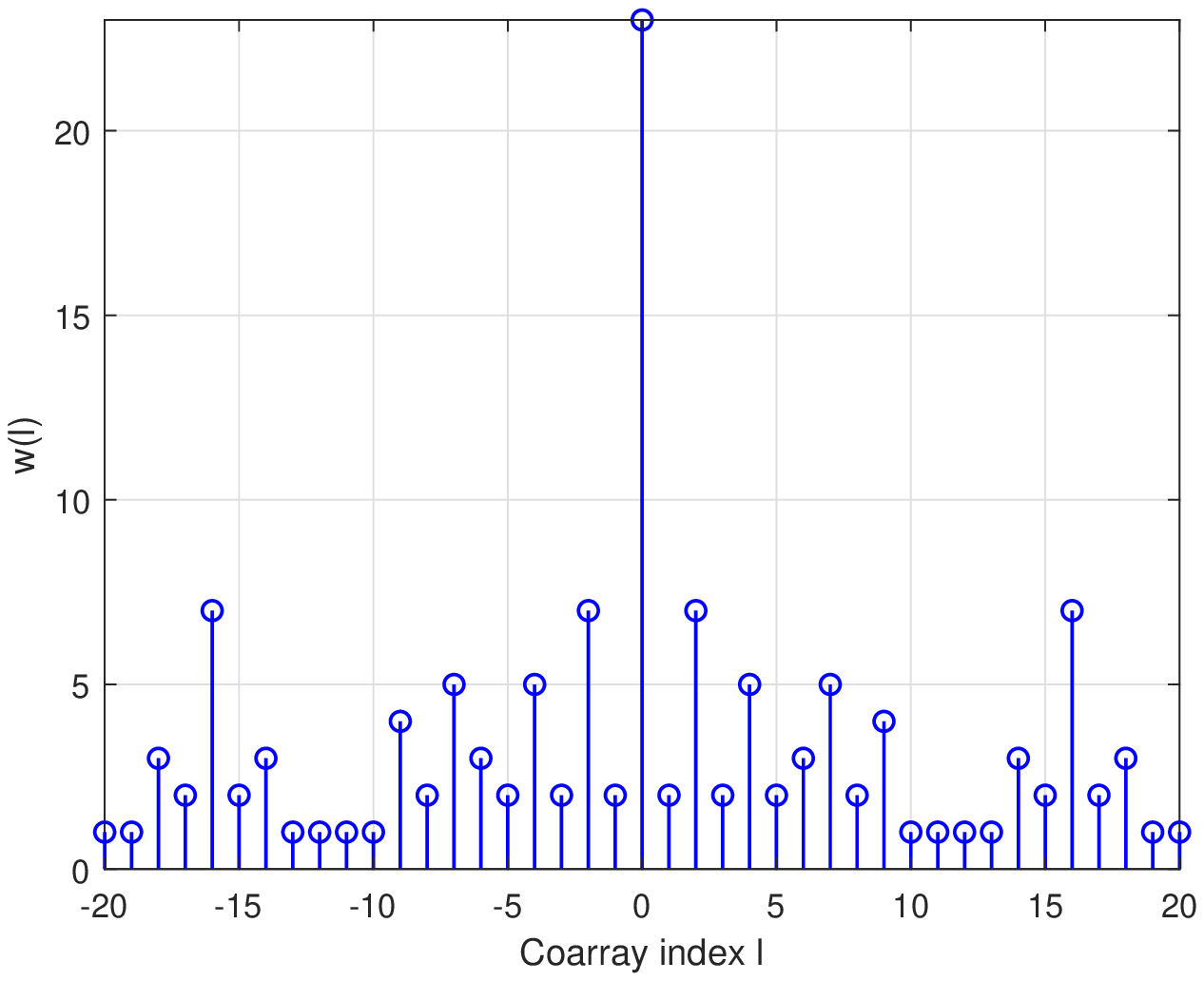}
										
										{(b)}
										\label{fig:side:b}
									\end{minipage}%
									\begin{minipage}[t]{0.31\linewidth}
										\centering
										\includegraphics[width=1.8in]{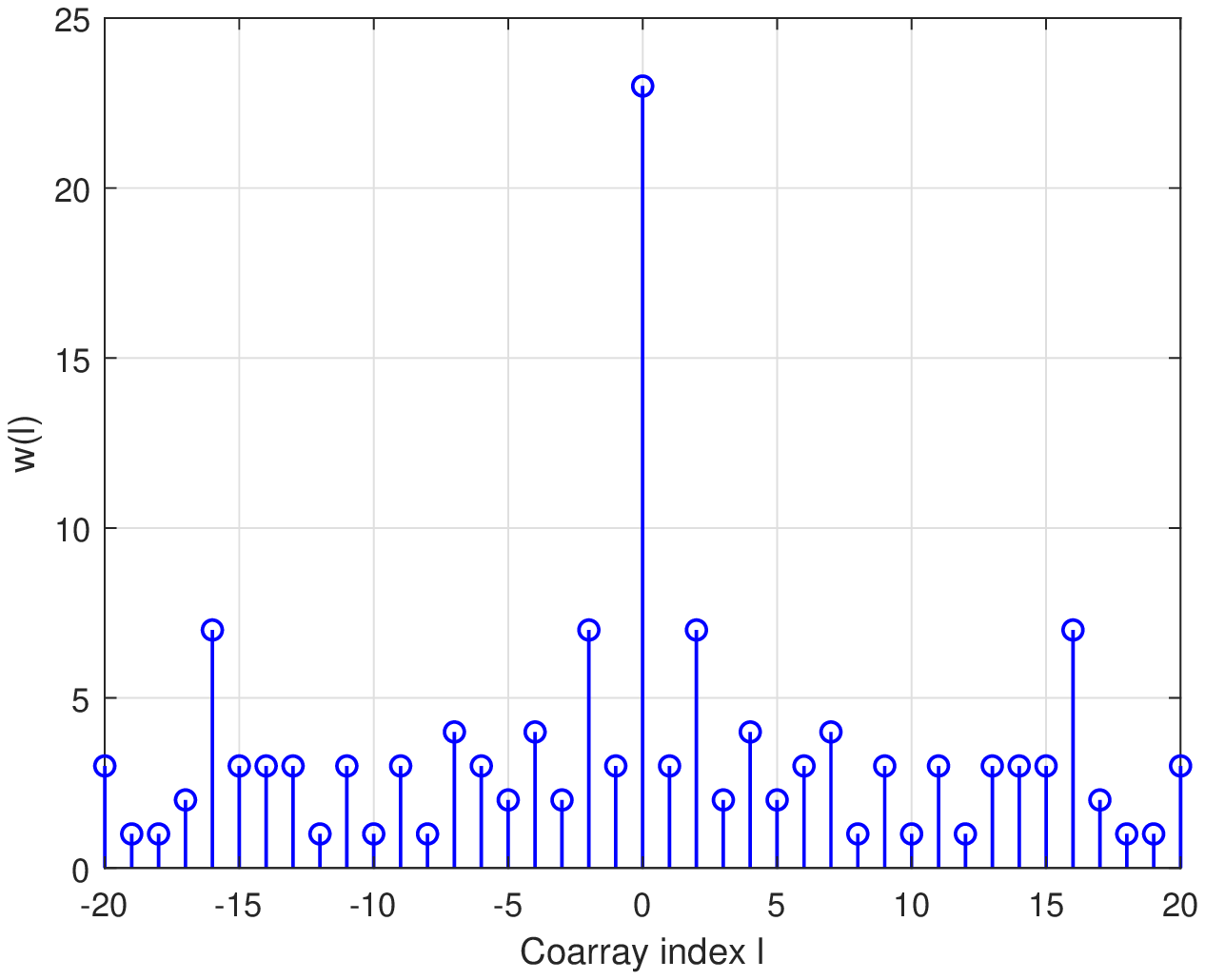}
										
										{(c)}
										\label{fig:side:b}
									\end{minipage}
									
									\begin{minipage}[t]{0.31\linewidth}
										\centering
										\includegraphics[width=1.8in]{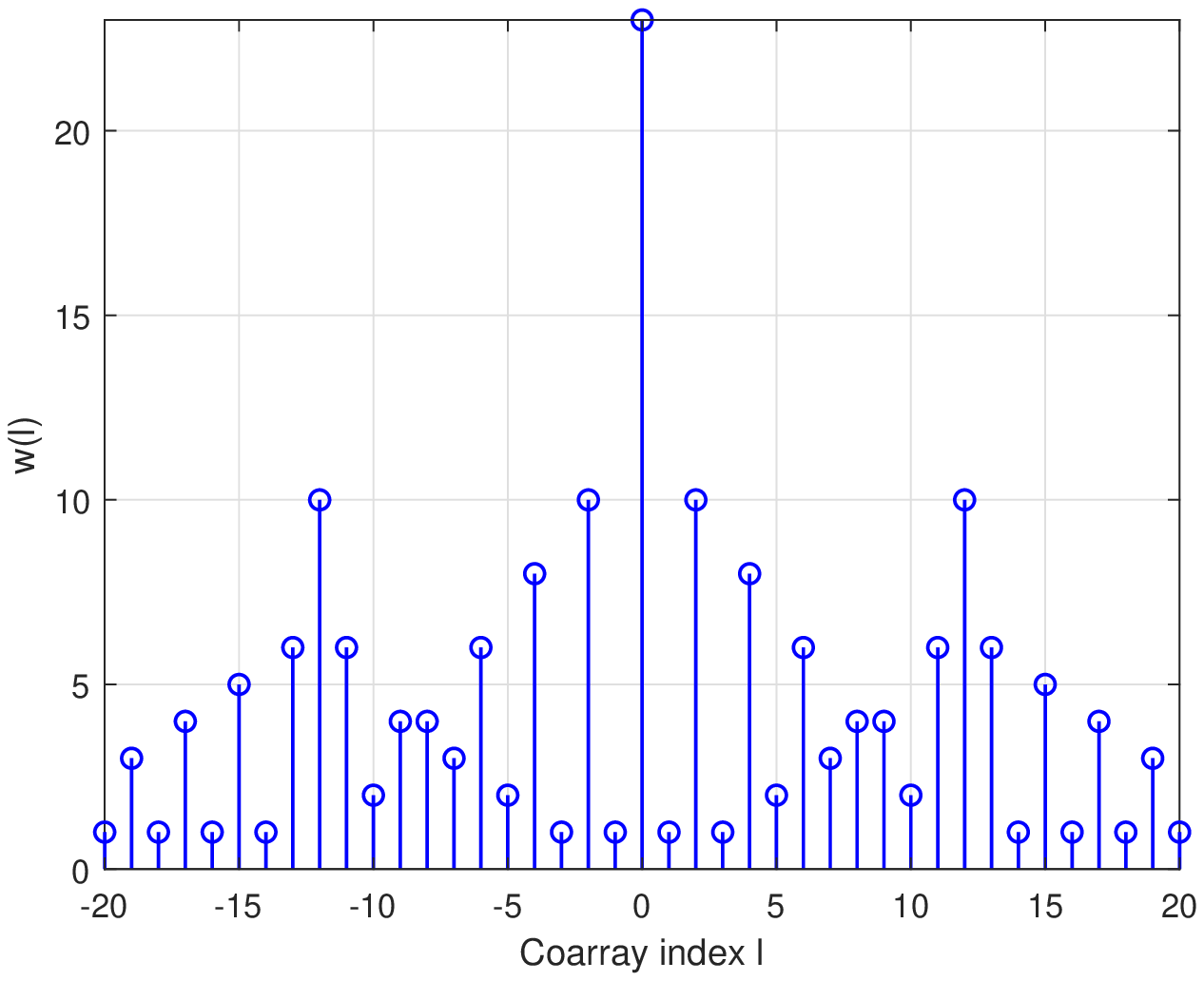}
										
										{(d)}
										\label{fig:side:b}
									\end{minipage}
									\begin{minipage}[t]{0.31\linewidth}
										\centering
										\includegraphics[width=1.8in]{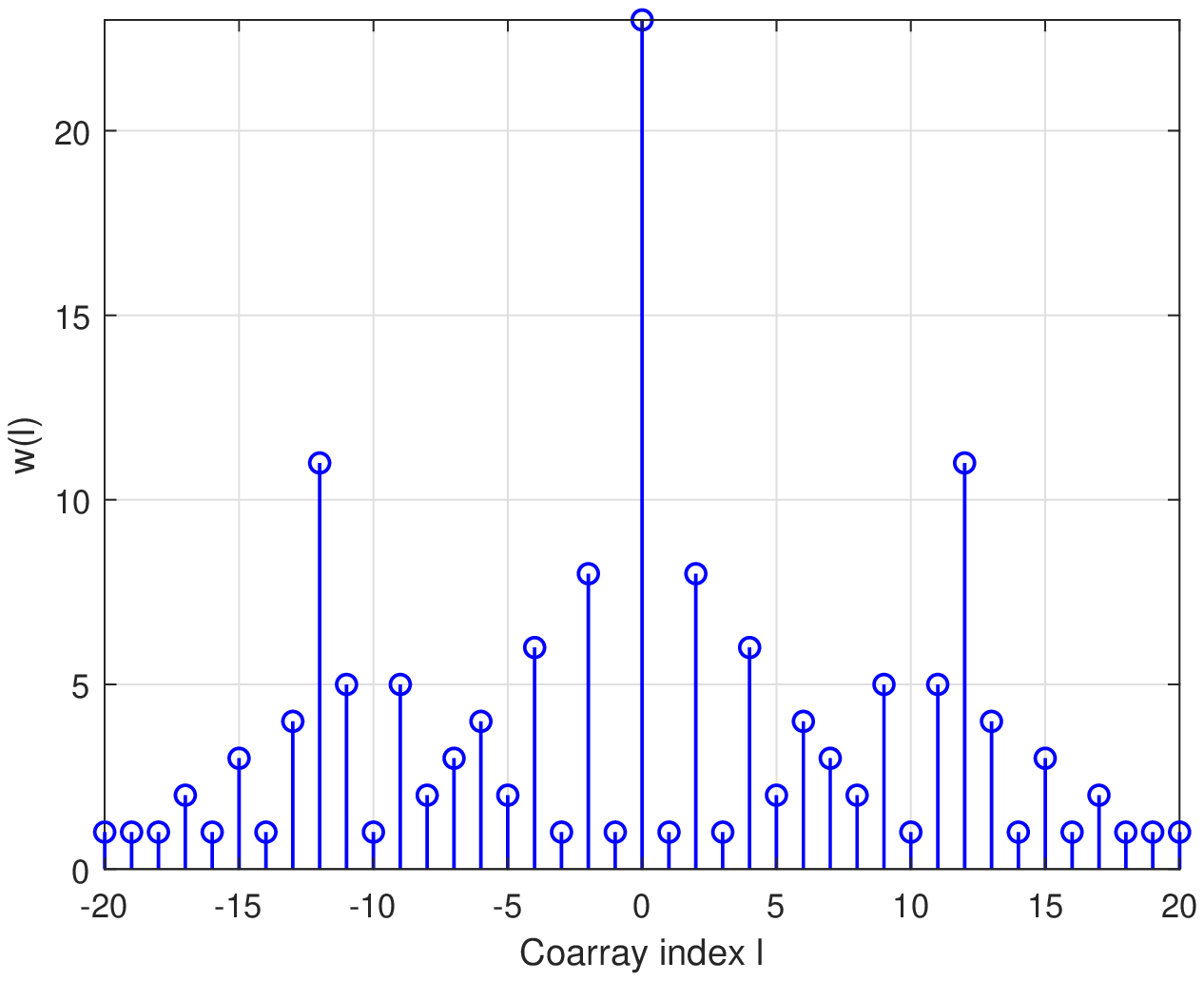}
										
										{(e)}
										\label{fig:side:b}
									\end{minipage}
									\begin{minipage}[t]{0.31\linewidth}
										\centering
										\includegraphics[width=1.8in]{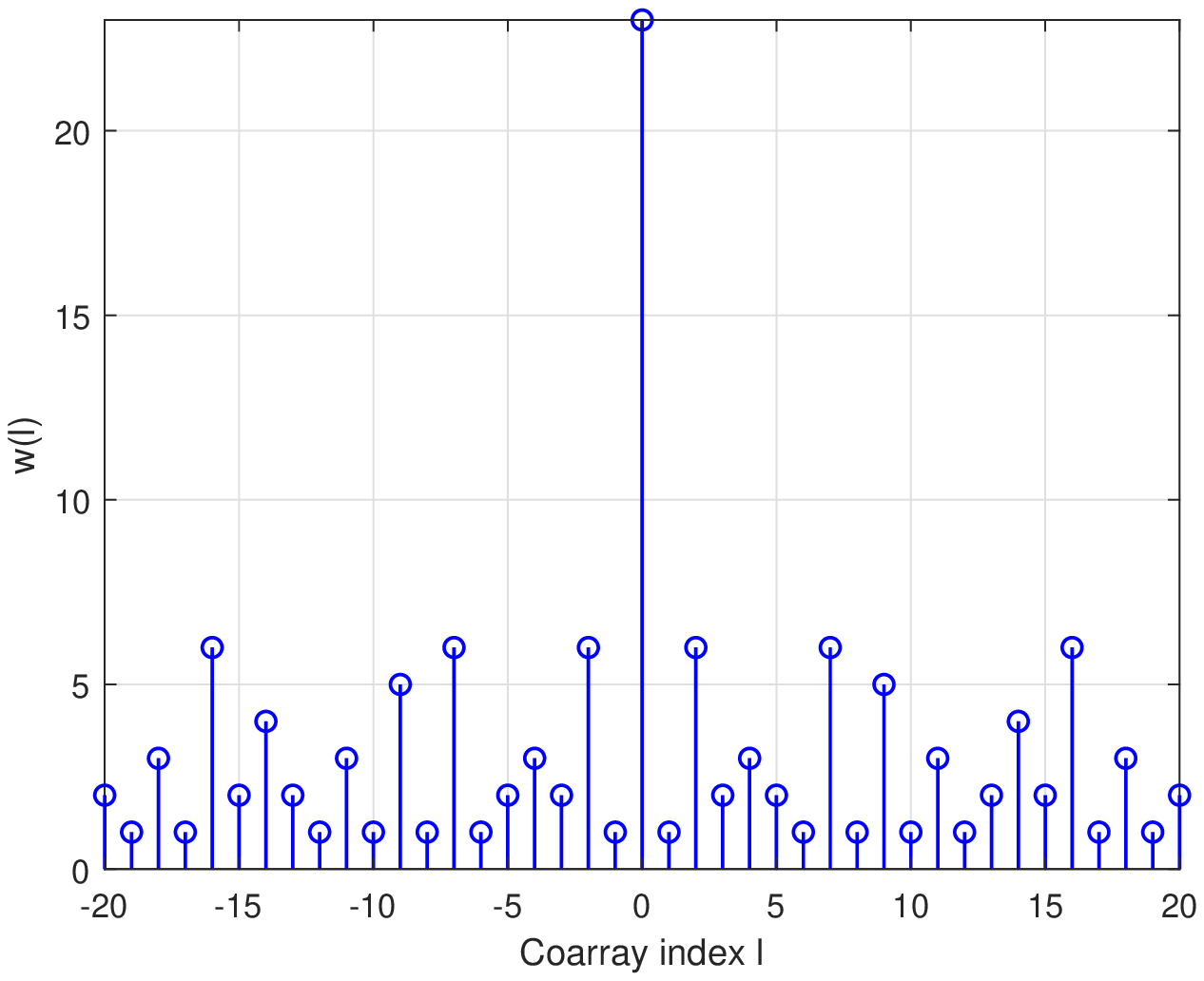}
										
										{(f)}
										\label{fig:side:c}
									\end{minipage}
									
									\caption{The weight functions for six kinds of $23$-element LRAs. (a) $(4r+3)$-Type-93. (b) $(4r)$-Type-93. (c) $(4r)$-Type-10. (d) SNA. (e) MISC. (f) New $(4r)$-Type 1.}
									\label{fig:weightfunction}
								\end{figure*}

								\begin{figure*}[htbp]
									\centering
									\begin{minipage}[t]{0.31\linewidth}
										\centering
										\includegraphics[width=1.8in]{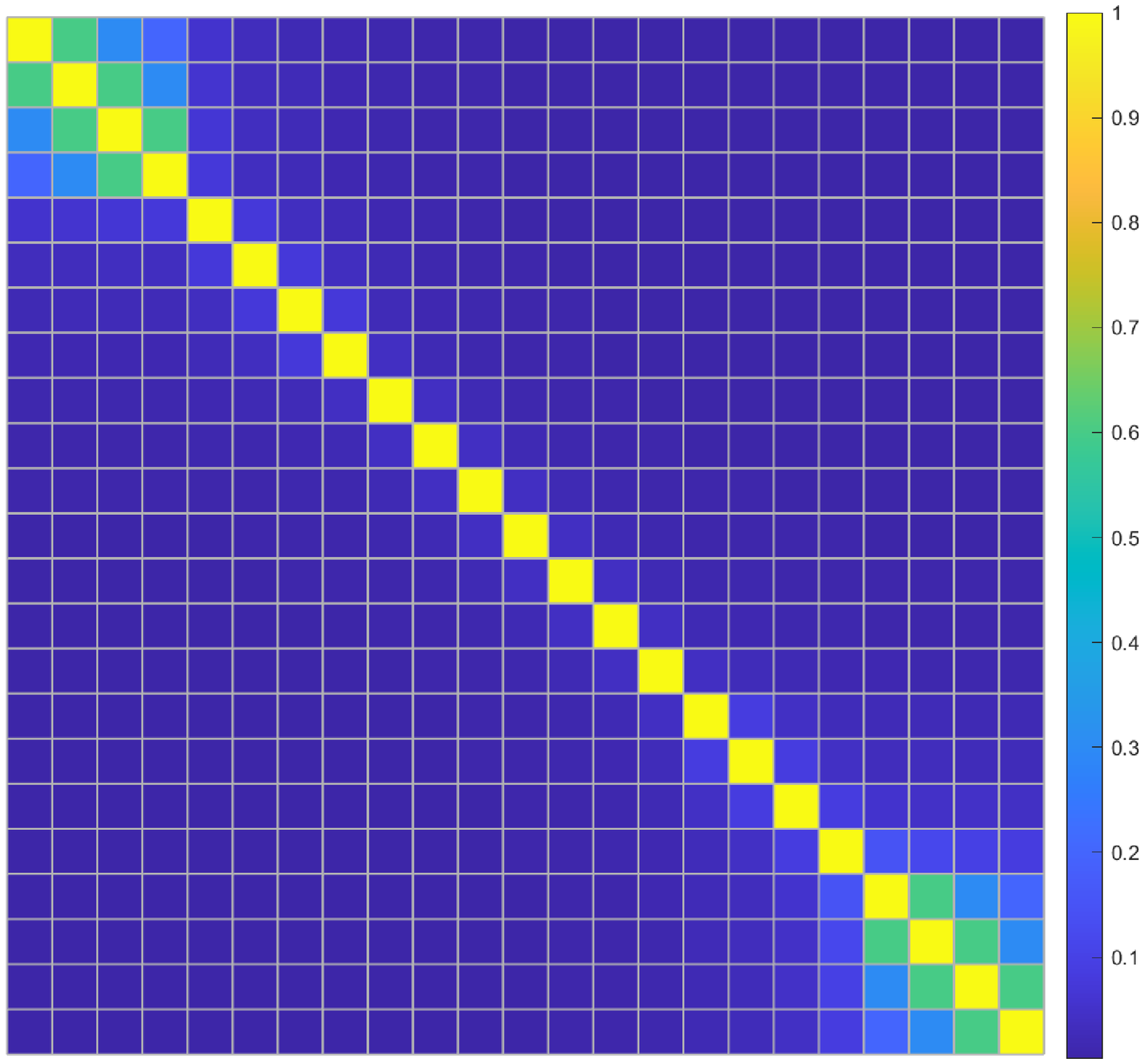}
										
										{(a)}
										\label{fig:side:b}
									\end{minipage}
									\begin{minipage}[t]{0.31\linewidth}
										\centering
										\includegraphics[width=1.8in]{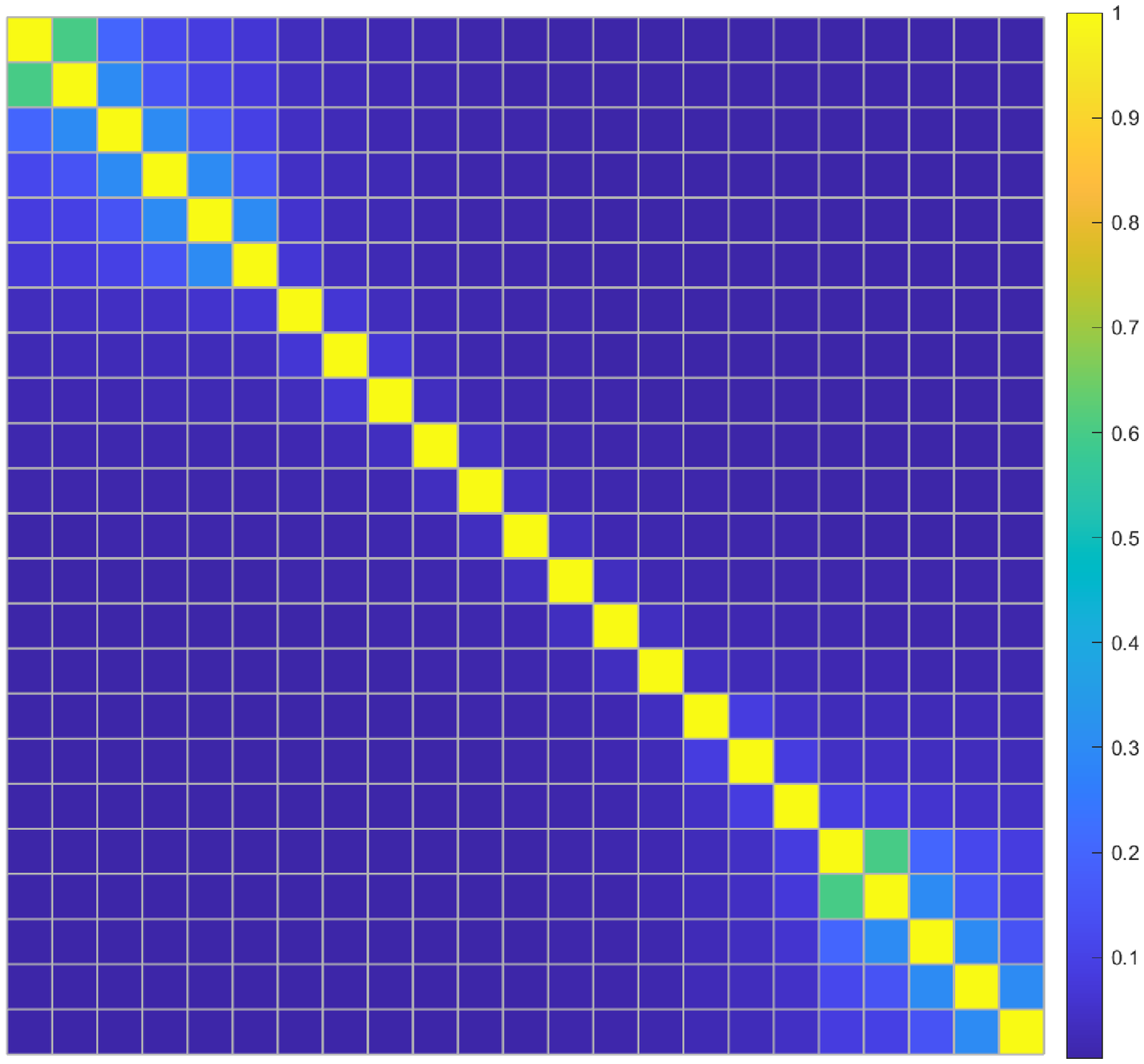}
										
										{(b)}
										\label{fig:side:b}
									\end{minipage}
									\begin{minipage}[t]{0.31\linewidth}
										\centering
										\includegraphics[width=1.8in]{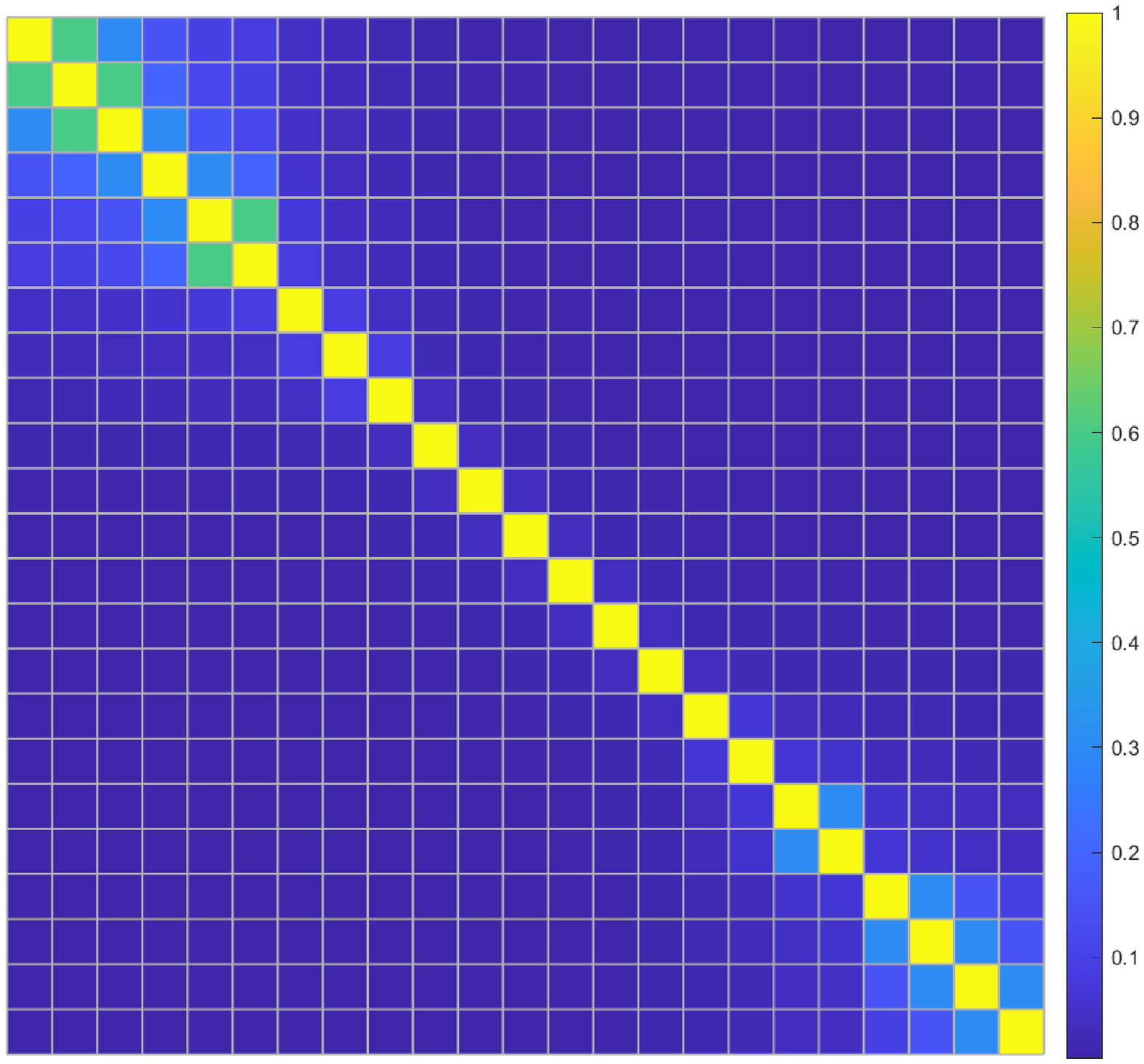}
										
										{(c)}
										\label{fig:side:a}
									\end{minipage}%
									
									\begin{minipage}[t]{0.31\linewidth}
										\centering
										\includegraphics[width=1.8in]{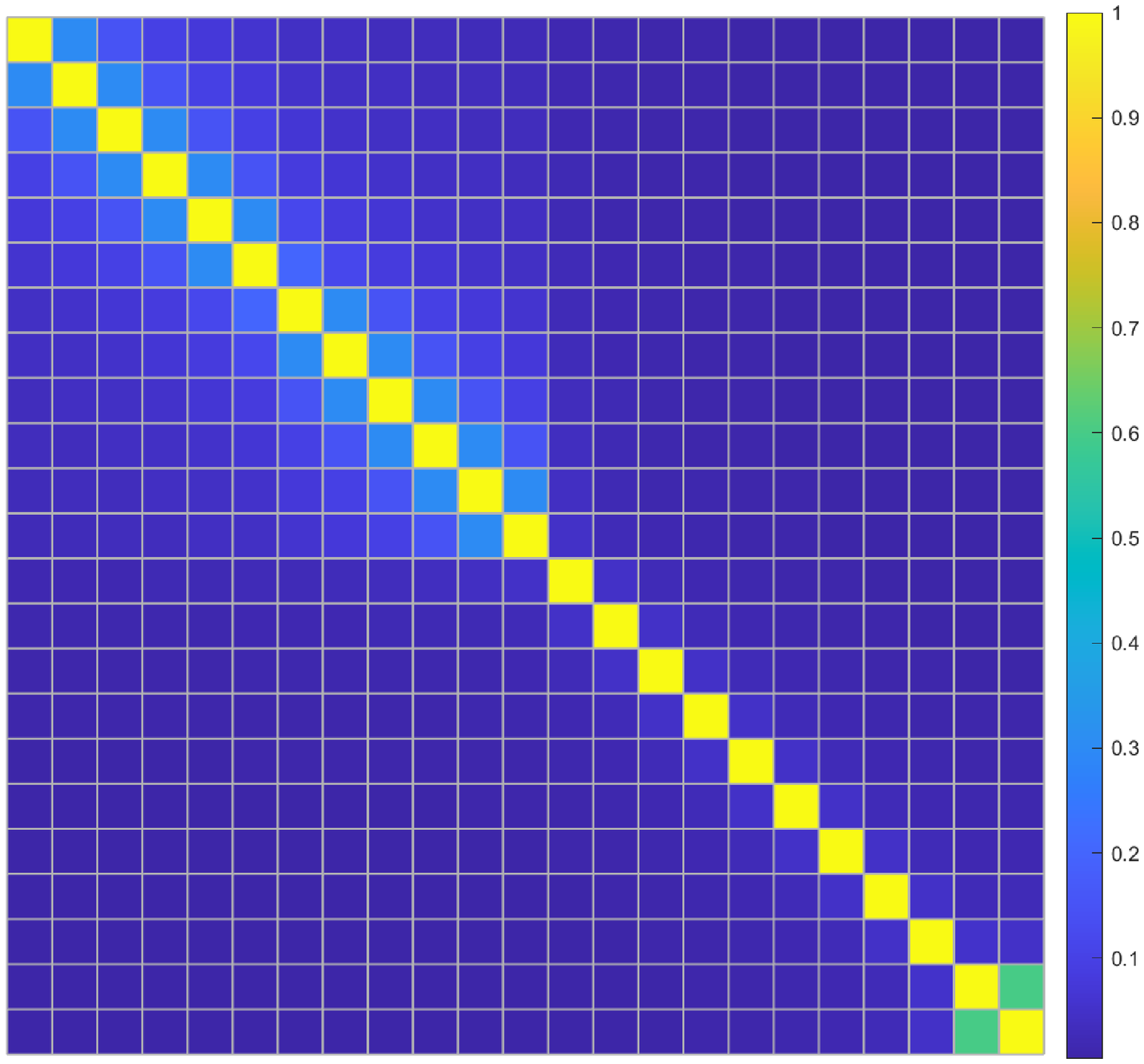}
										
										{(d)}
										\label{fig:side:b}
									\end{minipage}
									\begin{minipage}[t]{0.31\linewidth}
										\centering
										\includegraphics[width=1.8in]{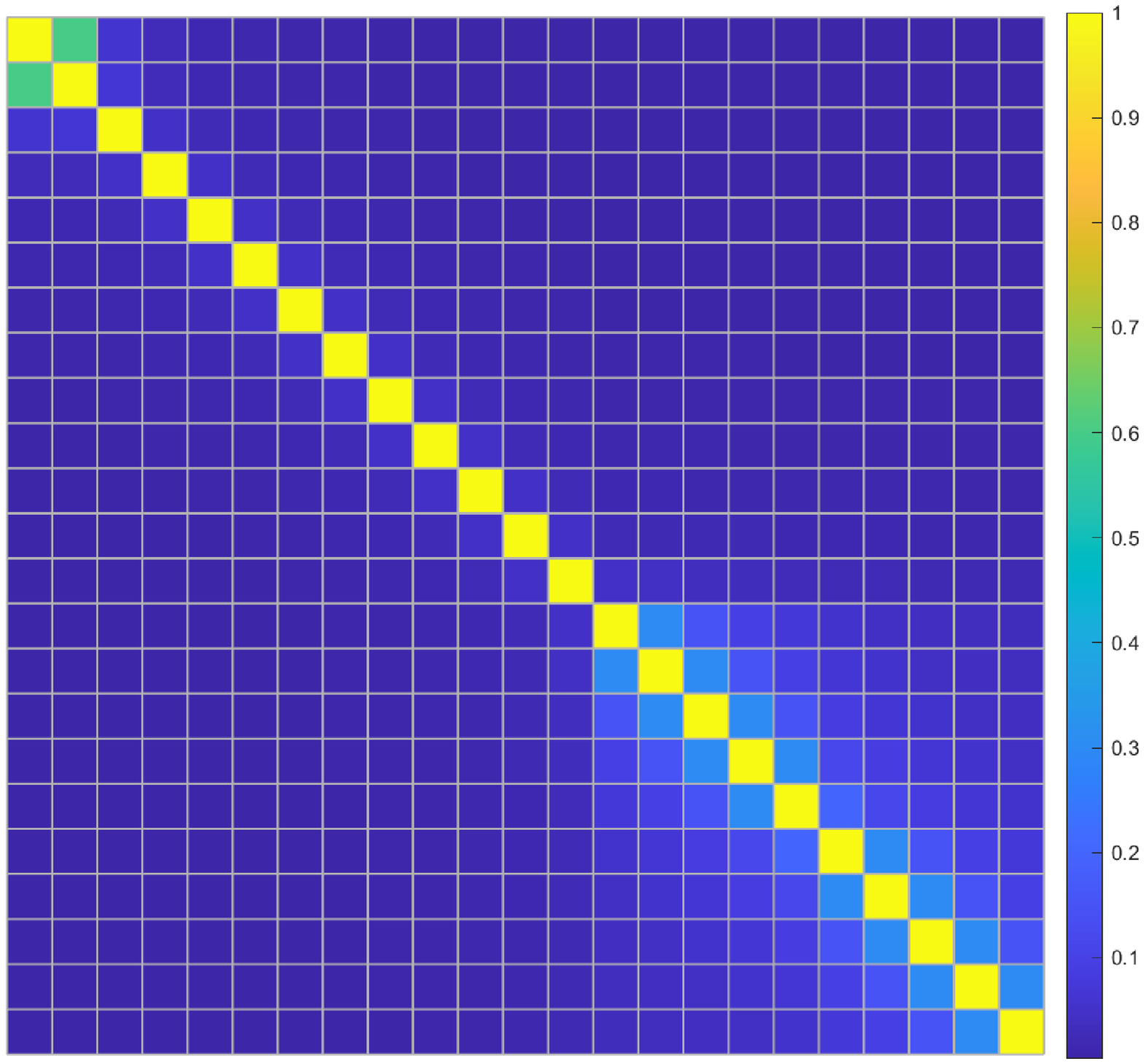}
										
										{(e)}
										\label{fig:side:b}
									\end{minipage}
									\begin{minipage}[t]{0.31\linewidth}
										\centering
										\includegraphics[width=1.8in]{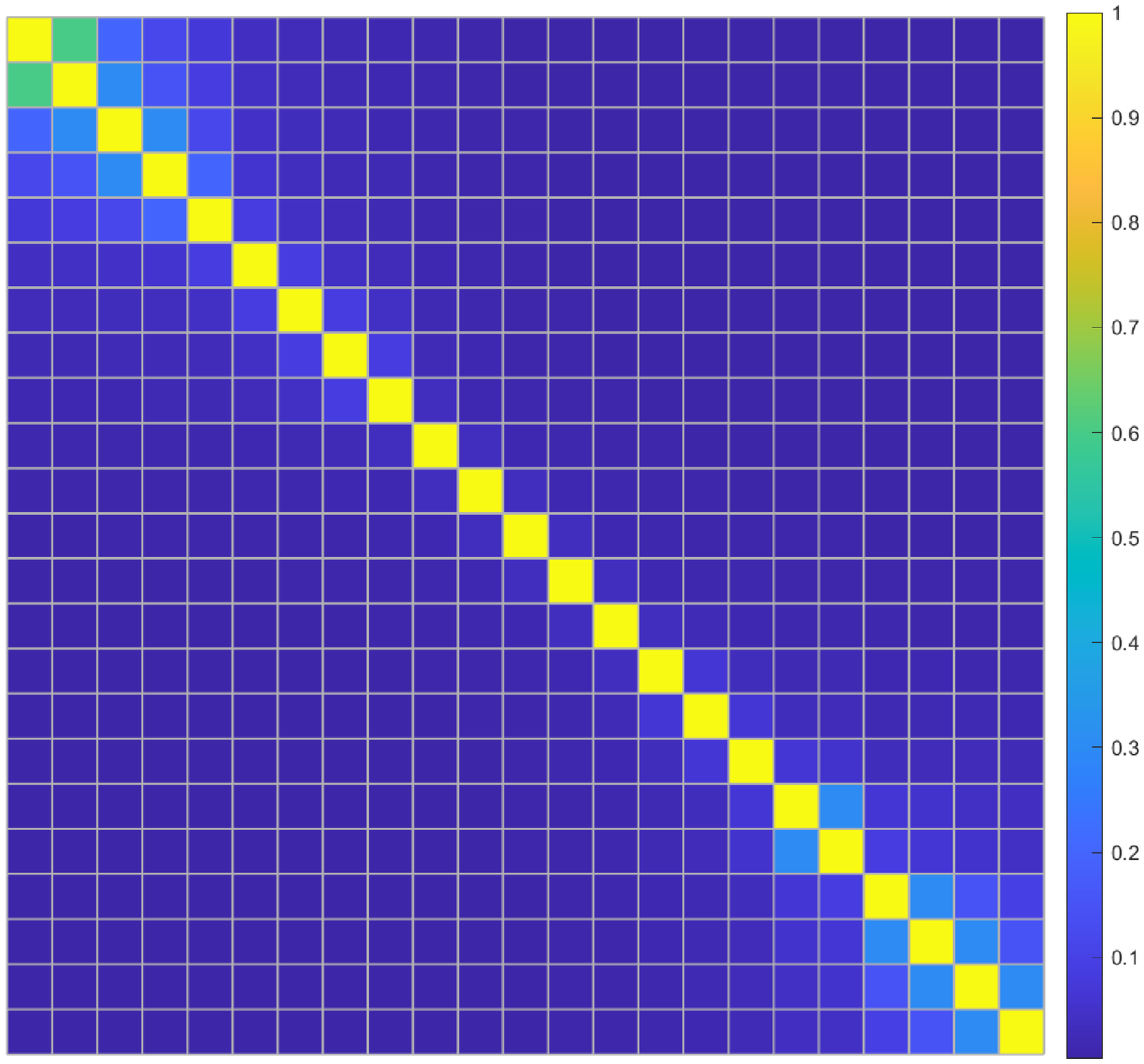}
										
										{(f)}
										\label{fig:side:c}
									\end{minipage}
									
									\caption{The magnitudes of the mutual coupling matrices for six kinds of $23$-element LRAs. (a) $(4r+3)$-Type-93. (b) $(4r)$-Type-93. (c) $(4r)$-Type-10. (d) SNA. (e) MISC. (f) New $(4r)$-Type 1.}
									\label{fig:couplingmatrices}
								\end{figure*}

								\subsection{DOA Estimation in the Presence of Mutual Coupling}
								
								In this subsection, we compare the DOA estimation performance of these six LRAs in the presence of mutual coupling. The same sensor number 23 is used for all arrays, and the sensor position sets for these arrays are given in Table \ref{table:sensor position}.
								
								\begin{table*}[htbp]
									\centering
										\caption{SENSOR POSITION SUMMARY OF $23$ ELEMENTS LRAs\label{table:sensor position}}
										\begin{tabular}{|c|c|}\hline
											& Sensor Position \cr\hline
											SNA & \{0,2,4,6,8,10,13,15,17,19,21,23,35,47,59,71,83,95,107,119,131,142,143\} \cr\hline
											MISC & \{0,1,10,22,34,46,58,70,82,94,106,118,130,142,144,146,148,150,153,155,157,159,161\} \cr\hline
											$(4r+3)$-Type-93 & \{0,1,2,3,11,19,27,35,50,65,80,95,110,125,140,155,162,169,176,180,181,182,183\} \cr\hline
											$(4r)$-Type-93 & \{0,1,3,5,7,9,18,27,36,52,68,84,100,116,132,148,155,162,169,170,172,174,176\} \cr\hline
											$(4r)$-Type-10 & \{0,1,2,4,6,7,14,21,28,44,60,76,92,108,124,140,149,158,160,169,171,173,175\} \cr\hline
											New $(4r)$-Type 1 & \{ 0,1,3,5,8,15,22,29,36,52,68,84,100,116,132,141,150,159,161,168,170,172,174\}  \cr\hline
										\end{tabular}
								\end{table*}

								
								\subsubsection{MUSIC Spectra}

								Fig. $\ref{fig:MUSICspectracoupling}$ shows the MUSIC spectra for the six kinds of $23$-element arrays when $K=40$ sources are uniformly located at $\bar{\theta}_k=-0.45+0.9(k-1)/39,1\leq k\leq 40.$ The SNR is fixed at $0$ dB and the number of snapshots is set as $T=1000.$ Observe from Fig. \ref{fig:MUSICspectracoupling}, only the $(4r)$-Type-93 array and our new array $(4r)$-Type are capable of detecting all $40$ sources clearly, while the other arrays (with false peaks or missing peaks) are not. Especially, our new array has higher peaks than the $(4r)$-Type-93 array, indicating the higher resolution. Since the numbers of uDOFs of these arrays are all higher than 40, our new array is more effective than the remaining arrays against strong mutual coupling.
								
								\begin{figure*}[htbp]
									\centering
									\begin{minipage}[t]{0.31\linewidth}
										\centering
										\includegraphics[width=1.8in]{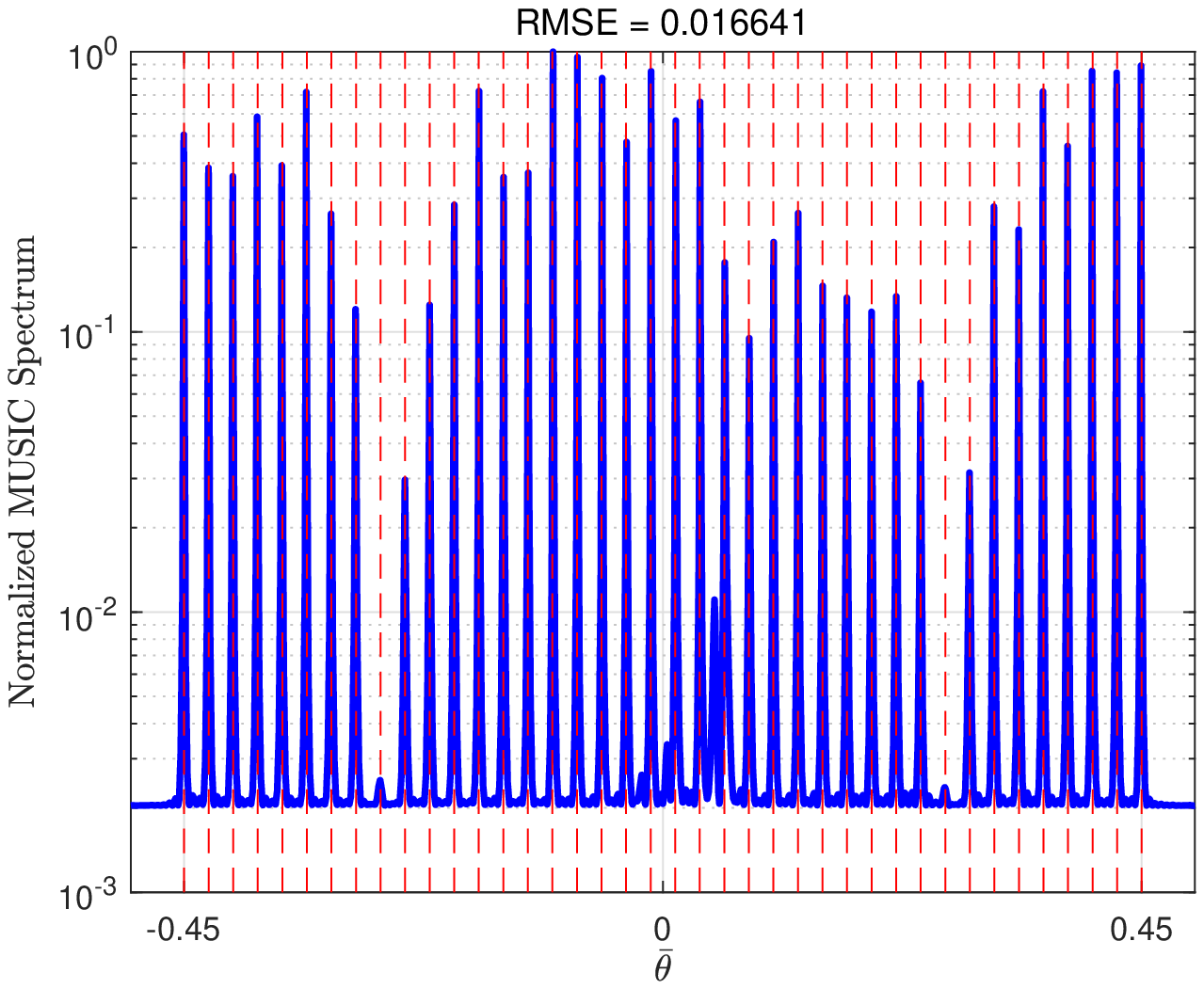}
										
										{(a)}
										\label{fig:side:b}
									\end{minipage}
									\begin{minipage}[t]{0.31\linewidth}
										\centering
										\includegraphics[width=1.8in]{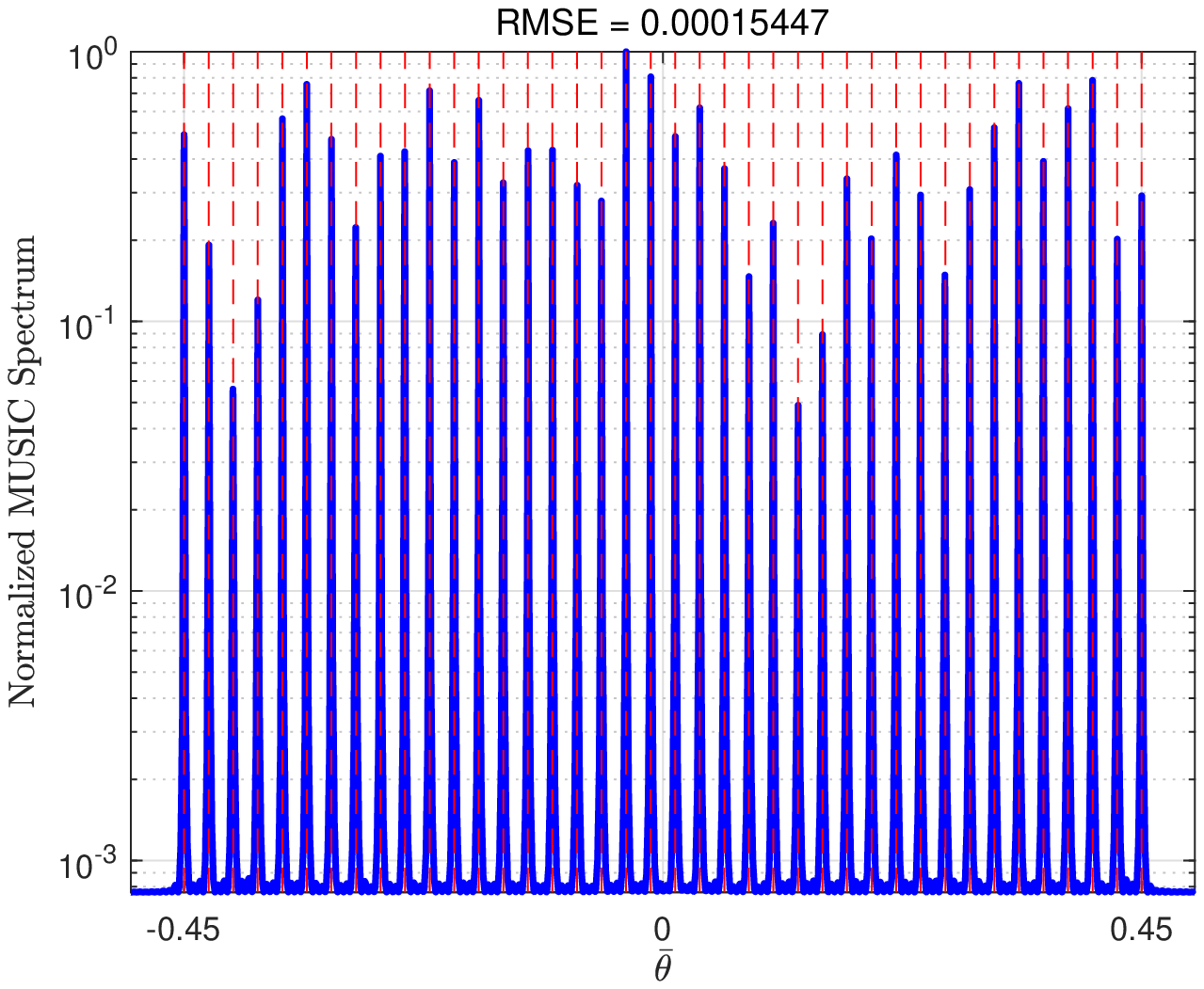}
										
										{(b)}
										\label{fig:side:c}
									\end{minipage}
									\begin{minipage}[t]{0.31\linewidth}
										\centering
										\includegraphics[width=1.8in]{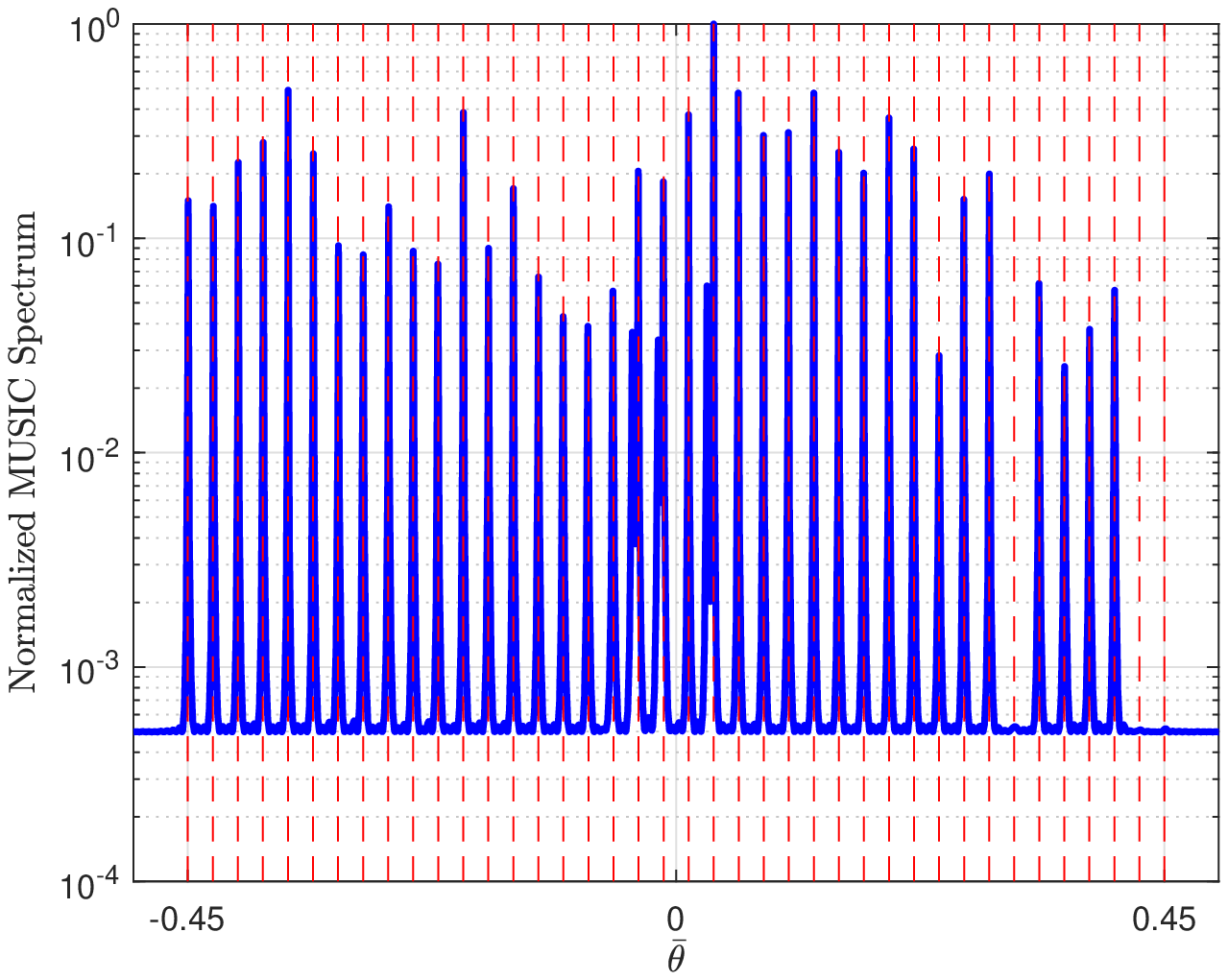}
										
										{(c)}
										\label{fig:side:b}
									\end{minipage}
									
									\begin{minipage}[t]{0.31\linewidth}
										\centering
										\includegraphics[width=1.8in]{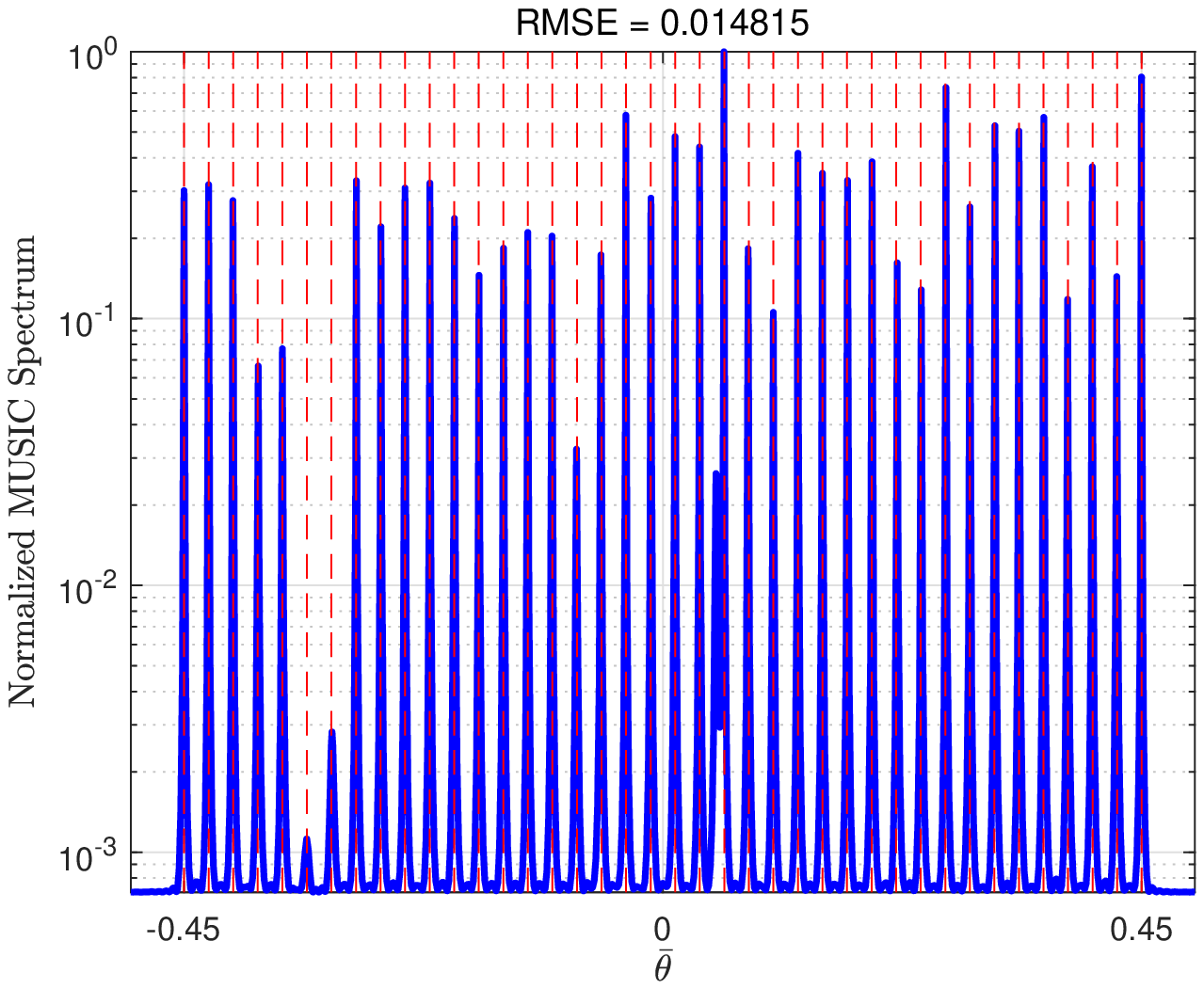}
										
										{(d)}
										\label{fig:side:c}
									\end{minipage}
									\begin{minipage}[t]{0.31\linewidth}
										\centering
										\includegraphics[width=1.8in]{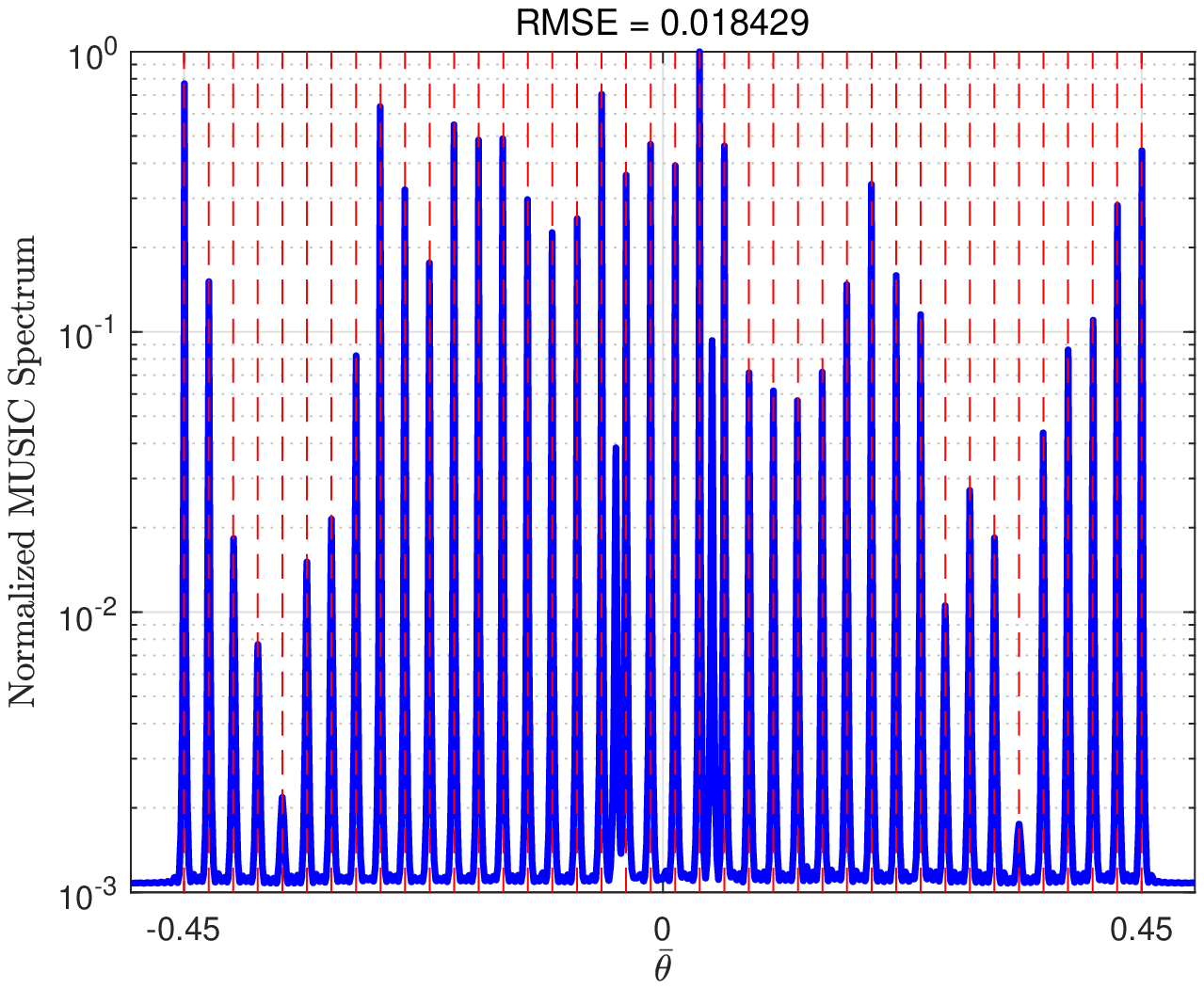}
										
										{(e)}
										\label{fig:side:a}
									\end{minipage}%
									\begin{minipage}[t]{0.31\linewidth}
										\centering
										\includegraphics[width=1.8in]{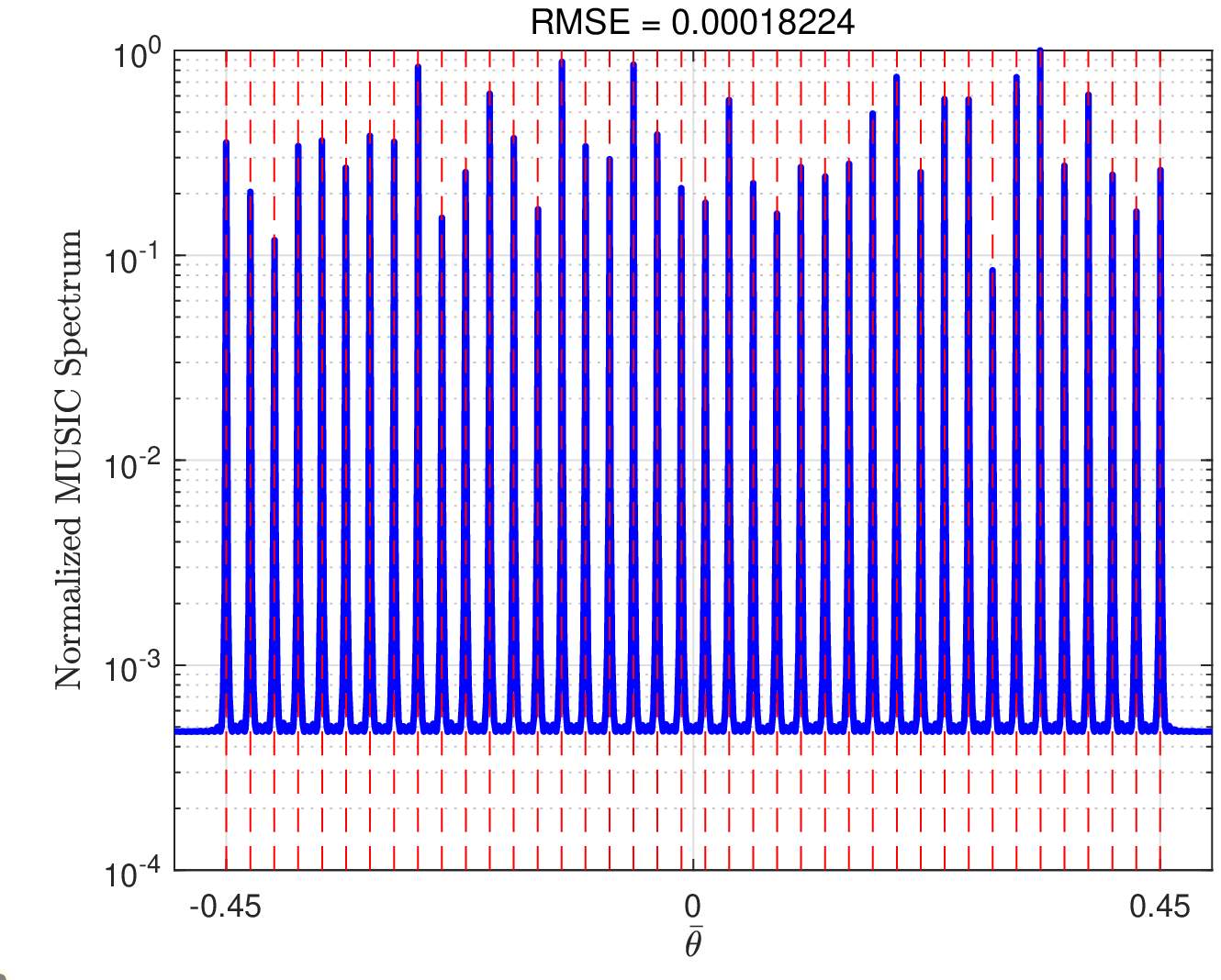}
										
										{(f)}
										\label{fig:side:a}
									\end{minipage}
									\caption{The MUSIC spectra for six kinds of 23-element arrays when $K=40$ sources are located at $\bar{\theta}_{k}=-0.45+0.9(k-1)/39,1\leq k\leq 40.$ (a) $(4r+3)$-Type-93. (b) $(4r)$-Type-93. (c) $(4r)$-Type-10. (d) SNA. (e) MISC. (f) New $(4r)$-Type 1.}
									\label{fig:MUSICspectracoupling}
								\end{figure*}
								
								\subsubsection{RMSE Performance}
								
								{The simulation in this part mainly focuses on RMSE performance under different conditions. The fixed parameter setting is ${\rm SNR}=0$ dB, $T=1000$ snapshots, and $K=35$ sources are incident on the array from $\bar{\theta}_k=-0.45+0.9(k-1)/(K-1),1\leq k\leq K$.}
								
								
								{Fig. $\ref{fig:RMSESNRcoupling}$ shows the relationship between the RMSE of the normalized DOA estimation and the SNR. Observe from this figure, we can see that
									our new array yields the best DOA estimation performance over the entire SNR range, and the performance of other arrays is obviously worse than our proposed array. This shows that our new array is more robust to coupling effect. Furthermore, we should note that although the $(4r+3)$-Type array has the largest uDOFs, its DOA estimation is relatively poor when mutual coupling is considered, because of the dense ULA in its configuration.}
								
								
								\begin{figure}[htbp]
									\centering
									\includegraphics[width=8.5cm]{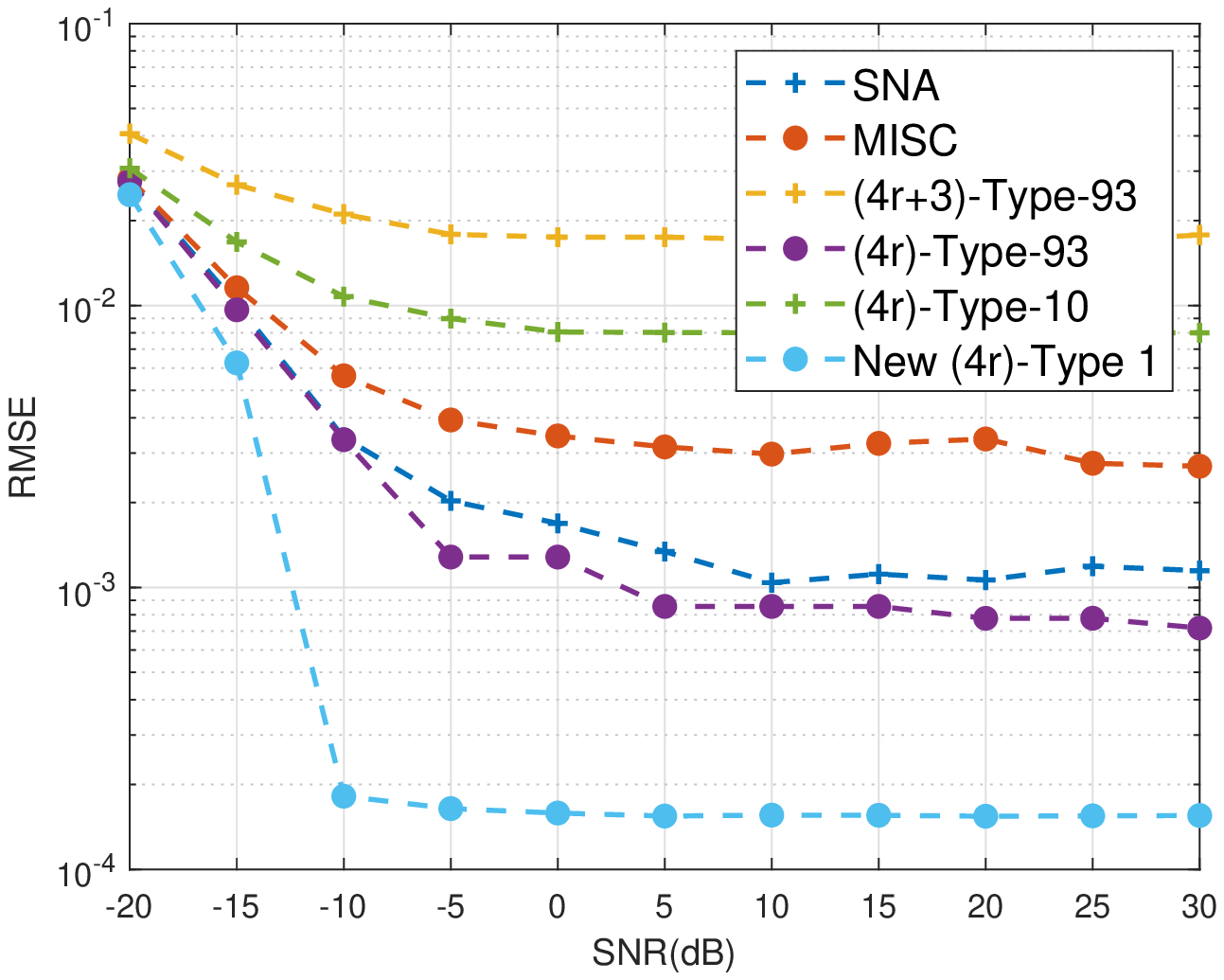}\\
									\caption{RMSE of normalized DOA estimates versus the SNR in the presence of mutual coupling.}
									\label{fig:RMSESNRcoupling}
								\end{figure}
								
								{Fig. $\ref{fig:RMSEsnapshotscoupling}$ illustrates the relationship between the RMSE of the normalized DOA estimation and the number of snapshots.} It is observed that, as the number of snapshots increases, the RMSE of all arrays are reduced rapidly until $T$ reaches about $900$, except for $(4r+3)$-Type and $(4r)$-Type-10 arrays. This is because the both arrays have higher coupling leakage compared to other arrays. Especially, our new array has the  lowest RMSE  than other arrays over the entire snapshots range.
								
								
								\begin{figure}[htbp]
									\centering
									\includegraphics[width=8.5cm]{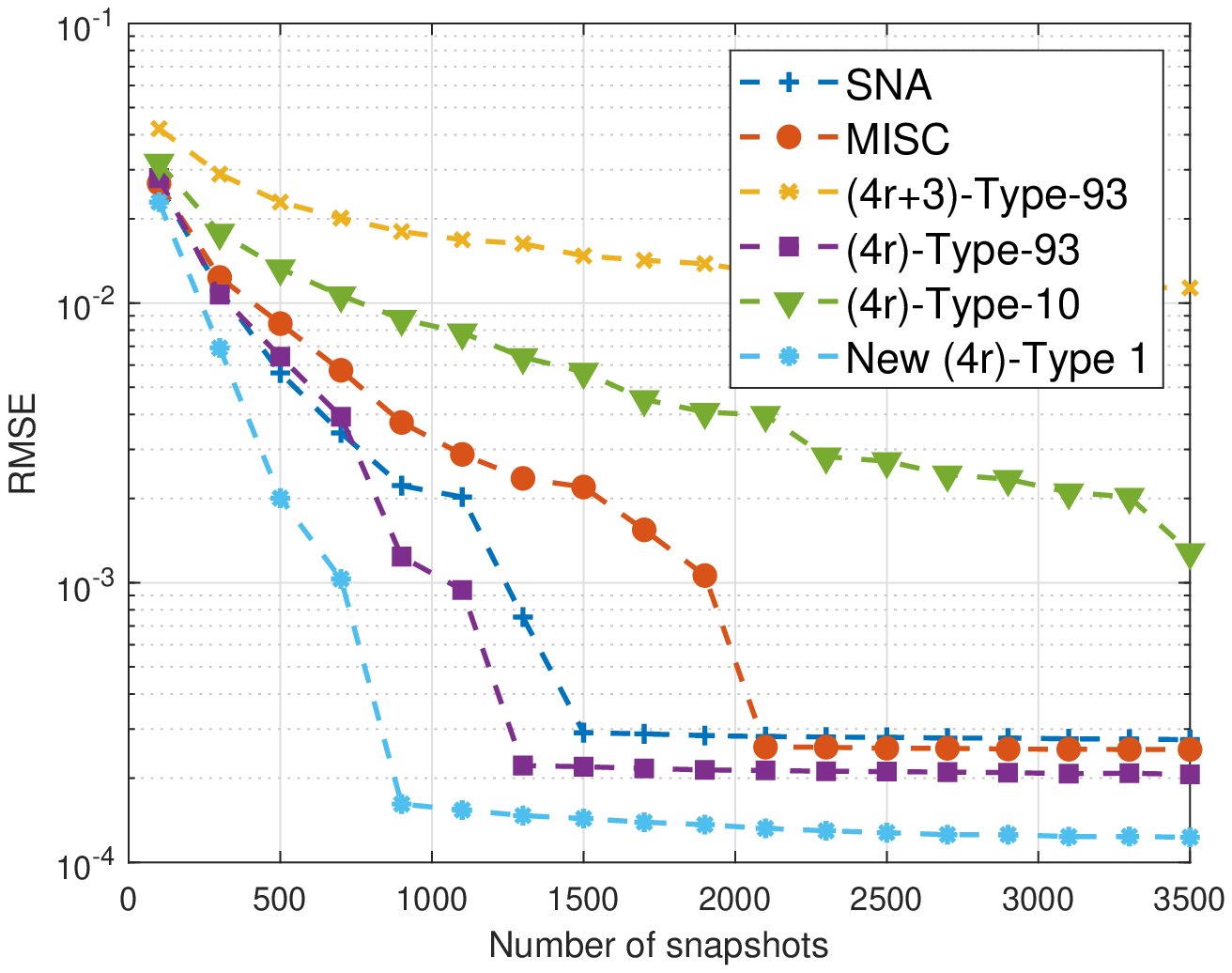}\\
									\caption{RMSE of normalized DOA estimates versus the number of  snapshots in the presence of mutual coupling.}
									\label{fig:RMSEsnapshotscoupling}
								\end{figure}

								{Fig. $\ref{fig:RMSEouhexishucoupling}$ depicts the RMSE curves versus $|c_{1}|$.}
								For any array geometry, the corresponding RMSE increases along with the increase of $|c_{1}|$. That is because a higher value of $|c_{1}|$ introduces more severe mutual coupling effect. When $|c_1|$ is less than 0.7, our new array yields the best performance while the $(4r+3)$-Type array achieves the worst performance, which is because the estimation accuracy is severely affected by uDOFs and mutual coupling effects together. 
								
								
								\begin{figure}[htbp]
									\centering
									\includegraphics[width=8.5cm]{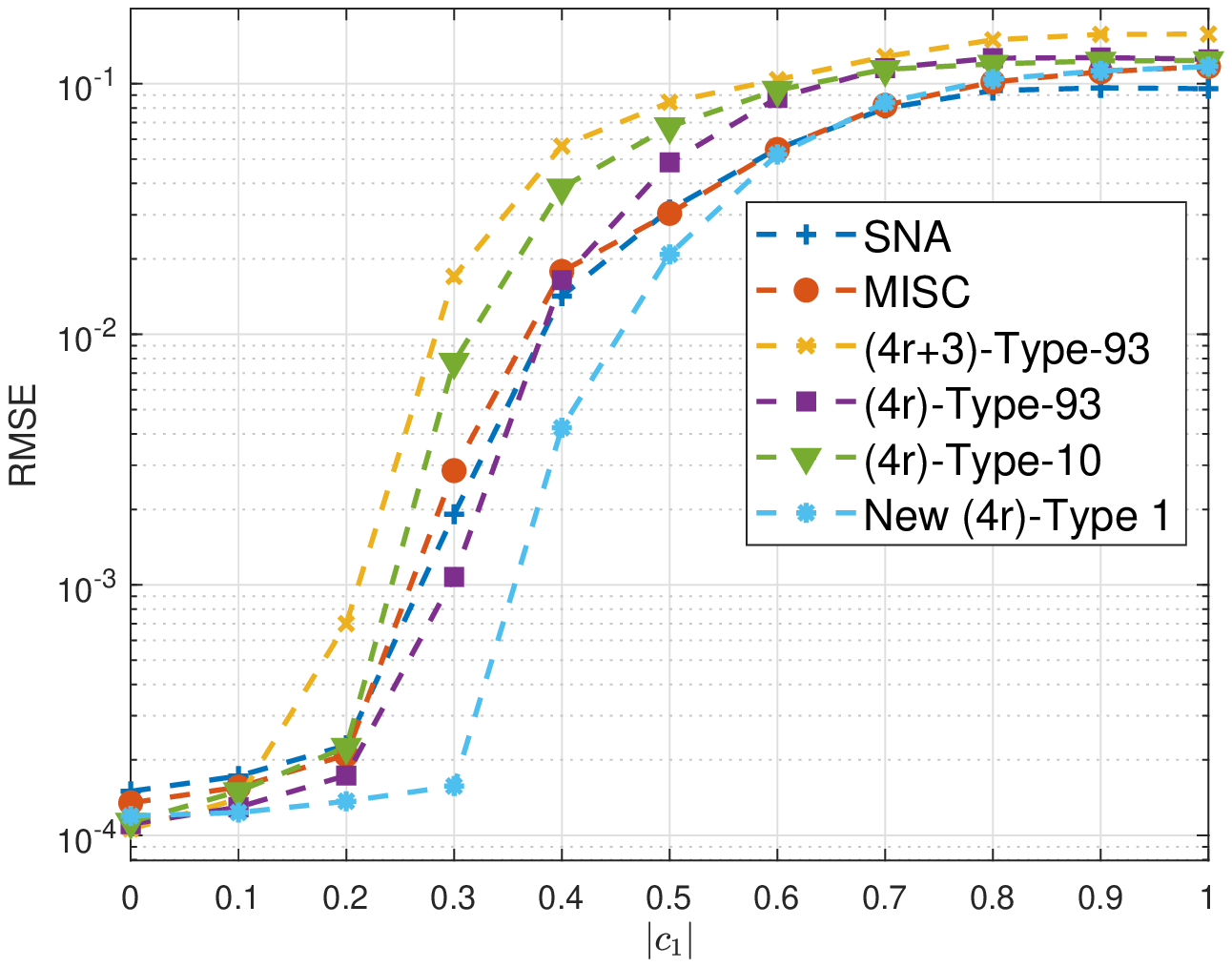}\\
									\caption{RMSE of normalized DOA estimates versus $|c_{1}|$.}
									\label{fig:RMSEouhexishucoupling}
								\end{figure}

								All the above estimations are performed for arrays with $K=23$ sensors. But we still want to know the RMSE of the normalized DOA estimates versus different  number $K$ of sources, which is depicted in
								Fig. $\ref{fig:RMSEzhenyuanshucoupling}$. When $K$ is small, the $(4r+3)$-Type-93 array has the minimum RMSE, but its RMSE curve increases rapidly as $K$ exceeds 20, which implies that the angle measurement accuracy decreases rapidly.
								{When the number of signal sources is less than $35$, the DOA estimation performance of our proposed array is at a stable level. Although it deteriorates when the number of signal sources is greater than 35, it still has better RMSE performance than some arrays.}
								
								
								\begin{figure}
									\centering
									\includegraphics[width=8.5cm]{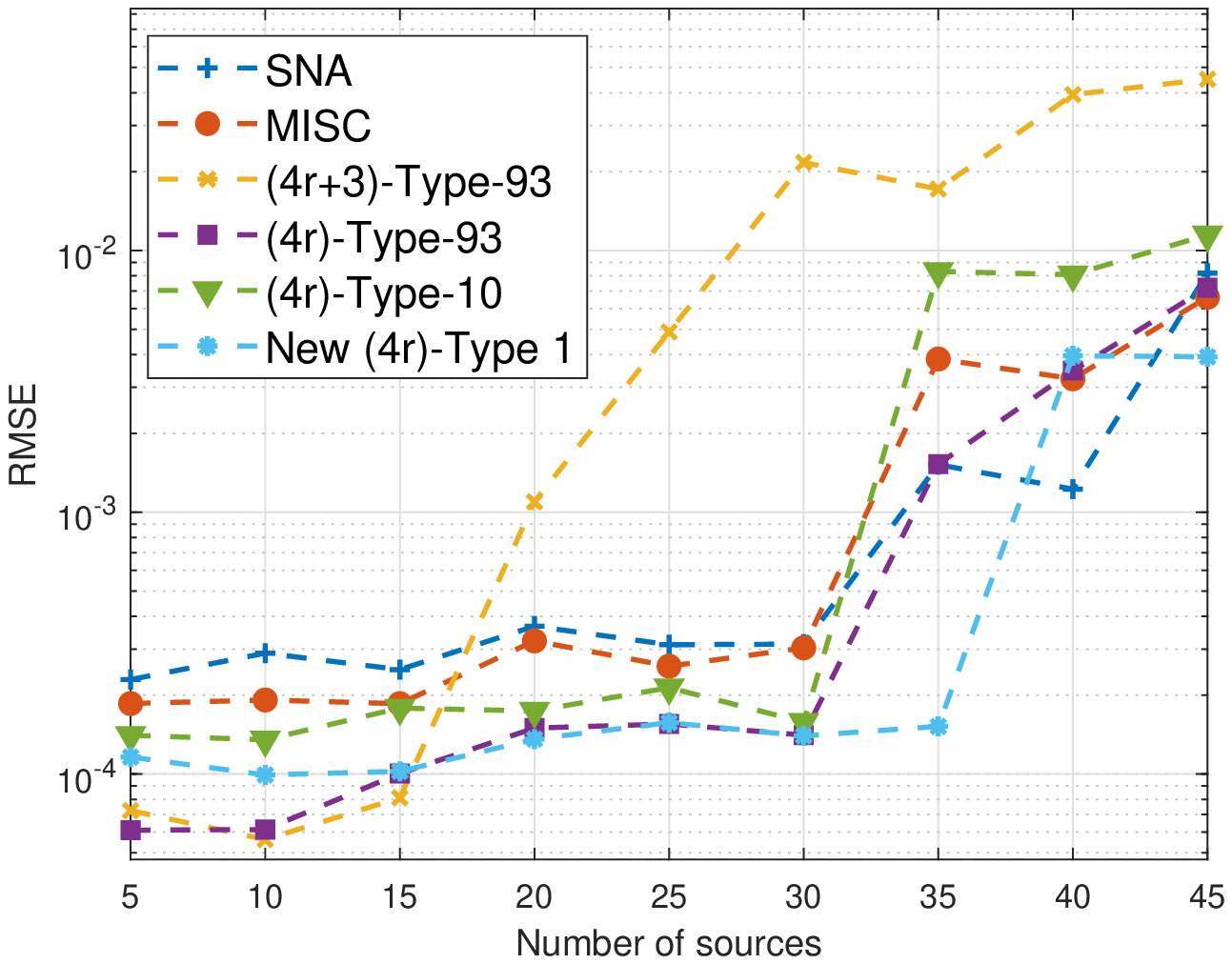}\\
									\caption{RMSE of normalized DOA estimates versus the number of sources in the presence of mutual coupling.}
									\label{fig:RMSEzhenyuanshucoupling}
								\end{figure}

								\section{Conclusions}\label{conclusion}
								In this paper, we have proposed some new array configurations which combine the advantages of both low redundancy ratio and low mutual coupling of known sensor arrays. Since all the known good LRAs were obtained under a common structure pattern with a fixed restriction, which have been extensively studied. In order to reach our goal, we gave a new restriction $s_1+s_2=c$ on the common array pattern, and fortunately obtained some new good LRAs. Especially we got a class of arrays achieving $R<1.5$ and $\omega(1)=1$, which can not be done for the existing LRA configurations. {Numerical results also verified the superiority of this new array to the existing LRAs.}
								
								\section*{Acknowledgments}
								The first and second authors want to thank  Prof. Cuiling Fan and Prof. Yang Yang for their interesting discussions and several exciting suggestions along this work.

								%
								
								\begin{appendices}
									\section{proof of lemma \ref{hole-free}}\label{appendix A}
									\begin{proof}
										{According to the definition of ${\cal D}_{New}$, we know that the elements in ${\cal D}_{New}$ appear symmetrically about 0. In order to prove that the array is hole-free, we only need to prove that ${\cal D}_{New}$ contains ${\mathbb F}=\{1,2,3,\ldots,L\}$.}
										
										{The following $N-1$ positive difference sets can be obtained by subtracting the elements in ${\mathbb S}_{New}=\{s_1,\ldots,s_N\}$}:
										\begin{align*}
											&T_{1}=  \{s_{2}-s_{1},s_{3}-s_{1},\ldots,s_{N}-s_{1}\}, \\
											&T_{2}=  \{s_{3}-s_{2},s_{4}-s_{2},\ldots,s_{N}-s_{2}\},\\
											&.....................\\
											&T_{N-2}=  \{s_{N-1}-s_{N-2},s_{N}-s_{N-2}\},\\
											&T_{N-1}=  \{s_{N}-s_{N-1}\}.
										\end{align*}
										Let $T_{0}=\bigcup\limits_{j=1}^{N-1}{T}_j$ and ${\mathbb F}=\bigcup\limits_{i=0}^{3r+k-1}{\mathbb F}_i$, where ${\mathbb F}_i$ is the consecutive lags of ${\mathbb F}$ defined as:
										\begin{equation*}
											{\mathbb F}_i=\{4ri+1,4ri+2,\ldots,4r(i+1)\},~{\rm for}~0\leq i\leq 3r+k-2,
										\end{equation*}
										\begin{equation*}
											{\mathbb F}_{3r+k-1}=\{4r(3r+k-1)+1,\ldots,4r(3r+k)-2\}.
										\end{equation*}
										{Next, we will show that any ${\mathbb F}_i$ is contained in the union of some $T_{j}$.}
										We will do this case by case:
										\begin{itemize}
											\item when $i=0,$ \begin{align*}{\mathbb F}_0\subset& T_{1}\cup\cdots\cup T_{r}\cup T_{2r+1}\cup T_{4r+k}\cup T_{5r+k-2}\\
												&\cup T_{5r+k-1}\cup T_{5r+k+1};
											\end{align*}
											\item when $1\leq i\leq \lceil\frac{r}{2}\rceil-2,$
											\begin{align*}{\mathbb F}_i\subset & T_{1}\cup\cdots\cup T_{r}\cup T_{2r+1-2i}\cup\cdots\cup T_{2r+1}\\
												&\cup T_{4r+k-i}\cup\cdots\cup T_{4r+k}\\
												&\cup T_{5r+k-2(i+1)}\cup\cdots\cup T_{5r+k-2i};\end{align*}
											\item when $i=\lceil\frac{r}{2}\rceil-1,$
											\begin{align*}{\mathbb F}_i\subset& T_{1}\cup\cdots\cup T_{r}\cup T_{r+2+\lceil\frac{r+1}{2}\rceil-\lceil\frac{r}{2}\rceil}\cup\cdots\cup T_{2r+1}\\
												&\cup T_{4r+k-i}\cup\cdots\cup T_{4r+k+\lceil\frac{r+1}{2}\rceil-\lceil\frac{r}{2}\rceil};\end{align*}
											\item when $\lceil\frac{r}{2}\rceil\leq i\leq 2r+k-2,$
											\[{\mathbb F}_i\subset T_1\cup\cdots\cup T_{2r+1}\cup T_{4r+k-i}\cup\cdots\cup T_{4r+k+\lceil\frac{r}{2}\rceil-i};\]
											\item when $2r+k-1\leq i\leq 2r+k-2+\lceil\frac{r+1}{2}\rceil,$
											\[{\mathbb F}_i\subset T_1\cup\cdots\cup T_{4r+k-i+\lceil\frac{r+1}{2}\rceil};\]
											\item when $i=2r+k-2+\lceil\frac{r+1}{2}\rceil+k'$ for $1\leq k'\leq \lceil\frac{r}{2}\rceil,$
											\[{\mathbb F}_i\subset T_1\cup\cdots\cup T_{2(r-k')+2+\lceil\frac{r}{2}\rceil-\lceil\frac{r+1}{2}\rceil}.\]
										\end{itemize}
										Therefore, the difference co-array of our new array is a  hole-free ULA, i.e., ${\cal D}_{New}=[-L,L]$ with $L=12r^2+4rk-2$.
										
										The proof of another method of item by item can be found at https://arxiv.org/abs/2208.05263.
										Furthermore, we have omitted the proofs for the (4r)-Type 2 arrays and the (4r)-Type 3 arrays since their proofs are similar to that of the (4r)-Type 1 array.
									\end{proof}
									Now we use an example to illustrate the procedure of proof in Lemma \ref{hole-free}.
									\begin{example}
										Let $N=18$, i.e., $r=3,k=0$. The structure of our new array with 18-sensor is
										\begin{align*}
											\{&0,1,3,6,11,16,21,33,45,57,69,81,88,95,97,102,\\
											& 104,106\}.
										\end{align*}
										Thus we obtain $17$ positive difference sets  as follows:
										\begin{align*}
											&T_{1}= \{1,3,6,11,16,21,33,45,57,69,81,88,95,97,\\
											&~~~~~~~102,104,106\},\\
											&T_{2}= \{2,5,10,15,20,32,44,56,68,80,87,94,96,101,\\
											&~~~~~~~103,105\},\\
											&T_{3}= \{3,8,13,18,30,42,54,66,78,85,92,94,99,101,103\},\\
											&T_{4}= \{5,10,15,27,39,51,63,75,82,89,91,96,98,100\},\\
											&T_{5}= \{5,10,22,34,46,58,70,77,84,86,91,93,95\},\\
											&T_{6}= \{5,17,29,41,53,65,72,79,81,86,88,90\},\\
											&T_{7}= \{12,24,36,48,60,67,74,76,81,83,85\},\\
											&T_{8}= \{12,24,36,48,55,62,64,69,71,73\},\\
											&T_{9}= \{12,24,36,43,50,52,57,59,61\},\\
											&T_{10}= \{12,24,31,38,40,45,47,49\},\\
											&T_{11}= \{12,19,26,28,33,35,37\},\\
											&T_{12}= \{7,14,16,21,23,25\},\\
											&T_{13}= \{7,9,14,16,18\},\\
											&T_{14}= \{2,7,9,11\},\\
											&T_{15}= \{5,7,9\},\\
											&T_{16}= \{2,4\},\\
											&T_{17}= \{2\}.
										\end{align*}
										We want to prove that ${\cal D}_{New}^+=[0,106]$. Define ${\mathbb F}_i=\{12i+1,12i+2,\ldots,12(i+1)\}~(0\leq i\leq7)$ and ${\mathbb F}_{8}=\{97,98,\ldots,106\}$. It is easy to obtain
										\begin{align*}
											&\{1,2,\ldots,12\}\subset T_{1}\cup T_{2}\cup T_{3}\cup T_{7}\cup T_{12}\cup T_{13}\cup T_{16},\\
											&\{13,14,\ldots,24\}\subset T_{1}\cup T_{2}\cup T_{3}\cup T_{5}\cup T_{6}\cup T_{7}\\
											&~~~~~~~~~~~~~~~~~~~~~~\cup T_{11}\cup T_{12},\\
											&\{25,26,\ldots,36\}\subset T_{1}\cup T_{2}\cup\cdots\cup T_{7}\cup T_{10}\cup T_{11}\cup T_{12},\\
											&\{37,38,\ldots,48\}\subset T_{1}\cup T_{2}\cup\cdots\cup T_{7}\cup T_{9}\cup T_{10}\cup T_{11},\\
											&\{49,50,\ldots,60\}\subset T_{1}\cup T_{2}\cup\cdots\cup T_{7}\cup T_{8}\cup T_{8}\cup T_{10},\\
											&\{61,62,\ldots,72\}\subset T_{1}\cup T_{2}\cup\cdots\cup T_{7}\cup T_{8}\cup T_{9},\\
											&\{73,74,\ldots,84\}\subset T_{1}\cup T_{2}\cup\cdots\cup T_{7}\cup T_{8},\\
											&\{85,86,\ldots,96\}\subset T_{1}\cup T_{2}\cup\cdots\cup T_{6},\\
											&\{97,98,\ldots,106\}\subset T_{1}\cup T_{2}\cup\cdots\cup T_{4};
										\end{align*}
										which coincide with the process of proof in Lemma \ref{hole-free}. Thus we obtain that the new array is hole-free.
									\end{example}

									
								\end{appendices}


\begin{thebibliography}{1}
									\bibliographystyle{IEEEtran}
									
									\bibitem{1984Haykin} S. Haykin, \textit{Array Signal Processing}. Englewood Cliffs, NJ, USA: Prentice-Hall, 1984.
									
									\bibitem{2001Sklnik}
									M. I. Skolnik, \textit{Introduction to Radar Systems}, 3rd ed. New York, NY, USA: McGraw Hill, 2001.
									
									\bibitem{2016Balanis} C. A. Balanis, {\it Antenna Theory: Analysis and Design}, 4th ed. New York, NY, USA: Wiley, 2016.
									
									\bibitem{1996Krim}H. Krim and M. Viberg, ``Two decades of array signal processing research," \textit{IEEE Signal Process. Mag.}, vol. 13, no. 4, pp. 67-94, Jul. 1996.
									
									\bibitem{2006Peng}R. Peng and M. L. Sichitiu, ``Angle of Arrival Localization for Wireless Sensor Networks," \textit{Proc. 3rd Annu., IEEE Commun. Soc. Sensor Ad hoc Commun. Netw.,} Reston, VA, USA, Sep. 2006, pp. 374-382.
									
									\bibitem{1989Roy} R. Roy and T. Kailath, ``ESPRIT-estimation of signal parameters via rotational invariance techniques," \textit{IEEE Trans. Acoust., Speech, Signal Process.}, vol. 37, no. 7, pp. 984-995, Jul. 1989.
									
									\bibitem{1986Schmidt} R. Schmidt, ``Multiple emitter location and signal parameter estimation," \textit{IEEE Trans. Antennas Propag.}, vol. 34, no. 3, pp. 276-280, Mar. 1986.
									
									\bibitem{2010Pal}P. Pal and P. P. Vaidyanathan, ``Nested arrays: A novel approach to array processing with enhanced degrees of freedom," \textit{IEEE Trans. Signal Process.}, vol. 58, no. 8, pp. 4167-4181, Aug. 2010.
									
									\bibitem{2011Pal}P. Pal and P. P. Vaidyanathan, ``Coprime sampling and the music algorithm," in \textit{Proc. Digit. Signal Process. Signal Process. Educ. Meeting}, Sedona, AZ, USA, Jan. 2011, pp. 289-294.
									
									\bibitem{1990Hoctor} R. T. Hoctor and S. A. Kassam, ``The unifying role of the coarray in aperture synthesis for coherent and incoherent imaging," {\it Proc. IEEE}, vol. 78, no. 4, pp. 735-752, 1990.
									
									\bibitem{2011Vaid}P. P. Vaidyanathan and P. Pal, ``Sparse sensing with co-prime samplers and arrays," \textit{IEEE Trans. Signal Process.}, vol. 59, no. 2, pp. 573-586, Feb. 2011.
									
									\bibitem{2017BouDaher} E. BouDaher, F. Ahmad, M. G. Amin and A. Hoorfar, ``Mutual coupling effect and compensation in non-uniform arrays for direction-of-arrival estimation," {\it Digit. Signal Process.}, vol. 61, pp. 3-14, Feb. 2017.
									
									\bibitem{1968Moffet}A. Moffet, ``Minimum-redundancy linear arrays," \textit{IEEE Trans. Antennas Propag.}, vol. 16, no. 2, pp. 172-175, Mar. 1968.
									
									\bibitem{2022Peng}Z. Peng, Y. Ding, and W. Wang, ``Coprime nested arraysfor DOA estimation: exploiting the nesting property of coprime array," \textit{IEEE Signal Process. Lett.}, vol. 29, pp. 444-448, Jan. 2022.
									
									\bibitem{2015Qin}S. Qin, Y. D. Zhang, and M. G. Amin, ``Generalized coprime array configurations for direction-of-arrival estimation," \textit{IEEE Trans. Signal Process.}, vol. 63, no. 6, pp. 1377-1390, Mar. 2015.
									
									\bibitem{2019Raza}A. Raza, W. Liu, and Q. Shen, ``Thinned coprime array for second-order difference co-array generation with reduced mutual coupling," \textit{IEEE Trans. Signal Process.}, vol. 67, no. 8, pp. 2052-2065, Apr. 2019.
									\bibitem{2022ShiJ}J. Shi, F. Wen, Y. Liu, Z. Liu, and P. Hu, ``Enhanced and generalized coprime array for direction of arrival estimation," \textit{IEEE Trans. Aerosp. Electron. Syst.}, vol. 59, no. 2, pp. 1-12, Aug. 2022.
									
									\bibitem{2021Shiwanlu}W. Shi, Sergiy A. Vorobyov, and Y. Li, ``ULA fitting for sparse array design," {\it IEEE Trans. Signal Process.}, vol. 69, pp. 6431-6447, Nov. 2021.
									
									\bibitem{2022Shi}W. Shi, Y. Li, and R. C. de Lamare, ``Novel sparse array design based on the maximum inter-element spacing criterion," \textit{IEEE Signal Process. Lett.}, pp. 1-5, Jul. 2022.
									
									\bibitem{2013Zhang}Y. D. Zhang, M. G. Amin, and B. Himed, ``Sparsity-based DOA estimation using co-prime arrays," in \textit{Proc. IEEE ICASSP, Vancouver}, Canada, May 2013, pp. 3967-3971.
									
									\bibitem{2019Zhengw}W. Zheng, X. Zhang, Y. Wang, M. Zhou, and Q. Wu, ``Extended coprime array configuration generating large-scale antenna co-array in massive MIMO system," \textit{IEEE Trans. Veh. Technol}., vol. 68, no. 8, pp. 7841-7853, Aug. 2019.
									
									\bibitem{2020Zheng} W. Zheng, X. Zhang, Y. Wang, and B. Champagne, ``Padded coprime arrays for improved DOA estimation: exploiting hole representation and filling strategies," \textit{IEEE Trans. Signal Process.}" vol. 68, pp. 4597-4611, Jul. 2020.
									
									\bibitem{1977Bloom}G. S. Bloom and S. W. Golomb, ``Applications of numbered undirected graphs,"  \textit{Proc. IEEE}, vol. 65, no. 4, pp. 562-570, Apr. 1977.
									
									\bibitem{1986Ver}E. Vertatschitsch and S. Haykin, ``Nonredundant arrays," \textit{Proc. IEEE}, vol. 74, no. 1, pp. 217-217, Jan. 1986.
									
									\bibitem{1966Bracewell} R. N. Bracewell, ``Optimal spacings for radio telescopes with unfilled apertures," {\it Natl. Acad. Sci. Natl. Res. Council. Publ.}, vol. 1408, pp. 243-244, 1966.
									
									\bibitem{1956Leech} J. Leech, ``On the representation of $1,2,\ldots,n$ by differences," {\it J. London Math. Soc.}, vol. 31, pp. 160-169, 1956.
									
									\bibitem{2020Schwartau}F. Schwartau, Y. Schr\"{o}der, L. Wolf, and J. Schoebel, ``Large minimum redundancy linear arrays: Systematic search of perfect and optimal rulers exploiting parallel processing," in \textit{IEEE Open Journal of Antennas and Propagation}, vol. 2, no. 9, pp. 79-85, Dec. 2020.
									
									\bibitem{1980Ishiguro} M. Ishiguro, ``Minimum redundancy linear arrays for a large number of antennas," {\it Radio Sci.}, vol. 15, pp. 1163-1170, 1980.
									
									\bibitem{1957Hase} C. B. Haselgrove and J. Leech, ``Note on restricted difference bases," {\it J. London Math. Soc.}, vol. 31, pp. 160-169, 1956.
									
									\bibitem{1971Miller} J. Miller, ``Difference bases, three problems in additive numbre theory," in {\it Computers in Numbre Theory}. London: Academic Press, 1971, pp. 229-322.
									
									\bibitem{1963Wichmann} B. Wichmann, ``A note on restricted difference bases," {\it J. Lon. Math. Soc.}, vol. 38, pp. 465-466, 1963.
									
									\bibitem{1967Wild} P. Wild, ``Difference basis systems," {\it Discrete Math.}, vol. 63, pp. 81-90, 1967.
									
									\bibitem{2002Camps} A. Camps, A. Cardama, and D. Infantes, ``Synthesis of large low-redundancy linear arrays," {\it IEEE Trans. Antennas Propag.}, vol. 49, no. 12, pp. 1881-1883, Dec. 2002.
									
									\bibitem{2010Dong} J. Dong, Q. Li, R. Jin, Y. Zhu, Q. Huang, and L. Gui, ``A method for seeking low-redundancy large linear arrays in aperture synthesis microwave radiometers," {\it IEEE Trans. Antennas Propag.}, vol. 58, no. 6, pp. 1913-1920, Jun. 2010.
									
									\bibitem{1993Linebarger} D. A. Linebarger, I. H. Sudborough, and I. G. Tollis, ``Difference bases and sparse sensor arrays," {\it IEEE. Trans. Inform. Theory.}, vol. 39, no. 2, pp. 716-721, Mar. 1993.
									
									\bibitem{2016Yang}M. Yang, L. Sun, X. Yuan, and B. Chen, ``Improved nested array with hole-free DCA and more degrees of freedom," \textit{Electron. Lett.}, vol. 52, no. 25, pp. 2068-2070, Dec. 2016.
									
									\bibitem{2016Liu-1}C. Liu and P. P. Vaidyanathan, ``Super nested arrays: linear sparse arrays with reduced mutual coupling-part I: fundamentals," \textit{IEEE Trans. Signal Process.}, vol. 64, no. 15, pp. 3997-4012, Aug. 2016.
									
									\bibitem{2016Liu-2}C. Liu and P. P. Vaidyanathan, ``Super nested arrays: linear sparse arrays with reduced mutual coupling-part II: high-order extensions," \textit{IEEE Trans. Signal Process.}, vol. 64, no. 16, pp. 4203-4217, Aug. 2016.
									
									\bibitem{2017Liu}J. Liu, Y. Zhang, Y. Lu, S. Ren, and S. Cao, ``Augmented nested arrays with enhanced DOF and reduced mutual coupling," \textit{IEEE Trans. Signal Process.}, vol. 65, no. 21, pp. 5549-5563, Nov. 2017.
									
									\bibitem{2023Wandale}S. Wandale and K. Ichige, ``A generalized extended nested array design via maximum inter-element spacing criterion," \textit{IEEE Signal Process. Lett.}, vol. 30, pp. 31-35, Jan. 2023.
									
									\bibitem{2019Zheng}Z. Zheng, W. Wang, Y. Kong, and Y. D. Zhang, ``MISC array: A new sparse array design achieving increased degrees of freedom and reduced mutual coupling effect," \textit{IEEE Trans. Signal Process.}, vol. 67, no. 7, pp. 1728-1741, Apr. 2019.
									
									\bibitem{2015Liu}C. Liu and P. P. Vaidyanathan, ``Remarks on the spatial smoothing step in coarray MUSIC," \textit{IEEE Signal Process. Lett.}, vol. 22, no. 9, pp. 1438-1442, Sept. 2015.
									
									\bibitem{1991Fried}B. Friedlander and A. J. Weiss, ``Direction finding in the presence of mutual coupling," \textit{IEEE Trans. Antennas Propag.}, vol. 39, no. 3, pp. 273-284, Mar. 1991.
									
									\bibitem{2012Liao} B. Liao, Z. G. Zhang, and S. C. Chan, ``DOA estimation and tracking of ULAs with mutual coupling," {\it IEEE Trans. Aerosp. Electron. Syst.}, vol. 48, no. 1, pp. 891-905, Jan. 2012.
									
									\bibitem{2007Sellone} F. Sellone and A. Serra, ``A novel online mutual coupling compensation algorithm for uniform and linear arrays," {\it IEEE Trans. Signal Process.}, vol. 55, no. 2, pp. 560-573, Feb, 2007.
									
									\bibitem{2000Svantesson}T. Svantesson,``Mutual coupling compensation using subspace fitting," in \textit{Proc. IEEE sensor Array Multichannel Signal Process. Workshop}, Cambridge, MA, USA, Mar. 2000, pp. 494-498.
									
									\bibitem{2009Ye}Z. Ye, J. Dai, X. Xu, and X. Wu, ``DOA estimation for uniform linear array with mutual coupling," \textit{IEEE Trans. Aerosp. Electron. Syst.}, vol. 45, no. 1, pp. 280-288, Jan. 2009.
									
									\bibitem{2002Van} H. L. Van Trees, \textit{Optimum Array Processing: Part IV of Detection, Estimation, and Modulation Theory}. New York: Wiley Intersci., 2004.
									
									
									
									
									
									
									
									
									
									
									
									
									
									
									
									
									
									
									
									
									
									
									
									
									
									
									
									
									
									
									
									
									
									
									
									
									
									
									
									
									
									
									
									
									
									
									
									
									
									
									
									
								\end{thebibliography}
							\end{document}